\newcommand{\R}{\Rbb}
\renewcommand{\vec}[1]{\ensuremath{\mathbf{#1}}}
\newcommand{\vecs}[1]{\ensuremath{\mathbf{\boldsymbol{#1}}}}
\newcommand{\mat}[1]{\ensuremath{\mathbf{#1}}}
\newcommand{\mats}[1]{\ensuremath{\mathbf{\boldsymbol{#1}}}}
\newcommand{\ten}[1]{\mat{\ensuremath{\boldsymbol{\mathcal{#1}}}}}
\xdef\csname \x bb\endcsname{\noexpand\ensuremath{\noexpand\mathbb{\x}}}
\xdef\csname \x cal\endcsname{\noexpand\ensuremath{\noexpand\mathcal{\x}}}
\xdef\csname \x t\endcsname{\noexpand\ensuremath{\noexpand\ten{\x}}}
\xdef\csname \x b\endcsname{\noexpand\ensuremath{\noexpand\mat{\x}}}
\xdef\csname \x b\endcsname{\noexpand\ensuremath{\noexpand\vec{\x}}}
\xdef\csname \x b\endcsname{\noexpand\ensuremath{\noexpand\vec{\x}}}
\newcommand{\x}{\vec{x}}
\newcommand{\nstates}{n}
\newcommand{\szerosymbol}{\alpha}
\newcommand{\szero}{\vecs{\szerosymbol}}
\newcommand{\sinfsymbol}{\omega}
\newcommand{\sinf}{\vecs{\sinfsymbol}}
\DeclareDocumentCommand{\wa}{  O{A} O{\szero} O{\sinf} }%
{(#2,\{\mat{#1}^\sigma\}_{\sigma\in\Sigma},#3)}
\DeclareDocumentCommand{\waR}{  O{A} O{\Rbb^\nstates} O{\szero} O{\sinf} }%
{(#2,#3,\{\mat{#1}^\sigma\}_{\sigma\in\Sigma},#4)}
\newcommand{\vvsinfsymbol}{\Omega}
\newcommand{\vvsinf}{\mats{\vvsinfsymbol}}
\DeclareDocumentCommand{\vvwa}{  O{A} O{\szero} O{\vvsinf} }%
{(#2,\{\mat{#1}^\sigma\}_{\sigma\in\Sigma},#3)}
\newcommand{\tzerosymbol}{\alpha}
\newcommand{\tzero}{\vecs{\tzerosymbol}}
\newcommand{\tinfsymbol}{\omega}
\newcommand{\tinf}{\vecs{\tinfsymbol}}
\DeclareDocumentCommand{\wta}{ O{T} O{\Rbb^\nstates} O{\tzero} O{\tinf} O{\Fcal}}%
{(#2,#3,\{\ten{#1}^g\}_{g\in #5_{\geq 1}},\{#4^\sigma\}_{\sigma\in #5_0})}
\DeclareDocumentCommand{\trees}{g}{\IfNoValueTF{#1}{\mathfrak{T}}{\mathfrak{T}_{#1}}}
\DeclareDocumentCommand{\contexts}{g}{\IfNoValueTF{#1}{\mathfrak{C}}{\mathfrak{C}_{#1}}}
\newcommand{\depth}[1]{\mathrm{depth}(#1)}
\newcommand{\gwmprod}{\diamond}
\DeclareDocumentCommand{\gwm}{  O{M} O{\Fbb^\nstates}}{(#2, \{\ten{#1}^x\}_{x\in\Sigma})}
\DeclareDocumentCommand{\gwmcirc}{  O{M} O{\Rbb^\nstates}}{(#2, \{\mat{#1}^\sigma\}_{\sigma\in\Sigma})}
\DeclareDocumentCommand{\dgwm}{ O{M} O{\Fbb^\nstates}}{(#2, \{\ten{#1}^x\}_{x\in\Sigma},\gwmprod)}
\renewcommand{\widehat}{\hat}
\xdef\csname \x bb\endcsname{\noexpand\ensuremath{\noexpand\mathbb{\x}}}
\xdef\csname \x cal\endcsname{\noexpand\ensuremath{\noexpand\mathcal{\x}}}
\xdef\csname \x ten\endcsname{\noexpand\ensuremath{\noexpand\ten{\x}}}
\xdef\csname \x mat\endcsname{\noexpand\ensuremath{\noexpand\mat{\x}}}
\xdef\csname \x vec\endcsname{\noexpand\ensuremath{\noexpand\mat{\x}}}
\theoremstyle{definition}
\newtheorem{definition}{Definition}[section]
\theoremstyle{proposition}
\newtheorem{proposition}{Proposition}[section]
\theoremstyle{corollary}
\theoremstyle{lemma}
\newtheorem{lemma}{Lemma}[section]
\definecolor{sbblue}{HTML}{023a75}
\newcommand{\MH}[1]{\textcolor{blue}{MH: #1}}
\newenvironment{sketch}[1][Sketch of proof]{\par\noindent\textit{#1.} \hspace*{1em}}{\hfill$\square$\par}
\newcommand{\SEND}{\mathsf{SEND}}
\newcommand{\RECEIVE}{\mathsf{RECEIVE}}
\title{Benefits and Limitations of Communication in Multi-Agent Reasoning}
 \def\mystrut{\rule{0pt}{1.1\normalbaselineskip}}
 \author{
 \begin{tabular}{@{}l}
 Michael Rizvi-Martel$^{1}$\thanks{Corresponding author. Contact: \href{mailto:michael.rizvi-martel@mila.quebec}{\texttt{\footnotesize michael.rizvi-martel@mila.quebec}}}\quad Satwik Bhattamishra$^{2}$\quad Neil
   Rathi$^{3}$ \mystrut \\ Guillaume Rabusseau$^{1}$ \quad 
 Michael Hahn$^{4}$\mystrut \\ 
 \end{tabular}\\ [1.4em]
 $^1$Mila \& Universit{\'e} de Montr{\'e}al \quad $^2$University of Oxford\quad $^3$Stanford University\\
 $^4$Saarland University\\
 }
\begin{document}

\maketitle

\begin{abstract}
Chain-of-thought prompting has popularized step-by-step reasoning in large language models, yet model performance still degrades as problem complexity and context length grow.
By decomposing difficult tasks with long contexts into shorter, manageable ones, recent multi-agent paradigms offer a promising near-term solution to this problem.
However, the fundamental capacities of such systems are poorly understood.
In this work, we propose a theoretical framework to analyze the expressivity of multi-agent systems. 
We apply our framework to three algorithmic families: state tracking, recall, and $k$-hop reasoning.
We derive bounds on (i) the number of agents required to solve the task exactly, (ii) the quantity and structure of inter-agent communication, and (iii) the achievable speedups as problem size and context scale. Our results identify regimes where communication is provably beneficial, delineate tradeoffs between agent count and bandwidth, and expose intrinsic limitations when either resource is constrained.
We complement our theoretical analysis with a set of experiments on pretrained LLMs using controlled synthetic benchmarks. Empirical outcomes confirm the tradeoffs between key quantities predicted by our theory.
Collectively, our analysis offers principled guidance for designing scalable multi-agent reasoning systems.
\end{abstract}

\section{Introduction}
Chain-of-thought (CoT) prompting has become the de facto standard for tackling complex reasoning problems. By encouraging models to "think step-by-step", CoT significantly improves performance on tasks requiring mathematical and logical reasoning~\citep{wei2022chain}. Building on this, recent approaches view reasoning as a structured traversal over thoughts, exploring methods such as self-consistency~\citep{wang2022self}, tree-of-thoughts~\citep{yao2023tree}, and stream-of-search~\citep{gandhi2024stream}. In parallel, post-training for large reasoning models (LRMs) increasingly relies on reinforcement learning over generated CoTs~\citep{o3_system_card_2025, guo2025deepseek}.

Despite these advances, several limitations have emerged. The reasoning abilities of LRMs degrade as the complexity of problem instances increases or as the context length grows~\citep{shojaee2025illusion, sun2025omega}. To address this, 
new approaches based on multi-agent collaboration~\cite[e.g.,][]{zhang2024chain,tran2025multi,xiao2025long,hsu2025group} 
decompose complex tasks into simpler subproblems, coordinating multiple agents to achieve stronger performance. These frameworks offer promising near-term solutions, yet the theoretical underpinnings of their expressive capacity remain poorly understood. While the expressive power of Transformers with CoT prompting has been studied in depth~\citep{merrill2023expressive,amiri2025lower}, little is known about the fundamental limits and tradeoffs of communication and resource allocation in multi-agent reasoning schemes. This gap motivates the central question of our work:

\textit{From an algorithmic perspective, are there tasks that provably benefit from communication and dynamic resource allocation in multi-agent reasoning systems?}

We address this question by proposing a theoretical framework for analyzing the expressivity of collaborative multi-agent reasoning strategies. Our analysis applies to settings where both problem complexity and context size scale, and focuses on three representative algorithmic families: state tracking, recall, and $k$-hop reasoning. For each task family, we establish bounds on the number of agents and the quantity of communication required, and we characterize the tradeoffs between these quantities. Finally, we complement our theoretical results with empirical validation using pretrained large language models.
Our contributions are as follows:
\begin{itemize}[noitemsep]
    \item We propose a formalization of multi-agent reasoning systems grounded in the rich literature on Transformer expressivity.
    \item For three distinct families of algorithmic tasks---recall, state tracking and $k$-hop reasoning---we derive bounds on the number of agents and the communication required, highlighting the tradeoffs between these resources. These tasks capture key aspects of practical reasoning problems, making the results broadly applicable.
    \item We provide empirical validation of our theoretical insights by implementing the optimal communication protocols given by theory. Our analysis shows the performance in terms of accuracy, communication and token usage closely aligns with theoretical predictions.
\end{itemize}
Our work focuses on Transformer-based multi-agent systems which partition an input of size $N$ equally between $w$ agents, an abstraction of many settings where multiple agents cooperate by taking responsibility for different parts of the input, such as different document chunks in long-context summarization or question answering, corpus shards or knowledge-graph subgraphs in multi-agent RAG, web pages or site sections in browser-style agents, and map–reduce pipelines where workers process disjoint partitions before a coordinator aggregates the partial results \citep[e.g.,][]{zhou-etal-2025-llmxmapreduce,chang-etal-2025-main,arXiv:2402.05359,arXiv:2408.02907,arXiv:2412.05838,yang2025splitrag,xiao2025long,liu2025hmraghierarchicalmultiagentmultimodal,arXiv:2506.16411}.

Our results reveal three distinct regimes for multi-agent tasks, each instantiated by natural tasks of broad relevance (Table~\ref{table:theory-results}). 
First, some tasks require almost no communication overhead even when the input is partitioned between agents, such as key-query retrieval. 
Second, some tasks not only allow partitioning but also benefit from it, achieving reduced wall-clock time compared to a single-agent setup; state tracking is a prime example. Finally, some tasks can be solved through partitioning but require significant communication among agents, such as reasoning over multiple hops.
\begin{table}[h]
\centering
\begin{tabular}{lcccc}
\toprule
& Size & Depth & Communication \\
\midrule
Associative recall & $\Theta(w)$ & $\Theta(1)$ & $\Theta(1)$ \\ \hline
State tracking & $\Theta(N)$ & $\mathcal{O}(\frac{N}{w} + \log w)$ & $\Theta(w)$ \\ \hline
$k$-hop reasoning & $\mathcal{O}(wk)$ & $\mathcal{O}(k)$ & $\Theta(k)$ (when $w>1$) \\
\bottomrule
\end{tabular}
\caption{Summary of results. $w$ denotes the number of agents. $N$ represents the length of the input. 
\textit{Size} corresponds to total computation.
\textit{Depth} loosely corresponds to wall-clock time.
\textit{Communication} refers to the overall amount of communication between agents. We will define these formally in Section~\ref{subsec:formalization}.
$\mathcal{O}(\cdot)$ indicates existence of a protocol; $\Theta(\cdot)$ indicates that we prove it optimal.
}\label{table:theory-results}
\end{table}

\section{Model of Transformers}
\label{sec:background}
\label{subsec:transformers}

We assume causally masked (decoder-only) unique hard attention Transformers (UHAT) \citep[e.g.,][]{hahn2020theoretical, hao2022formal, yang2024masked, amiri2025lower, jerad-etal-2025-unique,bergstrasser2024power}, a popular abstraction where attention heads concentrate attention on the position maximizing the attention score.
Some work suggest that pretrained models concentrate their attention on only a few positions~\citep{voita2019analyzing, clark2019does}. 
Importantly,  UHAT subsumes expressivity of ordinary softmax Transformers with fixed precision \citep{jerad-etal-2025-unique},
making it a plausible model of Transformers in the regime of long contexts and large reasoning problems (see Appendix~\ref{app:fixed-precision}).

Each layer of a Transformer has an attention block followed by an MLP block.
The attention block takes as input $\mathbf{X} \in \mathbb{R}^{N \times d}$ and applies the operation 
$
\text{Att}(\mathbf{X}) = f^\text{Att}(\mathbf{X}\mathbf{W}_Q \mathbf{W}_K^\top \mathbf{X}^\top) \mathbf{X} \mathbf{W}_V^\top 
$
where $\mathbf{W}_Q, \mathbf{W}_K, \mathbf{W}_V \in \mathbb{R}^{d \times m}$
and
$f^\text{Att}(\cdot) = \text{UHAT}(\cdot)$, where for any matrix $\mathbf{A} \in \mathbb{R}^{N \times M}$:
\begin{align}
    \text{UHAT}(\mathbf{A})_{i,j} &= \begin{cases}
       1 \quad\text{if } j = \arg\max \mathbf{A}_{i,:} \\ 
       0 \quad\text{else}
    \end{cases},
\end{align}
where in case of a tie, the rightmost element is selected.
Multi-head attention with $H$ heads is defined as $\text{M-Att}_H(\mathbf{X}) = [\text{Att}_1(\mathbf{X}), \ldots, \text{Att}_H(\mathbf{X})] \mathbf{W}_O$ where each $\text{Att}_i(\mathbf{X})$ has its own set of parameters.
The matrix $\mathbf{W}_O \in \mathbb{R}^{mH \times d}$ projects the concatenated vector to a vector of dimension $d$. 
For an input $\mathbf{X} \in \mathbb{R}^{N \times d}$, the output of a Transformer layer is $\psi(\text{M-Att}_H(\mathbf{X})) \in \mathbb{R}^{N \times d}$ where $\psi: \mathbb{R}^d \rightarrow \mathbb{R}^d$ corresponds to the function computed by the MLP. 
The model has access to arbitrary positional embedding vectors $p_i \in \mathbb{R}^d$, for each $i \in [N_{max}]$, where $N_{max}$ is the model's context window and $[N_{max}]$ denotes $\{1, \ldots, N_{max}\}$.

\section{Formalization of  Multi-Agent Systems}
\label{subsec:formalization}

We formalize multi-agent systems as graphs, with a node representing an agent  at a given timestep, and edges describing both the emission of CoT tokens, and communication between different agents. We discuss connections to applied systems in Section~\ref{sec:applied}, and illustrate the definition in Figure~\ref{fig:multi_agent_protocol}.
\begin{definition}[Multi-agent system]
    \label{def:mas}
    Let $\Sigma$ be a (finite or infinite) input alphabet and $\Xi \supset \Sigma$ an (infinite) CoT alphabet.
For broad generality, depending on the task, we're allowing both input and CoT alphabets to grow with the input length, such that $\Sigma_1 \subset \Sigma_2 \subset \dots \subset \Sigma$ and $\Xi_1 \subset \Xi_2 \subset \dots \subset \Xi$, where $|\Sigma_N|, |\Xi_N| \in \mathcal{O}(\text{poly}(N))$.
We write the set of input strings as $\mathcal{S} := \bigcup_{N=1}^\infty \left(\Sigma_N\right)^N$.
We reserve agent identifiers $\text{ID}_1,\dots,\text{ID}_N \in \Xi_N$.
    
    A \textit{multi-agent system} $\mathcal{A}$ maps  strings $x \in \mathcal{S}$ to labeled DAGs $\mathcal{A}(x)$ with $w(x) \leq |x|$ agents where:
    \begin{enumerate}[noitemsep]
    \item Each node is uniquely labeled as $T_i^{(t)}$, where $i \in [w(x)]$ and $t \in \mathbb{N}$. Informally, it represents agent $i$'s state at time $t$. 
     For each agent $i \in [w(x)]$, there is $D_i \in \mathbb{N}$ such that there are nodes $T_i^{(t)}$ exactly for $t \in [D_i]$ and for no other $t$.
    
    \item We define two types of edges: 
    \begin{enumerate}
        \item \textit{communication edges} $\{c, \sigma\}$ ($\sigma \in \Xi_{|x|})$ from $T_i^{(t)}$ to $T_j^{(t+1)}$, which represent communicating a symbol between two different agents ($i \neq j$)
        \item \textit{CoT edges} $\{a, \sigma\}$ ($\sigma \in \Xi_{|x|}$), which correspond to autoregressive decoding of the model from $T_i^{(t)}$ to $T_i^{(t+1)}$
    \end{enumerate} 

    \item Every node $T_i^{(t)}$ ($t>1$) has exactly one incoming edge.
    \item Every node $T_i^{(t)}$ can have (i) no outgoing edge, (ii) one outgoing edge, (ii) outgoing edges edge with the same label $\{c,\sigma\}$ to each other agent, $j \neq i$.
    \item One agent $i \in [w(x)]$ is designated as a \emph{manager agent}.
    \end{enumerate}
\end{definition}
By definition, agents can only send or receive a single token $w \in \Xi$ at a time.\footnote{An extension to words of bounded length $w \in \Xi^{\leq C}$ would be straightforward, but we find it easiest to formalize the setup with single-token messages. Our theoretical results hold irrespective of this choice (App.~\ref{app:beyond-single-tokens}).}
Every agent can only receive one incoming edge at a time.
Intuitively, the symbol provided by the incoming edge at time $t$ (whether it is a CoT or communication edge) is appended to the agent's context at this time step.

\begin{definition}[Complexity of Multi-Agent System]
We use the following notions to characterize the complexity of a multi-agent system:
\begin{itemize}[noitemsep]
    \item \textit{Computation depth} is the length of the longest path in the graph, irrespective of edge type. We write $Depth(N)$ for the maximum depth on any input $x$ of size $|x| \leq N$. Computation depth is a proxy for the wall-clock time.
    \item \textit{Width} of the graph corresponds to the number of agents in the system. Typically we use $w(N)$ when the number of agents is a function of input length. As the input is partitioned into chunks of size $N/w$, we consider $w(N) \in [N]$.
    \item \textit{Size} corresponds to the number of nodes in the graph. We write $Size(N)$ for the maximum size on any input of size $N$. 
    \item \textit{Communication budget} is the number of nodes with an outgoing  communication edge. 
\end{itemize}
\end{definition}
We say a multi-agent system $\mathcal{A}$ \textit{computes} a function $f:\mathcal{S} \to \Sigma$ if for all $x \in \mathcal{S}$, the last CoT edge of the manager agent in $\mathcal{A}(x)$ has the label $f(x) \in \Sigma$.%

\begin{definition}[Agent computation]\label{def:agent-computation}
Given a multi-agent system $\mathcal{A}(x)$ on input $x \in \mathcal{S}$, for each agent $i \in [w(x)]$, we construct a string $\xi(i) \in (\Xi_{|x|})^*$.
Intuitively, this is the sequence of tokens that is processed by this agent over the course of reasoning.
The string $\xi(i)$ is constructed as follows:
\begin{enumerate}[noitemsep]
    \item First, take the input chunk $x_{ \bigl \lceil |x| \cdot \frac{i}{w} \bigr \rceil, \cdots, \bigl \lceil |x| \cdot \frac{i+1}{w-1} \bigr \rceil} \in \mathcal{S}$
    \item Then append $\text{ID}_i \in \Xi_{|x|}$
    \item Then traverse the nodes associated to the agent, $T_1^{(i)}, T_2^{(i)}, \dots$:
    \begin{enumerate}
        \item If there is an incoming communication edge $\{c,\sigma\}$, append the token  $\RECEIVE\_\sigma$.
        \item If there is an outgoing CoT edge $\{a, \sigma\}$, append the symbol $\sigma$.
        \item If there is an outgoing communication edge,
         \begin{enumerate}[noitemsep]
            \item Either the message is sent to a single agent $j$ -- in this case, append the token  $\SEND\_\sigma\_\mathsf{ID}_j$ -- or,
            \item the message is broadcast to all agents, append the token  $\mathsf{BROADCAST}\_\sigma$.
        \end{enumerate}
    \end{enumerate}
    \item We append the EOS symbol.
\end{enumerate}
We assume all tokens to be part of $\Xi_{|x|}$.\footnote{We assume single tokens for simplicity; they could be bounded-length words without change to our results.}
A Transformer $T$ \emph{implements} $\mathcal{A}(x)$ on input $x \in \mathcal{S}$ if and only if each of these strings $\xi(i)$ fits into the Transformer's context size, and the transformer predicts all tokens arising from outgoing edges and EOS (cases 3b,3c,4) when run autoregressively on $\xi(i)$.
Intuitively, in each reasoning step, the transformer generates the next token, unless it is overridden by incoming communication.

A protocol $\mathcal{A}$ is \emph{expressible} in UHAT if, for each input length $n$, there is a UHAT Transformer $T_n$ implementing $\mathcal{A}(x)$ on each input $x$ with length $|x| \leq n$, and the number of heads and layers are uniformly bounded across all Transformers $T_n$.
We do not require the width $d$ to stay bounded; e.g., growing width can allow positional encodings to keep unboundedly many positions distinct.

\end{definition}
The above generalizes the setup of \cite{amiri2025lower} to the multi-agent setup.
Requiring the system to be implemented by models with bounded layers and heads across input lengths is a very weak assumption, much weaker than the uniformity of the Transformer across input lengths often required in theoretical work on CoT \citep[e.g.,][]{perez2019turing,merrill2023expressive} -- nonetheless, it  allows us to prove essentially matching upper and lower bounds on the cost of multi-agent systems.

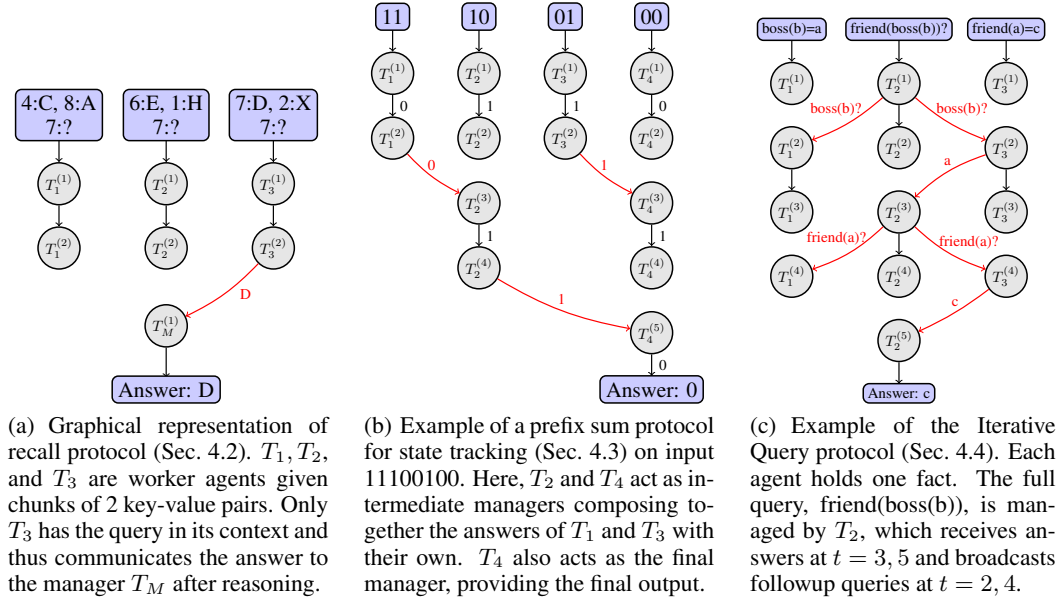
\begin{figure}
\centering

\subfigure[Graphical representation of recall protocol (Sec.~\ref{sec:associative-recall}). $T_1, T_2$ and $T_3$ are worker agents given chunks of 2 key-value pairs. Only $T_3$ has the query in its context and thus communicates the answer to the manager $T_M$ after reasoning.
]{
  \scalebox{0.57}{    \begin{tikzpicture}[
        node distance=10mm and 18mm,
        every node/.style={draw, circle, line width=1pt, inner sep=2pt, fill=gray!20},
        cot/.style={->, thick},
        comm/.style={->, red, thick}
    ]
    \def\depth{2}
    \def\xsep{2.5cm}
    \def\ysep{1.5cm}

    \foreach \agent in {1,...,3} {
        \foreach \t in {1,...,\depth} {
            \pgfmathtruncatemacro{\ai}{\agent}
            \pgfmathtruncatemacro{\ti}{\t}
            \pgfmathtruncatemacro{\tprev}{\t-1}
            \node (T\agent-\t) at ({(\agent-1)*\xsep}, {-(\t-1)*\ysep}) {$T_{\ai}^{(\ti)}$};
            \ifnum\ti>1
                \draw[cot] (T\agent-\tprev) -- (T\agent-\t);
            \fi
        }
    }

\tikzset{inlabel/.style={rectangle, rounded corners, fill=blue!20, line width=1pt, inner sep=4pt, outer sep=0pt, font=\Large, align=center, anchor=south}}
\node[inlabel, above=0.5cm of T1-1] (in1) {4:C, 8:A\\7:?};
\draw[->, thick] (in1.south) -- (T1-1.north);

\node[inlabel, above=0.5cm of T2-1] (in2) {6:E, 1:H\\7:?};
\draw[->, thick] (in2.south) -- (T2-1.north);

\node[inlabel, above=0.5cm of T3-1] (in3) {7:D, 2:X\\7:?};
\draw[->, thick] (in3.south) -- (T3-1.north);

    \coordinate (lastA2) at (T2-\depth.south);
    \node (TM-1) [below=0.8cm of lastA2] {$T_M^{(1)}$};
\draw[comm, bend left=10] (T3-\depth) to node[right=2mm, draw=none, fill=none, midway] {D} (TM-1);

    \draw[cot] (TM-1) -- ++(0,-1.2);
    \node[rectangle, rounded corners, fill=blue!20, line width=1pt, inner sep=4pt, outer sep=0pt, font=\Large, align=center, anchor=south] at ($(TM-1)-(0,1.2*\ysep)$) {Answer: D};

    \end{tikzpicture}}
}
\hspace{3mm}
\subfigure[Example of a prefix sum protocol for state tracking (Sec.~\ref{subsec:state_tracking}) on input 11100100. Here $T_2$ and $T_4$ act as intermediate managers composing together the answers of $T_1$ and $T_3$ with their own. $T_4$ also acts as the final manager, providing the final output.]{
  \scalebox{0.57}{\begin{tikzpicture}[
    node distance=0.5cm and 1cm,
    every node/.style={draw, circle, line width=1pt, inner sep=2pt, fill=gray!20},
    cot/.style={->, thick},
    comm/.style={->, red, thick}
]

    \node (T1-1) {$T_{1}^{(1)}$};
    \node (T1-2) [right=of T1-1] {$T_{2}^{(1)}$};
    \node (T1-3) [right=of T1-2] {$T_{3}^{(1)}$};
    \node (T1-4) [right=of T1-3] {$T_{4}^{(1)}$};

\tikzset{inlabel/.style={rectangle, rounded corners, fill=blue!20, line width=1pt, inner sep=4pt, outer sep=0pt, font=\Large, align=center, anchor=south}}
\node[inlabel, above=0.5cm of T1-1, font=\Large] (in1) {11};
\draw[->, thick] (in1) -- (T1-1);

\node[inlabel, above=0.5cm of T1-2, font=\Large] (in2) {10};
\draw[->, thick] (in2) -- (T1-2);

\node[inlabel, above=0.5cm of T1-3, font=\Large] (in3) {01};
\draw[->, thick] (in3) -- (T1-3);

\node[inlabel, above=0.5cm of T1-4, font=\Large] (in4) {00};
\draw[->, thick] (in4) -- (T1-4);

    \node (T2-1) [below=of T1-1] {$T_{1}^{(2)}$};
    \node (T2-2) [below=of T1-2] {$T_{2}^{(2)}$};
    \node (T2-3) [below=of T1-3] {$T_{3}^{(2)}$};
    \node (T2-4) [below=of T1-4] {$T_{4}^{(2)}$};
    
    \node (T3-2) [below=of T2-2] {$T_{2}^{(3)}$};
    \node (T3-4) [below=of T2-4] {$T_{4}^{(3)}$};

    \node (T4-4) [below=of T3-4] {$T_{4}^{(4)}$};
    \node (T4-2) [below=of T3-2] {$T_{2}^{(4)}$};

    \node (T5-4) [below=of T4-4] {$T_{4}^{(5)}$};

    \node (ans) [rectangle, rounded corners, fill=blue!20, line width=1pt, inner sep=4pt, outer sep=0pt, font=\Large, align=center, anchor=south, below=of T5-4] {Answer: 0};

    \draw[cot] (T1-1) -- (T2-1) node[midway, right, draw=none, fill=none] {0};
    \draw[cot] (T1-2) -- (T2-2) node[midway, right, draw=none, fill=none] {1};
    \draw[cot] (T1-3) -- (T2-3) node[midway, right, draw=none, fill=none] {1};
    \draw[cot] (T1-4) -- (T2-4) node[midway, right, draw=none, fill=none] {0};

    \draw[comm] (T2-1) to[bend right=10] node[midway, above, draw=none, fill=none] {0} (T3-2);
    \draw[comm] (T2-3) to[bend right=10] node[midway, above, draw=none, fill=none] {1} (T3-4);

    \draw[cot] (T3-4) -- (T4-4) node[midway, right, draw=none, fill=none] {1};
    \draw[cot] (T3-2) -- (T4-2) node[midway, right, draw=none, fill=none] {1};

    \draw[comm] (T4-2) to[bend right=10] node[midway, above, draw=none, fill=none] {0} (T5-4);
    
    \draw[cot] (T5-4) -- node[midway, right, draw=none, fill=none] {0} (ans);
\end{tikzpicture}}
}
\hspace{3mm}
\vspace{-2mm}
\subfigure[Example of the Iterative Query protocol (Sec.~\ref{subsec:khop}). Each agent holds one fact. The full query, friend(boss(b)), is managed by $T_2$, which receives answers at $t=3,5$ and broadcasts followup queries at $t=2,4$.]{
  \scalebox{0.57}{    \begin{tikzpicture}[
        node distance=10mm and 18mm,
        every node/.style={draw, circle, line width=1pt, inner sep=2pt, fill=gray!20},
        cot/.style={->, thick},
        comm/.style={->, red, thick}
    ]
    \def\depth{4}

    \foreach \agent in {1,...,3} {
        \foreach \t in {1,...,\depth} {
            \pgfmathtruncatemacro{\ai}{\agent}
            \pgfmathtruncatemacro{\ti}{\t}
            \pgfmathtruncatemacro{\tprev}{\t-1}
            \node (T\agent-\t) at ({(\agent-1)*2.5cm}, -\t*1.5 cm) {$T_{\ai}^{(\ti)}$};
        }
    }
    \node (T2-5) at ({(2-1)*2.5cm}, -5*1.5 cm) {$T_{2}^{(5)}$};
    \draw[comm] (T2-1) to[bend left=10] node[pos=0.7, below=-12mm, draw=none, fill=none] {boss(b)?} (T1-2);
    \draw[comm] (T2-1) to[bend right=10] node[pos=0.7, below=-12mm, draw=none, fill=none] {boss(b)?} (T3-2);
    
    \draw[comm] (T2-3) to[bend left=12] node[pos=0.7, below=-13mm, draw=none, fill=none] {friend(a)?} (T1-4);
    \draw[comm] (T2-3) to[bend right=12] node[pos=0.8, below=-13mm, draw=none, fill=none] {friend(a)?} (T3-4);
    \draw[comm] (T3-4) to[bend left=6] node[midway, above, draw=none, fill=none] {c} (T2-5);
    \draw[comm] (T3-2) to[bend right=12] node[midway, above, draw=none, fill=none] {a} (T2-3);
    
    \draw[cot] (T1-2) -- (T1-3);
    \draw[cot] (T3-2) -- (T3-3);
    \draw[cot] (T2-1) -- (T2-2);
    \draw[cot] (T2-3) -- (T2-4);

    \tikzset{inlabel/.style={rectangle, rounded corners, fill=blue!20, line width=1pt, inner sep=3pt, outer sep=0pt, font=\normalsize, align=center, anchor=south}}

    \node[inlabel, above=0.5cm of T1-1] (in1) {boss(b)=a};
\draw[->, thick] (in1.south) -- (T1-1.north);

\node[inlabel, above=0.5cm of T2-1] (in2) {friend(boss(b))?};
\draw[->, thick] (in2.south) -- (T2-1.north);

\node[inlabel, above=0.5cm of T3-1] (in3) {friend(a)=c};
\draw[->, thick] (in3.south) -- (T3-1.north);

\node[inlabel, below=0.5cm of T2-5] (out1) {Answer: c};
\draw[->, thick] (T2-5.south) -- (out1.north);
    \end{tikzpicture}}
  \label{fig:multi_agent_protocol_prefix_sum}
}
\caption{Graphical representations of the protocols analyzed in Section~\ref{sec:results}.\label{fig:multi_agent_protocol}}
\end{figure}

\subsection{Connection to Applied Works}\label{sec:applied}
Our formalization covers a broad range of LLM-based protocols which  (i) split long contexts into non-overlapping chunks, (ii) process these chunks in parallel with worker agents, (iii) relay info to a manager to generate the final answer. The primary distinctions lie in coordination: CoA~\citep{zhang2024chain}, LLM$\times$MapReduce~\citep{zhou-etal-2025-llmxmapreduce}, NexusSum~\citep{kim-kim-2025-nexussum}, AgentSimp~\citep{fang-etal-2025-collaborative}, and Multi$^2$~\citep{cao2025multi2} run workers in parallel with minimal communication, whereas LongAgent~\citep{zhao-etal-2024-longagent}, XpandA~\citep{xiao2025long}, AgentGroupChat-V2~\citep{gu2025agentgroupchatv2}, and certain task-specific pipelines (e.g., DocAgent~\citep{yang2025docagent}, Multi-Agent QG~\citep{wang2025maqg}) support targeted message-based communication to resolve conflicts.
Each of these approaches can be described by a communication graph as in Figure~\ref{fig:multi_agent_protocol}.
Notably, all of these implement communication using messages written to the recipient agent's context, consistent with our framework.

\section{Results}
\label{sec:results}
\subsection{General Results: Three Regimes for Depth and Communication}\label{sec:general}
In this section, we present theoretical results which hold for all tasks and all multi-agent systems which follow from Definition~\ref{def:mas}. Throughout, by ``multi-agent systems'', we always refer to  systems computed by Transformers.
The first result we present relates to the \textit{size} of the system: 
\begin{proposition}[Conservation of size]
    \label{prop:conservation}
   Any protocol can be converted into an equivalent single-agent protocol with the same size up to constant factor.
\end{proposition}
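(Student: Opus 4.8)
The plan is to serialize the entire multi-agent computation into the chain of thought of a single agent, exploiting the fact that the computation DAG is layered in time. Every edge of $\mathcal{A}(x)$ --- CoT or communication --- leads from a node $T_i^{(t)}$ to a node at the next time step $t+1$, so ordering the nodes lexicographically by the pair $(t,i)$ yields a topological order in which each node precedes all of its successors. I would let the $k$-th reasoning step of the single agent simulate the $k$-th node in this order and emit a constant-size encoding of the tokens that node contributes to its agent's trace (its $\RECEIVE$ token, when its incoming edge is a communication edge, followed by its outgoing $\SEND$, $\mathsf{BROADCAST}$, or CoT token). Because each of the $Size(x)=\sum_i D_i$ nodes contributes only a bounded number of tokens, the resulting single-agent protocol has size $\bigo{Size(N)}$; appending one final step in which the single agent copies the manager's output token from its context makes that token its last CoT token, so the single-agent protocol computes the same $f$.

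The substance is to show that this serialized trace can be produced by a UHAT Transformer $T'_n$ with uniformly bounded heads and layers. The key observation is that the layered ordering makes the single agent's context \emph{subsume} every agent's local context: at the moment the single agent simulates $T_i^{(t)}$, the input chunk assigned to agent $i$, its identifier $\text{ID}_i$, its own earlier tokens, and every symbol it received over a communication edge all occur earlier in the single agent's context --- a symbol communicated by $T_j^{(t-1)}$ sits verbatim at the position where that node was simulated, which lies in an earlier layer. No information is therefore lost, and communication in $\mathcal{A}$ becomes ordinary in-context retrieval for $T'_n$.

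To realize this inside UHAT I would tag each position --- through the positional encodings, whose width we are free to grow --- with the pair $(i,t)$ it simulates and with a flag marking input-chunk, CoT, $\SEND$/$\mathsf{BROADCAST}$, and $\RECEIVE$ tokens. A bounded-depth initial stage then (i) reads the agent $i$ and time $t$ of the current query from its tag; (ii) decides whether $T_i^{(t)}$ receives a message, by attending within layer $t-1$ to the rightmost $\SEND$ token addressed to $i$ or $\mathsf{BROADCAST}$ token, and if so copies its payload; and (iii) otherwise replays $T_n$'s computation for agent $i$, confining every head to agent $i$'s positions and its input chunk by adding a large negative bias to the scores of all other positions, so that the UHAT argmax reproduces exactly the attention $T_n$ performs on $\xi(i)$. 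Each original head and layer is simulated by a constant number of heads and layers, keeping the bound uniform across lengths.

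The main obstacle is this implementability step rather than the combinatorial serialization: one must guarantee that the single agent's hard attention never escapes the positions forming agent $i$'s context at time $t$ --- so the argmax cannot drift onto another agent's thread or a future layer --- while still routing the data-dependent communication to the correct receiver, including broadcasts where one source feeds many receivers and the rightmost-tie-breaking of UHAT, which must be arranged so that the retrieved payload is unambiguous. Once the $(t,i)$ layering is fixed, the size and depth bookkeeping is routine.
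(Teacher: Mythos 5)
Your construction is, at its core, the same as the paper's: serialize the DAG into a single chain in a layer-by-layer topological order, have the single agent replay each node's computation while hard attention is confined to the positions forming that agent's local context, and treat communication as ordinary in-context retrieval. The paper's proof interleaves an explicit $\langle \texttt{AgentID}\rangle$ token before each simulated step and round-robins over non-terminated agents, which is essentially your $(t,i)$ ordering.

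There is one concrete gap in your version: you propose to tag each serial position with the pair $(i,t)$ it simulates \emph{via the positional encodings}. Positional encodings are fixed per position index, but the DAG $\mathcal{A}(x)$ --- in particular each agent's lifetime $D_i$ and which nodes emit communication --- depends on the input $x$ (e.g., in the $k$-hop protocol, which agent answers at a given round depends on which shard holds the relevant fact). So the map from serial position $k$ to the node $(t,i)$ it simulates is input-dependent, and cannot be hard-coded into the encodings. Padding the schedule so that every agent occupies every layer would make it input-independent, but inflates the chain to $\Theta(w \cdot Depth)$ tokens, which can exceed $Size(N)$ by more than a constant factor. The paper sidesteps this by making the agent identity an explicit \emph{generated} token at the odd positions and having the transformer compute, in context, which non-terminated agent comes next; your proof would be repaired by folding the $(i,t)$ tag into the constant-size token block you emit at each step rather than into the positional encoding. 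The rest of your argument --- the subsumption of each local context, the restriction of attention by masking, the constant-factor blowup in heads and layers, and the final copy of the manager's output --- matches the paper's reasoning.
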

The proof idea is to construct a single agent that alternates between simulating each of the agents of the original protocol, see Appendix~\ref{app:proofs-general}.
Thus, the size of the protocol cannot be reduced using multiple agents beyond a constant factor. However, we will see that it may increase in some tasks with the number of agents. In a multi-agent protocol, two other key determinants of cost are (i) depth, and (ii) communication budget.
There are two fundamental \emph{a priori} considerations about these quantities. First, any protocol as defined in Definition~\ref{def:mas} satisfies the inequality:
\begin{equation}\label{eq:depth-size-ineq}
    \frac{Size(N)}{w(N)} \leq Depth(N)
\end{equation}
\begin{proof}
    The size is upper-bounded by the number of agents times the maximum number of nodes assigned to any individual agent, which is upper-bounded by the maximum length of any path.
\end{proof}
This inequality might lead one to hope that multi-agent protocols reduce depth even if size cannot be reduced. We show that such a gain in depth is realizable in some tasks (Section~\ref{subsec:state_tracking}), but there are other tasks where \emph{no asymptotic gain in depth is possible} (Section~\ref{subsec:khop}).
A second fundamental observation is that reduction in depth due to multi-agent reasoning is only possible at an increase in communication.
In fact, any task solvable at  bounded communication cost can already be solved at bounded depth by a single agent, ruling out gains in depth from a multi-agent setup:
\begin{proposition}\label{prop:tradeoff-depth-communication}
Consider a task with a multi-agent system whose communication budget is $\mathcal{O}(1)$ in $N$ across all $w \in [N]$.
Then this task has a single-agent CoT with depth (and hence size) $\mathcal{O}(1)$.

\end{proposition}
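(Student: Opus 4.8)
\begin{sketch}[Proof plan]
The plan is to exploit the freedom in the hypothesis by instantiating the protocol at the maximal width $w=N$, where each agent holds a single input token, and then to show that the answer is the composition of only $\mathcal{O}(1)$ elementary per-agent computations that a single Transformer can replay in bounded depth. Fix the constant communication bound $C$ and take the width-$N$ protocol $\mathcal{A}$, whose budget is still $\le C$ by hypothesis. First I would prune every node whose value does not influence the manager's final CoT token; since each node with $t>1$ has a unique incoming edge (Definition~\ref{def:mas}), backward reachability is well defined and this leaves a sub-DAG in which the answer depends only on the manager's chunk together with the $\le C$ messages it transitively receives.

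Next I would read off a structural decomposition. Because the communication budget counts source nodes and is at most $C$, at most $C$ distinct messages---each a single symbol of $\Xi_N$---are emitted along the relevant sub-DAG, and they link at most $C+1$ agents (the manager and the emitters). With $w=N$ every input chunk is a single token, so each of these agents computes its outgoing message, and the manager computes the answer, as a function of only $\mathcal{O}(1)$ genuine tokens: its own single-token chunk together with the $\mathcal{O}(1)$ symbols it absorbs via $\RECEIVE$ edges. Any intermediate self-generated CoT is a deterministic function of these, so each relevant local computation realizes a fixed map from $\mathcal{O}(1)$ tokens to a single output symbol.

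Finally I would build the single agent. It places the whole input in context and, processing the $\le C$ messages in topological order, replays each local map using a constant number of CoT steps: a UHAT head locates the (data-dependent) chunk or the recipient addressed by a $\SEND$/$\broadcast$ label, and the MLP block---whose width we are free to grow with $N$---realizes the finite map from the retrieved $\mathcal{O}(1)$ tokens to the emitted symbol. Chaining the $\le C+1$ such stages yields depth $\mathcal{O}(C)=\mathcal{O}(1)$, and since a single agent's graph is a path its size equals its depth (consistent with Proposition~\ref{prop:conservation}). The main obstacle is this last step: the identities of the communicating agents are data-dependent, so the single agent must reconstruct the relevant slice of the communication DAG on the fly---routing each earlier, self-generated message token to the correct retrieval via hard attention, and, for broadcasts, arguing that although one source reaches all $N-1$ agents only $\mathcal{O}(1)$ of them feed the answer after pruning. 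One must also justify that each finite local map is expressible in constant depth, leaning on the unbounded width allowed for $T_n$ rather than on additional layers or heads.
\end{sketch}
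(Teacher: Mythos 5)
Your proposal is correct and follows essentially the same route as the paper: instantiate the hypothesis at $w=N$ so each chunk is a single token, observe that the $\mathcal{O}(1)$ communication bound forces every answer-relevant local computation to be a finite map of $\mathcal{O}(1)$ tokens (hard-codable into a UHAT agent of unbounded width), and then have a single agent replay the whole thing with $\mathcal{O}(1)$ CoT steps. The only cosmetic difference is that you prune to the $\le C+1$ answer-relevant agents and replay their local maps one by one, whereas the paper simulates \emph{all} agents' $\mathcal{O}(1)$-step local computations in parallel inside the Transformer's layers (flagging would-be communication edges) and reserves the $\mathcal{O}(1)$ CoT steps solely for simulating the communication and re-simulating the affected agents.
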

\begin{wrapfigure}[11]{r}{0.43\textwidth}
  \centering
  \small
  \renewcommand{\arraystretch}{1.2}
  \begin{tabular}{@{}lcc@{}}
    \toprule
    & \multicolumn{2}{c}{\textbf{Communication}} \\
    \cmidrule(lr){2-3}
    \textbf{Depth} & \emph{$\mathcal{O}(1)$} & \emph{increases} \\
    \midrule
    $\sim$Size$/w$ & impossible & Section~\ref{subsec:state_tracking} \\
    \emph{No Gain}        & Section~\ref{sec:associative-recall} & Section~\ref{subsec:khop}  \\
    \bottomrule
  \end{tabular}
\caption{Three possible and one impossible regimes for depth-communication tradeoffs.}\label{fig:regimes}
\end{wrapfigure}
See proof in Appendix~\ref{app:proofs-general}.
Taken together, this leaves us with \emph{three distinct feasible regimes} of multi-agent reasoning (Table~\ref{fig:regimes}). The first one (Section~\ref{sec:associative-recall}) is where both depth and communication are $\mathcal{O}(1)$ as the number of agents increases. In this setting, multi-agent setups simply enable processing larger contexts, \emph{without associated cost}.
The second regime (Section~\ref{subsec:state_tracking}) is where depth can be reduced by using more agents (almost up to (\ref{eq:depth-size-ineq})), though at the cost of increased communication. That is, there is a \emph{depth-communication tradeoff}.
The third regime (Section~\ref{subsec:khop}) is where, at least in the worst case, multi-agent setups require a large amount of communication, without reduction to the required depth. In this regime, multi-agent setups allow processing larger contexts, with a \emph{high cost of communication and no reduction in depth}.
A seeming fourth regime, where communication stays $\mathcal{O}(1)$ but depth decreases as (\ref{eq:depth-size-ineq}), is impossible by Proposition~\ref{prop:tradeoff-depth-communication}. 
Importantly, we will next demonstrate that all three  regimes are instantiated by naturalistic tasks.

\subsection{Associative Recall}\label{sec:associative-recall}
The first task we consider is simple, associative recall. In this setup, given multiple key value pairs, and a queried key, agents must return the associated value. In this case, multi-agent setups permit processing a larger input without associated cost in communication or depth:
\begin{proposition}
    Given an input consisting of $N$ pairs $(x_i, y_i) \in \Sigma_{N} \times \Sigma_{N}$, and a query $x$, consider the task of retrieving the (unique) $y$ such that $(x,y)$ appears in the input.
    Assume that the input is partitioned disjointly into parts provided to $k$ agents, which also have access to the query. Then they can solve the task with depth $\mathcal{O}(1)$ and communication $\mathcal{O}(1)$.
\end{proposition}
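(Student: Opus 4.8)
The plan is to exhibit an explicit protocol, verify directly that it meets the depth and communication bounds, and then argue that each agent's computation is realizable in UHAT. First I would fix the protocol. Partition the $N$ pairs into $k$ disjoint chunks of $\lceil N/k \rceil$ pairs, one per agent, and place the query key $x$ in every agent's context. In a single reasoning step each worker attends from its query position back over its own chunk to locate a key matching $x$; since the queried key occurs exactly once across the whole input, exactly one worker finds a match. That worker reads off the associated value $y$ and emits one communication edge carrying $y$ to the designated manager, while every other worker halts without communicating. The manager, on receiving the single incoming message, emits $y$ as its final CoT token (if the matching worker is itself the manager, it outputs directly and no communication edge is needed, which only helps).

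Given this protocol the two complexity claims are immediate. Each agent performs a constant number of reasoning steps (one attention step to locate-and-verify a key, one step to send or to halt), and the manager relays in one step, so the longest path in $\mathcal{A}(x)$ has length $\mathcal{O}(1)$ and $Depth(N)=\mathcal{O}(1)$. For communication, only the unique matching worker owns an outgoing communication edge, so the communication budget is exactly $1=\mathcal{O}(1)$, independent of both $k$ and $N$.

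The substantive step is UHAT-expressibility with a bounded number of heads and layers (width allowed to grow with $N$). I would embed each symbol of $\Sigma_N$ by an orthonormal, e.g.\ one-hot, vector of dimension $\mathcal{O}(\mathrm{poly}(N))$, which is admissible since width may grow and $|\Sigma_N|\in\mathcal{O}(\mathrm{poly}(N))$. A single attention head with $\mathbf{W}_Q,\mathbf{W}_K$ chosen so the query-to-key score equals $\langle e(x), e(x_i)\rangle$ makes the query position attend to the key position maximizing agreement with $x$; under orthonormal embeddings this score is $1$ exactly at a matching key and $0$ otherwise. Choosing $\mathbf{W}_V$ to extract both the key and the value embedding at the attended position deposits the candidate pair $(x_{j^\star},y_{j^\star})$ into the residual stream next to the query embedding, and a bounded-layer, bounded-head construction then formats the output token.

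The main obstacle is the no-match case, where the idiosyncrasy of unique hard attention bites: attention always returns an argmax position even when no key equals $x$, and the rightmost-tie rule selects some position among the all-zero scores. I would resolve this in the MLP by comparing the retrieved key embedding $e(x_{j^\star})$ against the query embedding $e(x)$: equality certifies a genuine match and triggers emission of the token $\SEND\_y\_\mathsf{ID}_M$ carrying the retrieved value $y=y_{j^\star}$, whereas inequality makes the agent emit EOS and halt with no outgoing communication edge, preserving the budget of $1$. Since $\SEND$-tokens and values range over an alphabet of size $\mathcal{O}(\mathrm{poly}(N))$, they lie in $\Xi_N$ and are produced by the growing-width output map, and copying $y_{j^\star}$ from the residual stream into the emitted token is the standard hard-attention copy construction. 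The manager's relay is a trivial copy of its single incoming message to its output, which completes the check that each $\xi(i)$ is implemented by a fixed bounded-layer, bounded-head UHAT Transformer.
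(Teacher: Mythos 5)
Your protocol is the same as the paper's (partition, parallel local lookup, a single report to a designated manager, manager outputs), and your complexity accounting is fine; if anything, having the non-matching workers halt silently matches the claimed $\mathcal{O}(1)$ communication more tightly than the paper's appendix version, in which every worker sends either the value or a null token to the manager. Your use of one-hot embeddings with $\mathrm{poly}(N)$ width is admissible under the paper's definition, whereas the paper uses $\mathcal{O}(\log N)$-dimensional Johnson--Lindenstrauss embeddings; and your MLP equality check between the retrieved key and the query is a clean way to neutralize UHAT's forced argmax in the no-match case, where the paper instead relies on a default ``not found'' embedding.

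The one step that does not work as written is the claim that a single head attending to the matching key position can, via $\mathbf{W}_V$, ``extract both the key and the value embedding at the attended position.'' In the flat token sequence $x_1, y_1, \dots, x_n, y_n, x_{\mathrm{query}}$ the value $y_{j^\star}$ lives at position $j^\star+1$, not at the key's position, so one attention step to the key retrieves only the key. You need either a preliminary layer that shifts each value's embedding onto its key's position (this is exactly the paper's first layer, implemented with two heads and a rotation of the positional encodings — the induction-head construction), or an attention score that matches the query against the \emph{preceding} token so the head lands directly on $y_{j^\star}$. Also, since keys and values both range over $\Sigma_N$, a value token could accidentally equal the queried symbol and win the argmax; you should gate the match score with a key/value role flag in the embeddings. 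Both fixes are routine and do not change the stated bounds, but the two-layer structure is essential rather than optional.
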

\begin{sketch}[Sketch of proof (Full proof in Appendix~\ref{app:associative-recall})]
    Each agent uses attention to check if the query $x$ appears in the input, and uses an induction head to retrieve the associated $y$ if it appears. By design, only one agent will find such a $y$; it then reports it to a designated manager agent that outputs $y$.
\end{sketch}
\begin{center}
\colorbox{gray!15}{%
  \parbox{0.98\linewidth}{\textbf{Tradeoffs for Simple Retrieval}
  \begin{enumerate}[noitemsep]
      \item Computation depth $\mathcal{O}(1)$
      \item Number of agents $w(N)$ and chunk size: $\frac{N}{w(N)}$
      \item 
      Communication budget $\mathcal{O}(1)$
      
      \item Size: $\mathcal{O}(w(N))$
  \end{enumerate}
  is both realizable and optimal for retrieval.
  }}
\end{center}

\subsection{State Tracking}
\label{subsec:state_tracking}
Another foundational task is state tracking.
State tracking is at the heart of many reasoning problems, such as tracking chess moves in source-target notation, evaluating Python code, or entity tracking \citep{kim2023entity,merrill2024illusion}.
State tracking can be conveniently formalized in terms of evaluation over finite monoids \citep[e.g.,][]{merrill2024illusion, grazzi2025unlocking}:
\begin{definition}[State tracking problem]
    Let $M$ be a finite set, and $(M,\cdot)$ a finite monoid ($M$ with an identity element and associativity). A state tracking problem on $M$ is defined as sending a sequence $m_0m_1\dots m_k \in M^*$ to $m_0\cdot m_1\cdot...\cdot m_k\in M$. Here, the input alphabet is $\Sigma := M$.
\end{definition}
Elements of the monoid represent operations (e.g., list manipulation instructions in Python or chess moves). Composing them leads to new monoid elements (e.g., compositions of instructions, or a sequence of chess moves).
This problem class subsumes deciding membership for all regular languages, such as \textsc{Parity}, which corresponds to the monoid $(\{0,1\}, \oplus)$. 
\cite{amiri2025lower} showed that \textsc{Parity} requires a CoT of length $\Omega(N)$. Can a multi-agent system with a large amount of total communication do better? In terms of the \textit{size} of the graph, this cannot be the case:
\begin{proposition}
    \label{prop:parity_size}
    Any multi-agent system
    computing \textsc{Parity}  requires size $\Omega(N)$.
\end{proposition}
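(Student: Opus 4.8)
The plan is to reduce the multi-agent lower bound to the known single-agent lower bound of \cite{amiri2025lower} by passing through the Conservation of Size result. First I would take an arbitrary multi-agent system $\mathcal{A}$ computing \textsc{Parity}, with size $Size(N)$, and apply Proposition~\ref{prop:conservation} to obtain an equivalent single-agent protocol $\mathcal{A}'$ whose size agrees with that of $\mathcal{A}$ up to a constant factor; concretely, there is an absolute constant $C$ with $Size'(N) \le C \cdot Size(N)$, where $Size'$ denotes the size of $\mathcal{A}'$. The point that needs care here is that this conversion preserves expressibility in UHAT: the simulating single agent must still correspond to a UHAT Transformer with a number of heads and layers bounded uniformly in $N$, and the communication tokens of $\mathcal{A}$ must be faithfully re-encoded as ordinary CoT tokens of $\mathcal{A}'$ without altering the computed function.

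Next I would observe that for a single-agent protocol the nodes form a single path, so its size equals the length of its chain-of-thought up to an additive constant. Hence $\mathcal{A}'$ is an ordinary single-agent CoT computation of \textsc{Parity} whose CoT length is $\Theta(Size'(N))$. At this point the external result applies directly: \cite{amiri2025lower} show that any UHAT Transformer computing \textsc{Parity} on inputs of length $N$ requires a CoT of length $\Omega(N)$, which gives $Size'(N) = \Omega(N)$.

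Finally, chaining the two inequalities yields $Size(N) \ge Size'(N)/C = \Omega(N)$, which is the claim. The entire argument is a two-step composition: multi-agent $\to$ single-agent (Proposition~\ref{prop:conservation}), then single-agent size $=$ CoT length, then the cited $\Omega(N)$ bound.

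I expect the main obstacle to be bookkeeping in the first step rather than any new idea. Specifically, one must verify that the construction underlying Proposition~\ref{prop:conservation}, when specialized to a system computing \textsc{Parity}, (i) stays within the UHAT model with uniformly bounded heads and layers, so that the lower bound of \cite{amiri2025lower}—stated for single-agent UHAT models—is legitimately applicable to $\mathcal{A}'$, and (ii) inflates the CoT length by at most a constant factor, so that the $\Omega(N)$ bound transfers back to the original $\mathcal{A}$. Once these two invariants of the single-agent simulation are in place, the remainder is immediate.
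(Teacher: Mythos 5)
Your proposal is correct and follows essentially the same route as the paper's own proof: convert the multi-agent system to a single-agent protocol via Proposition~\ref{prop:conservation}, identify its size with CoT length, and invoke the $\Omega(N)$ UHAT CoT lower bound for \textsc{Parity} from \cite{amiri2025lower}. Your additional care about the simulation preserving UHAT-expressibility with uniformly bounded heads and layers is a reasonable point that the paper's terse proof leaves implicit.
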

The proof is in Appendix \ref{app:state-tracking}.
However, if we consider a parallel computation budget, we can obtain a speedup in the \textit{depth} of the computation graph.
We assume the setup where each agent receives a disjoint contiguous substring of the input. Then:
\begin{proposition}\label{prop:prefix-sum}
    Let $M$ be a finite monoid. There exists a communication protocol with $w(N) = N$, depth $\mathcal{O}(\log N)$
    computing the state tracking for $M$.
\end{proposition}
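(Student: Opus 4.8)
The plan is to implement the standard balanced binary-tree reduction of the monoid product, exploiting associativity of $M$. Since $w(N)=N$, each chunk has size $N/w=1$, so agent $i$ is initialized directly with the single element $m_i$ and needs no within-chunk computation. I would organize the $N$ agents as the leaves of a balanced binary tree of height $\lceil \log_2 N\rceil$ and proceed in rounds $r=1,\dots,\lceil\log_2 N\rceil$. At round $r$ the currently active agents are grouped into consecutive pairs; in each pair the left agent sends its current partial product to the right agent along a single communication edge, and the right agent (the parent for the next level) appends the received symbol and, using $\mathcal{O}(1)$ CoT steps, computes the monoid product of its own partial product with the received one. The number of active agents halves each round, so after $\lceil\log_2 N\rceil$ rounds a single agent --- the root, which I designate as the manager --- holds the full product and emits it. When $N$ is not a power of two I build an unbalanced tree of the same height, padding the leftover slots with the identity of $M$; this is routine.

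Correctness follows by a straightforward induction on the tree level: I would show that every active agent at level $r$ holds the product $m_a\cdot m_{a+1}\cdots m_{b-1}$ of the contiguous block $[a,b)$ of $2^r$ inputs that its subtree covers. The base case $r=0$ holds by initialization, and the inductive step is exactly one monoid multiplication of the (ordered) left and right child blocks. Crucially, because we always combine contiguous blocks in left-to-right order, \emph{associativity --- but not commutativity} --- is all that is required. At the root this yields $m_0\cdot m_1\cdots m_k$, as desired. For the depth bound, each level contributes one communication edge followed by $\mathcal{O}(1)$ CoT steps (a single lookup in the finite multiplication table of $M$), so the longest root-to-leaf path has length $\mathcal{O}(\log N)$, giving $Depth(N)=\mathcal{O}(\log N)$. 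I note in passing that the total communication budget equals the number of internal nodes, $N-1=\Theta(w)$, consistent with Table~\ref{table:theory-results}.

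The remaining and most delicate part is verifying UHAT-expressibility in the sense of Definition~\ref{def:agent-computation}: producing a single family of Transformers with uniformly bounded heads and layers that predicts each agent's outgoing tokens. Three capabilities must be realized: (i) reading the most recently received symbol, via one hard-attention head attending to the last $\RECEIVE\_\sigma$ token; (ii) computing the product $\sigma\cdot\sigma'$ of two symbols drawn from the \emph{fixed, finite} set $M$, which is a bounded lookup table realizable by the MLP once the two operands are gathered into one position; and (iii) deciding, from the agent's identifier and the current round, whether it should send, receive, or halt, and to which agent $\ID_j$ it must send.

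I expect (iii) to be the main obstacle. An agent that is the left child at round $r$ must emit the destination $\ID_{i+2^{r-1}}$, an arithmetic function of its index $i$ and its current level, and it must recognize that it is the root precisely when this destination would exceed $N$. I would discharge this by exploiting the freedom, granted in Definition~\ref{def:agent-computation}, to use arbitrary positional embeddings and unbounded width: encoding each agent index in binary (equivalently, via suitably chosen position vectors) lets a bounded number of heads determine whether $i$ is still active at round $r$, whether it is a left or right child, and which identifier token to select, so that the entire send/receive structure of the tree --- including the correct $\SEND\_\sigma\_\ID_j$ tokens --- is produced with heads and layers that do not grow with $N$.
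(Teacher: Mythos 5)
Your proposal is correct and follows essentially the same route as the paper: the paper also proves this via a branching-factor-2 parallel-scan/tree reduction (stated for general $w(N)$ in Proposition~\ref{prop:state-tracking-protocol}, of which this is the $w(N)=N$ corollary), with attention heads gathering the two operands, an MLP realizing the finite monoid's multiplication table, and agent-indexed positional information controlling when and to whom each agent sends. Your treatment of the send/receive control logic via positional embeddings matches the paper's use of J-L position vectors and the freedom granted by Definition~\ref{def:agent-computation}.
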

The key idea for this protocol is to compute the prefix sum (or recursive parallel scan) algorithm with the LLM agents, as shown in Figure~\ref{fig:multi_agent_protocol_prefix_sum}.
The above protocol has a width of $N$ agents, but it can be generalized to other widths given by some function $w(N)$ of the input size $N$:
\begin{proposition}\label{prop:state-tracking-protocol}
    Given a finite monoid $M$ and any number of agents ($w : \mathbb{N} \rightarrow \mathbb{N}$ with $w(N) \in [N]$), there exists a $\mathcal{O}(\log w(N) + \frac{N}{w(N)})$ depth and $\mathcal{O}(N)$ size multi-agent system computing state tracking on $M$ with communication budget $w(N)$.
\end{proposition}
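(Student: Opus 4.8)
The plan is to split the computation into two phases---a local sequential phase and a parallel tree-reduction phase---so that the two terms $N/w(N)$ and $\log w(N)$ in the target depth arise one from each phase. First I would partition the input $m_0 m_1 \cdots m_{N-1}$ into $w(N)$ contiguous chunks, each of length at most $\lceil N/w(N)\rceil$, and hand chunk $i$ to agent $i$. In the local phase, each agent computes the product of the monoid elements in its own chunk sequentially by chain-of-thought, exactly as in the single-agent state-tracking construction: at each step the agent attends to the next symbol of its chunk and its MLP applies the (finite) lookup table $(p,m)\mapsto p\cdot m$ to update the running product. This runs in parallel across agents, costs depth $\mathcal{O}(N/w(N))$, uses no communication, and leaves each agent $i$ holding a single monoid element $p_i$ equal to the product over its chunk.

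In the second phase I would combine the $w(N)$ partial products $p_1,\dots,p_{w(N)}$ using the parallel-scan / tree-reduction strategy already established in Proposition~\ref{prop:prefix-sum} (which treats the case $w=N$), now run on the $w(N)$ in-context values rather than on raw input symbols. Organizing the agents as the leaves of a balanced binary tree, at each of the $\lceil \log_2 w(N)\rceil$ rounds a surviving agent receives its partner's partial product over a communication edge and, on the following step, its MLP composes it with its own held value via the same finite lookup table. By associativity of $M$, the value held at the root equals $m_0\cdot m_1\cdots m_{N-1}$, which the root (manager) then emits. This phase has depth $\mathcal{O}(\log w(N))$ and performs exactly $w(N)-1$ send operations, since each composition absorbs one agent; hence its communication budget is $\Theta(w(N))$ and its size is $\mathcal{O}(w(N))$.

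Summing the two phases gives depth $\mathcal{O}(N/w(N) + \log w(N))$ and size $\mathcal{O}(N) + \mathcal{O}(w(N)) = \mathcal{O}(N)$ (using $w(N)\le N$), with communication budget $\Theta(w(N))$, as claimed. It remains to argue expressibility in UHAT with uniformly bounded heads and layers. The composition map is a finite lookup table (since $M$ is finite) and is trivially realized by a fixed MLP; attending to ``the next input symbol'' in the local phase and to ``my partner's message'' in the reduction phase are standard unique-hard-attention operations.

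I expect the main obstacle to be the \emph{scheduling} in the second phase: each agent must determine, purely from its identifier and its current context position, in which round it is active, whether it should send or receive, and the identity of its tree partner. This requires the Transformer to carry out the relevant index arithmetic (essentially reading off the binary-tree round structure from the position) within a bounded number of heads and layers, uniformly in $N$. Since this is precisely the mechanism underlying the $w=N$ protocol of Proposition~\ref{prop:prefix-sum}, I would reuse that construction rather than re-derive it; the only new ingredient is prepending the constant-communication local phase, which is independent across agents and therefore composes cleanly with the reduction.
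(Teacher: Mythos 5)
Your proposal is correct and follows essentially the same approach as the paper: each agent first composes its own chunk sequentially (depth $\mathcal{O}(N/w(N))$), then the partial products are merged by a branching-factor-2 prefix-sum/tree reduction (depth $\mathcal{O}(\log w(N))$ with $\Theta(w(N))$ sends), with the composition realized as a finite lookup table in the MLP. One organizational caveat: in the paper, Proposition~\ref{prop:prefix-sum} is a \emph{corollary} of this proposition rather than the reverse, so the scheduling and UHAT-implementation details you propose to ``reuse'' from it are in fact what must be constructed here (the paper does this via Johnson--Lindenstrauss positional vectors, three attention heads writing into the EOS position, and a threshold-circuit MLP), and are not available as an independently established prior result.
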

This means that given enough parallel computation budget, we indeed recover a speedup in terms of effective or wall-clock time. The proof for this result is given in Appendix~\ref{app:state-tracking}; Proposition~\ref{prop:prefix-sum} is a corollary.
The above result is essentially optimal, in that essentially no shorter depth is attainable:
\begin{proposition}[Optimality, see App.~\ref{app:state-tracking} for proof]\label{prop:optimality-prefix-sum}
    Assume the finite monoid $M$ is a nontrivial group, and $\mathcal{A}$ a multi-agent system computing state tracking over $M$.
Then $\mathcal{A}$ has $\Omega(w(N))$ communication budget,   
and computation depth $\Omega(\frac{N}{w(N)})$.
\end{proposition}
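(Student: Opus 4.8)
The plan is to establish the two lower bounds by separate arguments: the communication bound rests on an \emph{information-localization} argument that crucially uses the group structure, while the depth bound follows by feeding a size lower bound into the generic inequality~(\ref{eq:depth-size-ineq}).

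For the communication budget, the guiding observation is that the $i$-th input chunk resides only in agent $i$'s context, so its information can influence any other agent --- and in particular the manager, which produces the output --- only through an outgoing communication edge emitted by agent $i$. I would first record that the group structure forces the output to depend on every chunk: fixing all chunks but the $i$-th, the overall product has the form $L \cdot m \cdot R$ with $m$ the product of chunk $i$ and $L, R$ the fixed products of the surrounding chunks, and since $M$ is a group the map $m \mapsto L \cdot m \cdot R$ is a bijection, hence nonconstant as $M$ is nontrivial. I would then argue that if agent $i$ emits no communication edge on an input $x$, then on any input $x'$ differing from $x$ only in chunk $i$ on which agent $i$ is \emph{also} silent, every other agent (the manager included) has an identical execution, by induction over time steps: no agent ever receives a message derived from chunk $i$, so the contexts of all agents other than $i$ coincide, and in particular the manager returns the same output. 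Combined with the injectivity of $m \mapsto L \cdot m \cdot R$, this shows that for each fixing of the remaining chunks, agent $i$ can be silent for at most one value of its own chunk-product. Since the product of a uniformly random chunk is uniform over $M$, each of the $w(N)-1$ non-manager agents therefore emits a communication edge with probability at least $1 - 1/|M| \geq 1/2$, so the expected --- and hence worst-case --- number of sending nodes is $\Omega(w(N))$, giving the claimed communication budget.

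For the depth bound, I would first upgrade Proposition~\ref{prop:parity_size} from \textsc{Parity} to an arbitrary nontrivial group, i.e.\ show $Size(N) = \Omega(N)$. By Cauchy's theorem $M$ contains an element $g$ of prime order $p$, and the encoding $0 \mapsto e$, $1 \mapsto g$ reduces counting modulo $p$ to state tracking on $M$; the single-agent UHAT argument behind the $\Omega(N)$ bound for \textsc{Parity}~\citep{amiri2025lower} applies to modular counting, and conservation of size (Proposition~\ref{prop:conservation}) then transfers the $\Omega(N)$ size bound to any multi-agent system. Plugging $Size(N) = \Omega(N)$ into~(\ref{eq:depth-size-ineq}) yields $Depth(N) \geq Size(N)/w(N) = \Omega(N/w(N))$.

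I expect the main obstacle to be the adaptivity of the communication pattern in the first bound: because which agents speak can itself depend on the input, one cannot simply assert that all $w(N)-1$ non-manager agents speak on every input, and the averaging argument above --- showing each can be silent for at most one value of its chunk-product --- is what makes the bound rigorous and is precisely where nontriviality of the group is used. The secondary technical point is verifying that the single-agent hardness of \textsc{Parity} indeed carries over to mod-$p$ counting for the size bound. Together the two bounds are tight, matching the $\mathcal{O}(\log w(N) + N/w(N))$ depth and $w(N)$ communication of the protocol of Proposition~\ref{prop:state-tracking-protocol}.
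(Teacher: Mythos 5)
Your proposal is correct and follows essentially the same route as the paper: the communication bound comes from the observation that every agent's chunk influences the output (so every agent must emit at least one message), and the depth bound comes from feeding the $\Omega(N)$ size lower bound into inequality~(\ref{eq:depth-size-ineq}). Your write-up is in fact more careful than the paper's two-line appendix argument --- the bijectivity of $m \mapsto L\cdot m\cdot R$, the treatment of input-dependent (adaptive) communication patterns via the averaging step, and the explicit extension of the $\Omega(N)$ size bound from \textsc{Parity} to arbitrary nontrivial groups are all details the paper leaves implicit.
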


We summarize our results for state tracking below:

\begin{center}
\colorbox{gray!15}{%
  \parbox{0.98\linewidth}{\textbf{Tradeoffs for State Tracking}
  Assume $w(N) \in [N]$ agents, each provided a disjoint contiguous portion of the input.
  Then
  \begin{enumerate}[noitemsep]
      \item Computation depth $\mathcal{O}\left(\log w(N) + \frac{N}{w(N)}\right)$
      \item Number of agents:  $w(N)$ and chunk size: $\frac{N}{w(N)}$
      \item

      Communication budget $\mathcal{O}(w(N))$
      
      \item Size: $N$
  \end{enumerate}
  are realizable for performing state tracking. Communication budget and size are optimal. Computation depth is optimal at least up to $\mathcal{O}(\log w(N))$.
  }}
\end{center}

\subsection{Multi-Hop Reasoning}
\label{subsec:khop}
We instantiate the third regime with $k$-hop reasoning \citep[e.g.,][]{yang2024large,wang2025learning, yao2025language}. In this task, we have a domain $\mathcal{D}$ of objects and a vocabulary $\mathcal{F}$, intended to denote functions. We have a set of $N$ facts $f(x) = y$ ($f, x, y \in \Sigma_{N}$) contextually given, where for each $x$ and $f$ at most one such fact is included. Each agent receives a disjoint equal sized partition of the set of facts, and a common query of the form $f_1(\dots (f_k(x))\dots )$ where $f_i \in \mathcal{F}$, $x \in \mathcal{D}$.  The overall size of the input is $N+k$; each agent has $\frac{N}{w}$ facts and the $k$-hop query.

\begin{proposition}
Let the number of agents be $w : \mathbb{N} \rightarrow \mathbb{N}$ ($w(N) \in [N]$).
    The $k$-hop composition task with $N$ facts can be solved with computation depth $\mathcal{O}(k)$, communication budget $\mathcal{O}(k)$, and size $\mathcal{O}(w(N) \cdot k)$. 
    The communication budget is optimal.
    The computation depth $\mathcal{O}(k)$ is optimal at least up to a $\log(N+k)$ factor.
\end{proposition}
The proof is in Section~\ref{app:subsec:khop}. The idea is that worker agents perform an iterative lookup, where each agent tries to find the next answer in the own context, with one step for each of the $k$ hops $f_k(x)$, $f_{k-1}(f_k(x))$, \dots, $f_1(\dots (f_k(x))\dots )$.
The regime of this task is different from the previous ones; in the worst case, there is no reduction of computation depth when increasing the number of agents: Depending on how the facts are distributed among the agents, computation depth and communication budget may be $\Omega(k)$ in the worst case, as long as more than one agent are involved.
The intuition here is that the relevant facts can be distributed between different agents, making iterative lookup the optimal strategy. 
We thus have:
\begin{center}
\colorbox{gray!15}{%
  \parbox{0.98\linewidth}{\textbf{Tradeoffs for $k$-hop Composition}
  for $k$-hop composition and $N$ facts, when $w(k) > 1$:
  \begin{enumerate}[noitemsep]
      \item Computation depth $\mathcal{O}(k)$
      \item Number of agents:  $w(k)$ and chunk size: $\frac{N}{w(k)}$
      \item

      Communication budget $\mathcal{O}(k)$
      
      \item Size: $\mathcal{O}(wk)$
  \end{enumerate}
  are  realizable for $k$-hop composition. Communication budget is optimal. Computation depth is optimal at least up to a $\log (N+k)$ factor.
  }}
\end{center}

\section{Experimental Validation}\label{sec:experiments}
In this section, we experimentally validate whether the protocols of Section \ref{sec:results} work in practice and if computation depth and communication exhibit the three predicted regimes.
We evaluate Llama-3.3-70B-Instruct-Turbo and Llama-3.1-8B-Instruct-Turbo (results in Appendix~\ref{app:additional_exp}) on associative recall, state tracking and $k$-hop reasoning tasks in order to empirically validate each of the three regimes analyzed in theory.
We employ pretrained LLMs which are prompted with their roles in the protocol and the instructions to solve the task. We use hard coded communication protocols similar to the protocol implementation of~\cite{zhang2024chain}. For more details please refer to Section~\ref{app:experimental-details}.

\subsection{Recall}
We validate experimentally the abilities of different multi-agent systems to perform associative recall using the needle-in-a-haystack test~\citep{LLMTest_NeedleInAHaystack}. This task involves finding a "needle" (an answer to some query) in a large document (the "haystack").
We use self-consistency with majority voting~\citep{wang2022self} as our baseline and use an implementation of Chain-of-Agent (CoA) for the optimal protocol, given its similarity.

\setcounter{subfigure}{0}
\begin{figure}[h]
\centering
\subfigure[Heatmap of accuracy vs needle depths and context lengths for Majority Voting. Performance degrades for large contexts.]{
  \label{fig:heatmap_majvote}
  \includegraphics[width=0.33\linewidth]{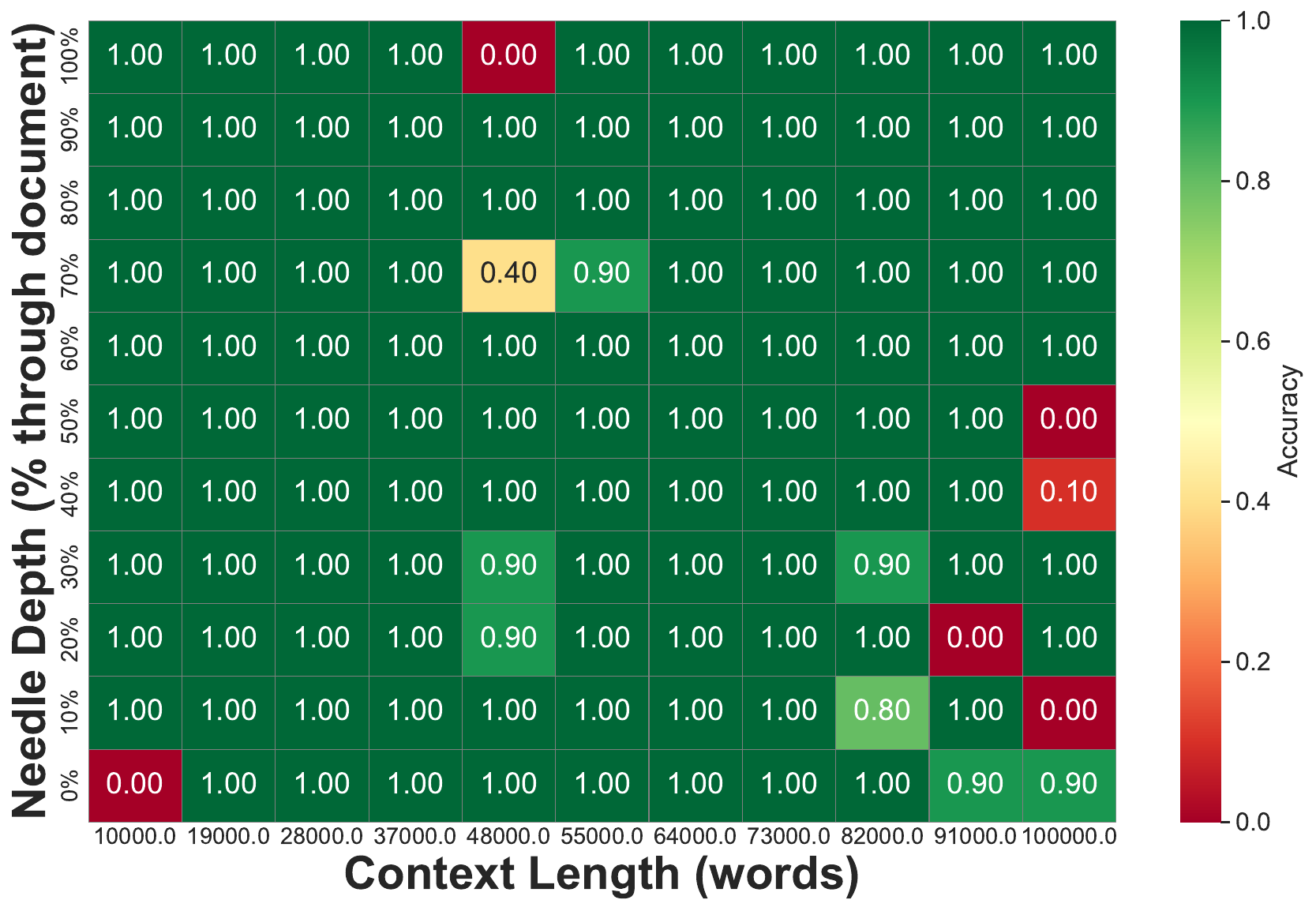}
}
\hspace{1mm}
\subfigure[Heatmap of accuracy vs needle depths and context lengths for CoA. Performance remains constant across context size.]{
  \label{fig:heatmap_coa}
  \includegraphics[width=0.33\linewidth]{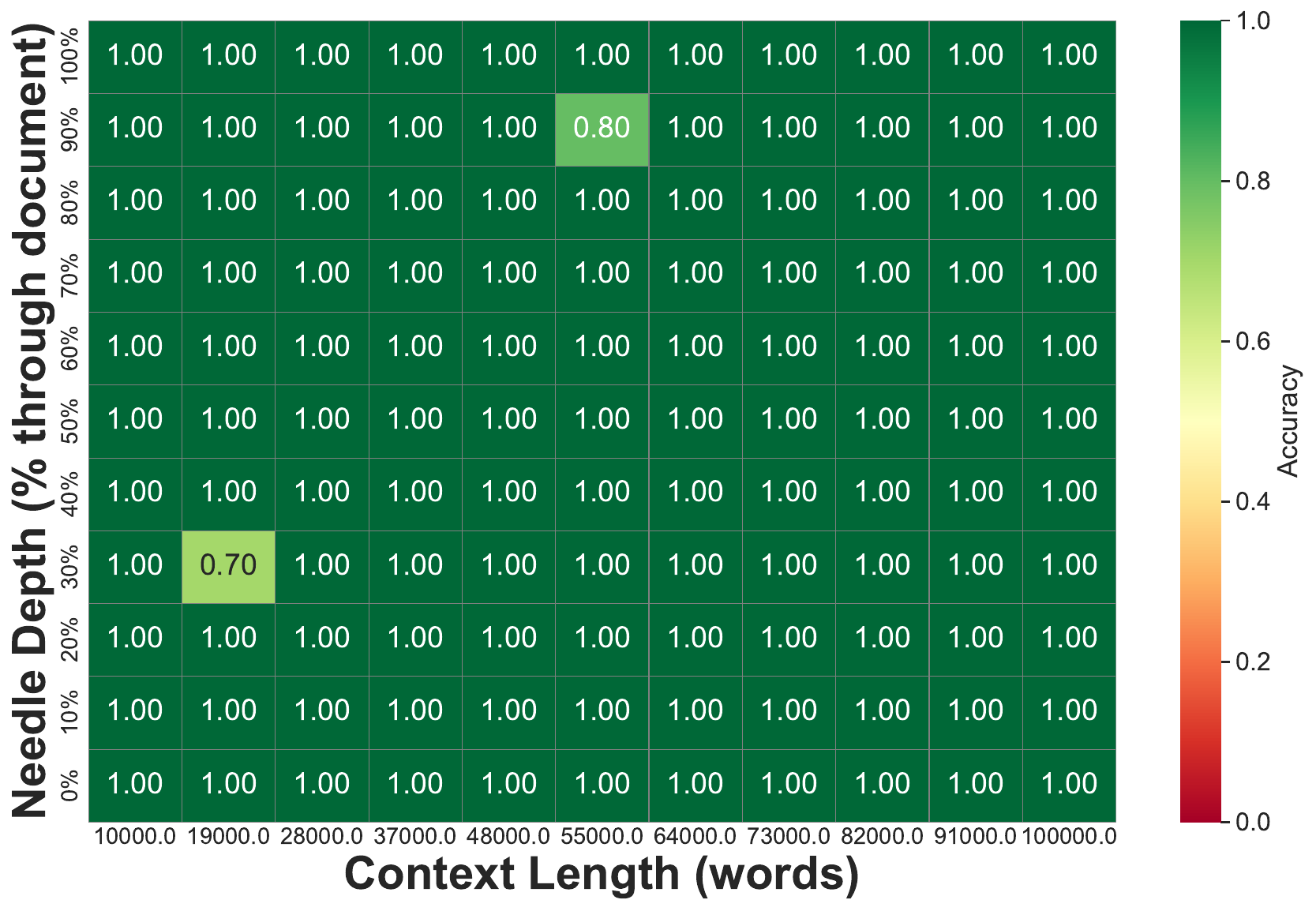}
}
\hspace{1mm}
\subfigure[Token usage for CoA. Usage remains constant across chunk size and context length, which is consequent with our theory.]{
  \label{fig:heatmap_coa}
  \includegraphics[width=0.23\linewidth]{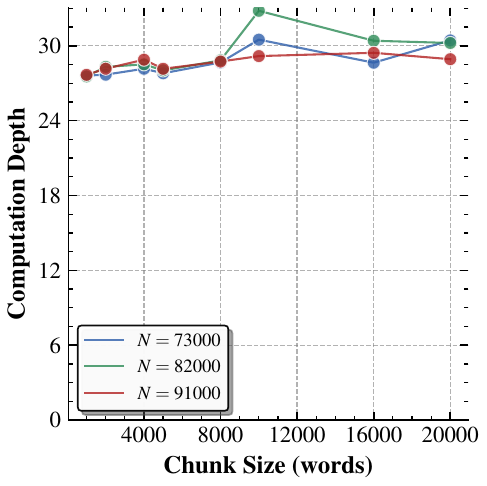}
}
\caption{Results for the needle-in-a-haystack test.}
\label{fig:heatmap_needle}
\end{figure}
\paragraph{Needle-in-a-haystack}Figure~\ref{fig:heatmap_needle} shows the performance of both Majority Voting and CoA across needle depth (how far in the corpus the needle is) as well as context length. As it can be seen across Figure~\ref{fig:heatmap_majvote}, Majority Voting degrades in performance as sequence length increases. We note also that extremities are also slightly problematic; Majority Voting also fails when needle is at the beginning or end of the corpus. CoA does not suffer from such degradation: the ``divide-and-conquer" strategy of the protocol makes the task more manageable.

\subsection{State Tracking}
We next experimentally evaluate multi-agent systems on state tracking tasks.
We evaluate models on \textsc{Parity} i.e., determining if the number of 1s in a bitstring is even or odd.

\setcounter{subfigure}{0}
\begin{figure}[h]
\centering
\subfigure[Llama-70B accuracy on \textsc{Parity} for different sequence lengths. Prefix Sum represents the theoretically optimal communication protocol.]{
  \label{fig:parity_acc}
  \includegraphics[width=0.40\linewidth]{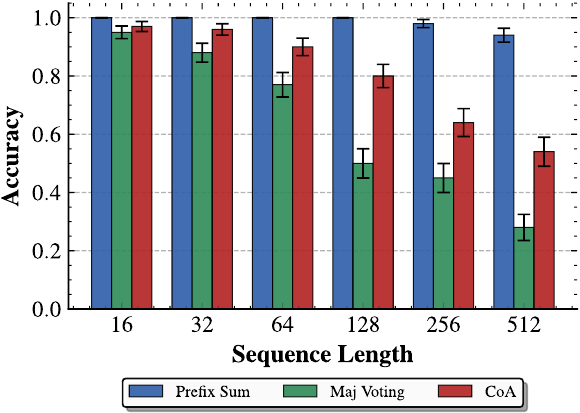}
}
\hspace{3mm}
\subfigure[Computation depth against the total amount of communication used. This trend is consistent with the $N/w(N)$ computation depth vs $w(N)$ total communication tradeoff predicted in Section~\ref{subsec:state_tracking}.]{
  \label{fig:parity_comm}
  \includegraphics[width=0.50\linewidth]{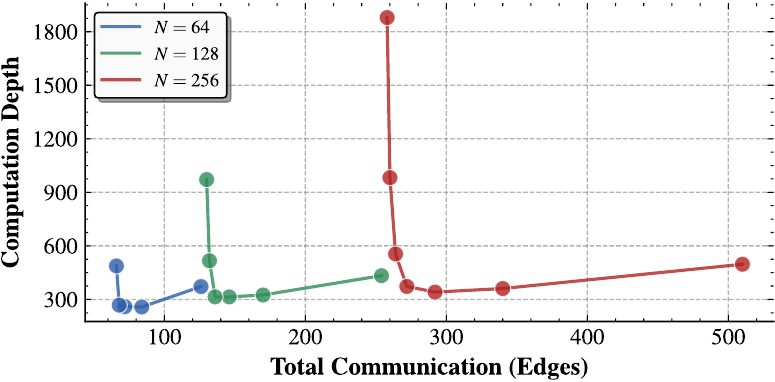}
}
\caption{Empirical validation for \textsc{Parity}.}
\end{figure}

Figure~\ref{fig:parity_acc} shows that Prefix Sum consistently outperforms all other methods, especially as sequence length grows.
Compared to Majority Voting, CoA degrades less with longer sequences, supporting our intuition that chunking complex reasoning into shorter parts helps.
In terms of communication, Figure~\ref{fig:parity_comm} shows the tradeoff between the computation \textit{depth} and the total amount of communication for Prefix Sum, calculated by summing the average token usage at every level. This trend is consequent with the theoretically predicted tradeoffs between communication and computation. Indeed, in Section~\ref{subsec:state_tracking} we predict a tradeoff between depth $N/w(N)$ and total communication $w(N)$.
We note, however, a slight increase in computational depth for high levels of communication. This is due to poor instruction following; models add a constant token overhead by repeating the query and explaining the procedure, especially noticeable in high-communication regimes.
\subsection{$k$-hop Reasoning}
Finally, we evaluate models on a $k$-hop reasoning task, where agents are given \textit{facts} (e.g., "Paula is the boss of Mary") and a \textit{query} (e.g., "Who is the boss of the friend of George?"). Task difficulty depends on the number of facts and query hops. We compare two protocols: Majority Voting and Iterative Query, an implementation of the protocol proved optimal in Section~\ref{subsec:khop}.
\setcounter{subfigure}{0}
\begin{figure}
\centering
\subfigure[Accuracy vs. number of hops in the query for 100 facts. Iterative Query outperforms Majority Voting baseline as number of hops increases.]{
  \label{fig:100facts_khop}
  \includegraphics[width=0.27\linewidth]{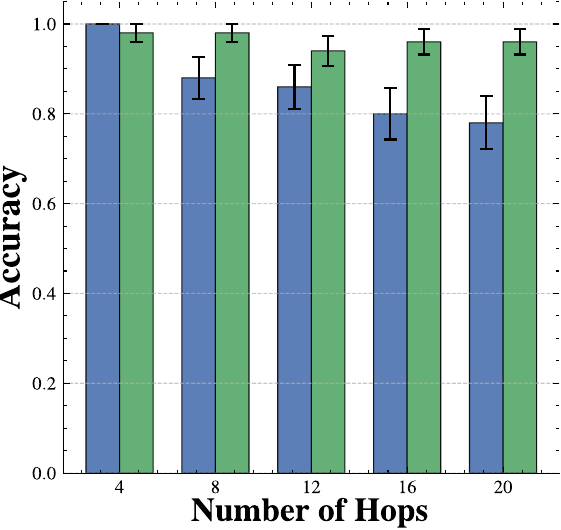}
}
\hspace{3mm}
\subfigure[Accuracy vs number of hops in the query for 500 facts. The difference in performance is more pronounced in this regime.]{
  \label{fig:500facts_khop}
  \includegraphics[width=0.27\linewidth]{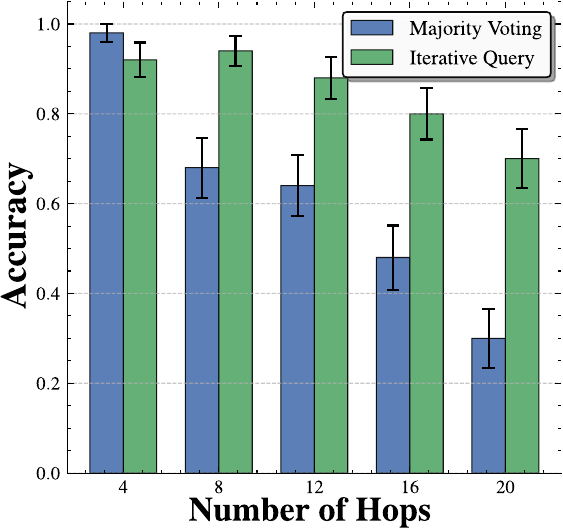}
}
\hspace{3mm}
\subfigure[Computation depth vs. number of hops in the query. Computation depth shows an increasing trend as hop count increases.]{
  \label{fig:khop_tokens}
  \includegraphics[width=0.27\linewidth]{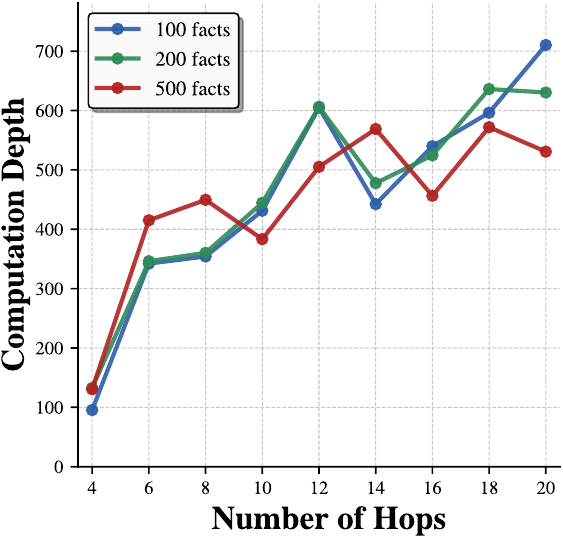}
}
\caption{Empirical validation for $k$-hop reasoning.}
\label{fig:khop_accuracy_plot}
\end{figure}
As we can see in Figure~\ref{fig:khop_accuracy_plot}, Iterative Query generally outperforms Majority Voting. This trend is accentuated as the number of hops increases. We note that for the smallest number of hops (four), there are cases where Majority Voting outperforms Iterative Query; we posit that the probability of failing at a given round seems to outweigh the difficulty of retrieving facts in a larger corpus in this regime. Finally, we analyze how the computation depth varies as a function of number of hops in the query. As can be seen in Figure~\ref{fig:khop_tokens}, the trend we observe is consistent with theory; as depth of queries increases, so does the computation depth.

\section{Discussion and Conclusion}
\label{sec:conclusion}
In summary, our work provides principled foundations for understanding the algorithmic benefits and limitations of collaborative multi-agent systems. By formalizing communication and resource tradeoffs, we bridge theoretical analysis with empirical observations, shedding light on when collaboration enhances reasoning efficiency and when it imposes inherent costs. These results open new avenues for designing reasoning systems that balance scalability, expressivity, and performance..

\textbf{Relationship to Self-Consistency} Our results show that multi-agent systems with sophisticated communication protocols outperform majority-vote strategies; \cite{mirtaheri2025let} find that self-consistency with polynomially many agents yields limited gains on tasks hard for Transformers. In Appendix~\ref{sec:comparison-to-voting}, we show a separation result between an optimal protocol and majority voting strategies. We give a short rundown of this result here. Informally, we define a majority voting strategy using a set of $w(N)$ agents, each of which is run on the full input and terminates with a single-token response; the overall system returns the most common response. Using this definition, we show a \textit{super-polynomial} separation between majority voting and PrefixSum:
\begin{proposition}[Informal]
   Consider a majority voting scheme with $w(N)$ agents computing \textsc{Parity} over length-$N$ inputs with a CoT length (or computation depth) of $\mathcal{O}(\log N)$. This scheme must have $w(N) = 2^{\Omega(N^c)}$ agents for some constant $c > 0$. 
\end{proposition}
In contrast, PrefixSum attains perfect accuracy with computation depth $\mathcal{O}(\log N)$ and $w(N) = N$. The proof of this statement as well as an in-depth discussion of its implications can be found in Appendix~\ref{sec:comparison-to-voting}.

\textbf{Implications for Protocol Design} Our results characterize depth (wall-clock time) and communication cost, identifying three regimes of multi-agent systems (Figure~\ref{fig:regimes}) with implications for multi-agent LLM design. Systems with many workers and a single manager (e.g., CoA~\cite{zhang2024chain}, LLM$\times$MapReduce~\citep{zhou-etal-2025-llmxmapreduce}, NexusSum~\citep{kim-kim-2025-nexussum}, AgentSimp~\citep{fang-etal-2025-collaborative}, Multi$^2$~\citep{cao2025multi2}) only shift the context bottleneck to the manager, risking errors when aggregating many responses. To address this, we propose a prefix-sum–style cascade: iterative summarization reduces the final-agent bottleneck, with branching factor and depth as tunable hyperparameters. 
We also believe the Iterative Query protocol for $k$-hop reasoning could have practical relevance.
For complex queries, a similar architecture may be promising: a manager first decomposes the main query into subqueries, each processed through iterative worker–manager communication rounds, with the manager updating the query after every round

\textbf{Relation to Parallel Computing Frameworks.}
Our setting is related to classic models of parallel and cooperative computation: communication complexity, PRAM \citep{fortune1978parallelism}, Massively Parallel Computation \citep{im2023massively}, BSP \citep{valiant1990bridging}, and LOCAL \citep{peleg2000distributed}. Conceptually, our measures of \emph{Computation Depth} and \emph{Size} mirror \emph{Time} (parallel steps) and \emph{Work} (total operations) in PRAM. The key difference is that our analysis relies on the expressivity of the Transformer architecture. This formalization leads to the sharp contrast between Associative Recall (Section~\ref{sec:associative-recall}), which a Transformer can perform in one attention step, and State Tracking (Section~\ref{subsec:state_tracking}), which is more challenging and requires an explicit multi-step reasoning chain. Other agent choices yield different predictions: multi-agent systems instantiated with recurrent networks would make retrieval harder \citep{bhattamishra2024separations, arora2023zoology} and may enable more efficient state tracking than Transformers; unconstrained agents solve both tasks in constant time; and Turing machines or RAM models require linear time for both. Frameworks that treat agents as unconstrained processors (with memory/communication as the main bottleneck) predict low depth for both tasks, while frameworks that model agents as RAM-like units predict high depth. It is only by considering the capabilities of Transformers that we achieve a more fine-grained analysis appropriate to LLM-based multi-agent systems, predicting tradeoffs realized by actual LLMs.

\textbf{Limitations and Future Work} 
There are many directions in which this work could be extended. Empirically, ideas discussed in the above paragraph could be incorporated into the design of novel multi-agent systems.
Theoretically, our work could be extended to different algorithmic tasks e.g., graph reachability or to different multi-agent paradigms such as adversarial games or cooperative reinforcement learning tasks, where agents collaborate to reach a common goal.

\section*{Reproducibility Statement}
We provide a complete reproducibility package to facilitate replication of our results. Code to reproduce all experiments can be found at \url{https://github.com/michaelrizvi/coa-algorithmic}. Appendix~\ref{app:experimental-details} details the experimental setup, including the hyperparameters and parameter ranges considered, as well as all system prompts used. Complete proofs for all propositions presented in the paper are included in Appendix~\ref{app:proofs-general}.

\bibliography{bibliography}
\bibliographystyle{iclr2026_conference}

\newpage\appendix
\section*{Appendix}
\addcontentsline{toc}{section}{Appendix}

\renewcommand{\contentsname}{Appendix Contents}
\startcontents[appendix]
\printcontents[appendix]{}{1}{}

\section{Use of Large Language Models}

We used a large language model (GPT-5) in two ways during the research and writing of this paper. 
First, we used it to refine phrasing and to provide revision suggestions on draft manuscripts. 
Second, we used it to conduct preliminary literature reviews of the field.
Research ideas, proof, and empirical analysis were conducted by the authors.

\section{Clarifications}\label{app:clarifications}
\begin{enumerate}[label=(\textbf{\roman*})]
  \item \textit{Similar work has already been done in the parallel computation and communication complexity literatures. What is the key difference here?}

Indeed, there are various frameworks for modeling parallel and cooperative computation (such as communication complexity, Parallel RAM \citep{fortune1978parallelism}, Massively Parallel Computation \citep{im2023massively}, BSP \cite{valiant1990bridging}, LOCAL \citep{peleg2000distributed}).
There are some conceptual similarities, for instance our analysis of Computation Depth and Size mirrors the concepts of Time (number of parallel steps) and Work (total operations across all processors) in the Parallel RAM model.

Our work differs in making essential use of arguments about the expressivity of the Transformer architecture, and no prior framework would have enabled us to show the set of results in this paper. For instance, the very different behavior between Associative Recall (Section~\ref{sec:associative-recall}) and State Tracking (Section~\ref{subsec:state_tracking}) builds on properties of the Transformer architecture:
A Transformer performs retrieval in one step with attention, whereas state tracking is challenging unless a full reasoning chain is used. In contrast, if one instantiates a multi-agent system based on recurrent networks, then retrieval would be challenging~\citep{bhattamishra2024separations, arora2023zoology} and state-tracking could perhaps be done more efficiently than a Transformer-based system.
An unconstrained agent performs both in constant time; a Turing machine or a RAM needs a linear number of steps in both cases.
Models based on unconstrained individual agents (with memory and communication as the primary bottleneck, as in PRAM) predict low computation depth in both tasks; models where individual agents are based on RAM (as in PRAM) predict similarly high depth in both tasks.
It is only by considering the specific abilities of Transformers that we achieve a more detailed analysis appropriate to LLM-based multi-agent systems, predicting tradeoffs realized by actual LLMs.

  \item
  \textit{What are the impacts/takeaways of the experimental results for practitioners?}

  The main objective of our experiments is to validate the theoretical claims we provide. To this end, we provide experimental results which corroborate token usage with theoretical predictions about computation depth.
  Moreover, we report accuracy and compare to a self-consistency baseline in order to illustrate that using the optimal protocol for a given task can also lead to performance gains. We stress that the implementation of the protocols we use is kept simple. Our work is an analysis paper, not a methods paper and thus its main aim is to better understand the interplay of agent count, token usage and communication in multi-agent systems.
  Refinement and engineering of the considered protocols is left for future work
  \item \textit{What is the practical impact of the results? Do the results relate to any specific NLP tasks?}

  \textsc{Recall}, \textsc{Parity} and $k$-hop reasoning are fundamental problems that serve as simple models of reasoning tasks and are of broad interest. \textsc{Recall} has been shown by~\cite{arora2023zoology} to play in important part in the impressive linguistic abilities of modern transformer-based LLMs. \textsc{Parity} and other state tracking problems are of great interest as they are directly connected to reasoning problems such as code evaluation, entity tracking in linguistics and generally require some form of world modeling abilities~\citep{merrill2024illusion,kim2023entity, rizvi2024simulating}. Moreover, \textsc{Parity} is a prime example of a sensitive function, which have been shown to be difficult for Transformer models~\citep{hahn2020theoretical,hahn2024sensitive}. Finally, $k$-hop reasoning is foundational to many aspects of reasoning involving the composition of multiple reasoning steps.

  \item \textit{Why are there no experiments on large reasoning models such as OpenAI-o3 or DeepSeek-R1?}
  
  In the experiments, we typically prompt models to use a specific reasoning strategy for consistency.
  LRMs typically reason in the way they see best, not necessarily respecting the reasoning strategy given in the prompt. Moreover, using the TogetherAI API, such models typically reason using a large CoT which is hidden to the user, thus rendering it impossible to monitor the exact CoT used by the model.
  
  Moreover, in many experiments, we evaluate token usage as a proxy for certain metrics e.g. computation depth. The long reasoning chains produced by these models make it difficult to see any trends in the token usage as problem complexity increases.

\item \textit{Section~\ref{subsec:formalization} assumes that each agent can only receive bounded communication at each step. What about protocols where each agent can see each other agent's tokens, as in  GroupThink~\citep{hsu2025group}?}

One of our key aims here is to understand the cost of communication introduced in multi-agent setups and how it trades off with computation depth, which is easiest to do if we consider a framework that explicitly counts each communication link. In contrast, this is not straightforward in scenarios where all agents have access to the same information.

It may be possible to extend the current framework to protocols with unbounded communication, but this would require careful consideration of what it would mean for transformers to access the contexts of all agents, and how to quantify the amount and cost of communication in scenarios where, at least in principle, all agents have access to the same information.

\item \textit{The paper assumes that the input is partitioned between agents. What about frameworks such as multi-agent debates and voting, where multiple agents process the same input?}

Indeed, various papers have proposed strategies where different agents process the same input and solve the problem through mechanisms such as debating or voting
\citep{wang2022self, dhuliawala-etal-2024-chain, du2023improvingfactualityreasoninglanguage, wang2024mixtureofagentsenhanceslargelanguage}.
An important special case is Self-Consistency \citep{wang2022self}, where agents solve a problem independently and vote at the end.

These techniques fall outside of the scope of our research, which focuses on the \emph{expressivity} of collaborative and multi-agent reasoning strategies.
In contrast, these strategies fundamentally leverage the inherent \emph{stochasticity} of trained LLMs to reduce noise and overcome failures.
However, these techniques do not increase the intrinsic expressivity of the multi-agent system (e.g., using multiple agents to solve the same state tracking problem and performing voting may reduce error rate, but does not increase the worst-case expressivity of the model.).
We further take this up in Appendix~\ref{sec:comparison-to-voting}.

\item \textit{Some of the protocols described in the theory seem quite intricate. Can the protocols described in the theory be linked to protocols from the applied literature?}

Indeed, our results provide a rigorous foundation for certain approaches that have links to ideas from the applied literature.
Related to the first regime in our theory (Section~\ref{sec:associative-recall}), in various studies, a single query is broadcast and worker agents each inspect only their local shard for an answer, returning a candidate and confidence to a lightweight aggregator that selects or fuses the outputs (e.g., \citep{zhou-etal-2025-llmxmapreduce,chang-etal-2025-main,arXiv:2402.05359,arXiv:2412.05838,yang2025splitrag}. 
In our second regime (Section~\ref{subsec:state_tracking}),  recursive aggregation is optimal; this again is related to some existing approaches in the applied literature: agents iteratively compose partial states via tree-structured reduce/merge operators (e.g., parallel scan/prefix-sum and divide-and-conquer), so a global answer is assembled from local summaries with shallow, e.g. logarithmic, communication depth (e.g., \citep{kim-kim-2025-nexussum,zhou-etal-2025-llmxmapreduce,xiao2025long,arXiv:2506.16411}).
In our third regime (Section~\ref{subsec:khop}), we prove sequential, multi-hop handoffs optimal; this again links to approaches where the system must iteratively query different agents’ shards in a back-and-forth chain—passing intermediate facts as state—so communication depth scales with the hop count (e.g., \cite{yang2025splitrag,liu2025hmraghierarchicalmultiagentmultimodal,nguyen2025marag,wang2024learningbreakknowledgeenhancedreasoning}).

Importantly, our results rigorously clarify in which regimes such communication protocols are optimal. Simultaneously, as discussed in Section~\ref{sec:conclusion}, our results show how, in a rigorous sense, such more sophisticated protocols may improve over simpler but popular strategies such as chain-of-agents or majority voting.

\end{enumerate}

\section{Proofs}
\subsection{Notation}
    We denote with $\mathbb{N}$, $\mathbb{Z}$ and $\R$ the set of natural, integers and real numbers, respectively. We use bold letters for vectors~(\textit{e.g.} $\vec{v} \in \R^{d_1}$), bold uppercase letters for matrices (\textit{e.g.} $\mat{M} \in \R^{d_1 \times d_2}$). All vectors considered are column vectors unless otherwise specified. 
    The $i$-th row and the $j$-th column of a matrix $\mat{M}$ are denoted by $\mat{M}_{i,:}$ and $\mat{M}_{:,j}$.
    Let $\Sigma$ be a fixed finite alphabet of symbols, $\Sigma^*$ the set of all finite strings~(words) with symbols in $\Sigma$ and $\Sigma^n$ the set of all finite strings of length $n$. We use $\varepsilon$ to denote the empty string. Given $p,s \in \Sigma^*$, we denote with $ps$ their concatenation.
\subsection{Proofs for General Results (Section~\ref{sec:general})}\label{app:proofs-general}
\begin{proposition}[Conservation of size, repeated from Proposition~\ref{prop:conservation}]
    \label{prop:conservation_appendix}
   Any protocol can be converted into an equivalent single-agent protocol with the same size up to constant factor.
\end{proposition}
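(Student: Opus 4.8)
The plan is to simulate the entire multi-agent DAG $\mathcal{A}(x)$ inside the single context of one agent, emitting one token per node of $\mathcal{A}(x)$ so that the total node count is preserved up to a constant factor. First I would fix, for each input $x$, a topological ordering of the nodes $T_i^{(t)}$ of $\mathcal{A}(x)$; such an ordering exists because $\mathcal{A}(x)$ is a DAG, and it lists every predecessor of a node---whether reached by a CoT edge from the same agent or by a communication edge from another agent---before the node itself. The single agent, which by Definition~\ref{def:mas} receives the entire input $x$ (the only chunk when $w=1$), then decodes autoregressively by visiting the nodes in this order: at each step it emits exactly the symbol labeling the corresponding node's outgoing edge (a CoT symbol, or the communicated symbol for a communication edge), interleaved with a constant number of bookkeeping tokens recording which original agent is being simulated and, for communication steps, the source node. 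The final emitted token is the label of the manager's last CoT edge, which equals $f(x)$, so the single-agent protocol computes the same function.

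The crux is to make this linearized context behave, at each simulated step, exactly like the local context $\xi(i)$ of the original agent $i$. To achieve this I would tag every emitted token with the identifier $\ID_i$ of the agent it belongs to, and augment the query/key maps with an agent-ID feature granting a large score bonus to same-agent matches. Because attention is unique-hard, this bonus guarantees that a query issued while simulating agent $i$ always selects among the tokens carrying $\ID_i$, so the simulation of agent $i$ effectively \emph{sees} precisely the subsequence equal to $\xi(i)$ and reproduces the original attention pattern. Communication edges are the one place where cross-agent information must flow: I would handle an edge from $T_j^{(t)}$ to $T_i^{(t+1)}$ by a dedicated retrieval step that, keyed by the source identity recorded in the bookkeeping token, copies the communicated symbol from its (already emitted) position to the current one, after which the recipient resumes its own confined computation. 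This matches the semantics of Definition~\ref{def:agent-computation}, where a received symbol is simply appended to the recipient's context.

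For the size bound, each node of $\mathcal{A}(x)$ contributes only a constant number of tokens to the single agent's string, and the topological order visits each node exactly once, so the single-agent protocol has size $\mathcal{O}(Size(N))$ with no blow-up. For UHAT-expressibility with uniformly bounded heads and layers, I would observe that simulating each original agent uses a bounded number of heads and layers by assumption, while the additions above---testing and matching agent IDs, and the single copy step for communication---require only a constant number of extra heads and layers, realized via exact score gaps independent of $N$. I expect the main obstacle to be precisely this attention-confinement argument: one must verify that the ID-bonus construction does not perturb the original relative attention scores, that the rightmost-on-ties rule still resolves correctly once all agents' tokens coexist in one sequence, and that retrieval of communicated symbols stays unambiguous using only the bounded bookkeeping information rather than unbounded addressing.
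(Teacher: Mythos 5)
Your proposal is correct and follows essentially the same route as the paper's proof: serialize the whole DAG into one agent's context with agent-ID tags interleaved alongside the content tokens, restrict each simulated step's attention to the tokens carrying the current agent's ID, handle communication edges by retrieving the already-emitted communicated symbol, and observe that each node costs only a constant number of tokens. The only difference is that the paper fixes a concrete round-robin schedule (the next active agent in a fixed ordering) where you leave the topological ordering abstract, but this does not change the argument.
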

\begin{proof}
By constructing a single agent that alternates between simulating each of the agents of the original protocol, computed by a Transformer $T_{\text{SingleAgent}}$.

Recall that all agents have the same Transformer parameters, $T_{\text{MultiAgent}}$.

We extend $\Xi$ with tokens $\langle \texttt{AgentID} \rangle$, one for each agent ID. We assume some fixed arbitrary ordering over all agent IDs.

The tokens of the reasoning chain record for each step which agent it is associated with.
It alternatingly consists of $\langle \texttt{AgentID} \rangle$ and $\langle \texttt{CoT/communication token}\rangle$.
Say, the even positions are the CoT/communication tokens (the nodes of the original chains); the odd positions are agent indices.

Now, the transformer $T_{\text{SingleAgent}}$ alternates between two modes, depending on the position.

At an $\langle \texttt{AgentID} \rangle$ position, the transformer knows from the input which agent to simulate.
It  then restrict attention to those tokens that are relevant to this agent, and otherwise performs the computations done by $T_{\text{MultiAgent}}$.
The top layer then outputs whatever $T_{\text{MultiAgent}}$ would have output. 

In the other mode, $T_{\text{SingleAgent}}$ needs to figure out which agent's turn it is next: It is the first agent in the ordering which is higher than the last agent but hasn't terminated yet. To achieve this, first retrieve the last agent's last previous turn, using a single attention head. Then find the first non-TERMINATE action succeeding that turn; this  gives us the agent whose turn it is. The transformer then outpts this agent's ID as the next token.
\end{proof}

\begin{proposition}[Repeated from Proposition~\ref{prop:tradeoff-depth-communication}]
Consider a task with a multi-agent system whose communication budget is $\mathcal{O}(1)$ in $N$ across all $w \in [N]$.
Then this task has a single-agent CoT with depth (and hence size) $\mathcal{O}(1)$.
\end{proposition}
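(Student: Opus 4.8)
The plan is to instantiate the hypothesis at the finest partition $w(N) = N$, where each of the $N$ agents holds a single input symbol and the communication budget is bounded by some constant $C$ independent of $N$. The single agent I build will \emph{not} replicate the system step-by-step (that would cost size and depth $\Omega(N)$, as in associative recall); instead it will exploit that only $\mathcal{O}(1)$ of the $N$ agents can possibly influence the manager's output.

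First I would formalize this \emph{influence} bound. Call an agent $j$ \emph{relevant} on input $x$ if there is a directed path in $\mathcal{A}(x)$ from one of its nodes to the manager's final, answer-producing node. Since within a single agent only CoT edges occur, any influence leaving agent $j$ must traverse an outgoing communication edge; hence every relevant non-manager agent owns at least one node with an outgoing communication edge lying on a path to the manager. Distinct agents own distinct nodes, so this yields an injection from relevant non-manager agents into the set of communication-sending nodes, whose cardinality is exactly the communication budget $\le C$. Thus at most $C+1$ agents are relevant, and $f(x)$ is a function of their chunks (each a single symbol). The same count shows each agent receives at most $C$ messages, so every agent's local behavior is a fixed function of $\mathcal{O}(1)$ tokens drawn from the polynomial-size alphabet $\Xi_{|x|}$.

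Next I would topologically sort the $\le C$ communication events along the influence sub-DAG into $\le C$ \emph{rounds}, and realize the single agent as a constant-depth UHAT Transformer simulating these rounds. Two ingredients are needed. First, each agent's between-round computation — even if realized by a long CoT in the original system — only maps $\mathcal{O}(1)$ input tokens to its next message, so for each fixed length $n$ it is a lookup table on a $\mathrm{poly}(n)$-size domain, implementable by one MLP layer (we are permitted growing width $d$). Second, message routing is done by content-based attention: to simulate ``the manager receives the message of the unique agent whose local state satisfies condition $\phi$,'' the single agent issues one attention query encoding $\phi$ and reads off the matching position, exactly as in the associative-recall construction. Chaining at most $C$ rounds costs $\mathcal{O}(C) = \mathcal{O}(1)$ layers, giving a single-agent CoT of depth, and hence size, $\mathcal{O}(1)$.

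I expect the main obstacle to be the routing step: message flow is \emph{input-dependent}, so the single agent must discover which $\mathcal{O}(1)$ positions matter without scanning the whole input. This is nonetheless achievable in $\mathcal{O}(1)$ depth because each sending decision is \emph{local} to its sender — a function of that agent's own chunk and previously received messages — so the relevant position can be located by a single content-based attention step, and the bounded communication budget caps the depth of the resulting cascade at $C$. Some care is also required for broadcasts, where one sending node produces edges to all agents; but since relevance is measured by \emph{outgoing} influence to the manager, a broadcast still contributes only one unit to the count above and does not enlarge the relevant set.
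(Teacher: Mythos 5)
Your proposal is correct and follows essentially the same route as the paper: instantiate the hypothesis at $w(N)=N$ so that each agent's context is $\mathcal{O}(1)$ tokens and its (possibly long) local CoT collapses to a hard-coded lookup, then fold the whole system into a single constant-depth Transformer that flags would-be senders, routes the $\mathcal{O}(1)$ messages by content-based attention, and finishes with $\mathcal{O}(1)$ CoT steps. Your influence-counting lemma (at most $C+1$ agents can reach the manager) is a clean structural observation that the paper leaves implicit, but it does not change the construction in any essential way.
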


\begin{proof}
    We take $w$ towards $N$, so that the chunk size becomes 1.
    Assume that the communication budget is $\mathcal{O}(1)$ in $N$ across all $w \in [N]$.
    Then we can find a protocol with communication budget $\mathcal{O}(1)$ where the  depth at $w=N$ is $\mathcal{O}(1)$, as each agent only has a bounded number of tokens to process, which can be hard-coded into a UHAT transformer.
    
    Now we use this to construct a protocol computing the function at depth $\mathcal{O}(1)$ at $w=1$.
    To realize this, we increase the Transformer's number of layers to immediately compute the relevant agent's $\mathcal{O}(1)$ computation steps when processing the input in the context window; we code the agent identifiers into the positional embeddings.
    The only exception here is in the communication edges; for this, the transformer flags places where it would emit a communication edge.
    After reading the full context, the transformer performs $\mathcal{O}(1)$ CoT steps to simulate the $\mathcal{O}(1)$ communication steps and re-simulating the affected agents.
\end{proof}

We also present here a technical lemma from the work of~\cite{vardi2021size} we will use for the MLPs in some of our constructions
\begin{lemma}[(Adapted from Lemma 22 of~\cite{vardi2021size}]
\label{lemma:circuit_vardi}
Let $T$ be a threshold circuit with $d$ inputs, $q$ outputs, depth $m$ and size $s$. There is a neural network $N$ with $q$ outputs, depth $m+1$ and size $2s+q$, such that for every input $\mathbf{x} \in \{0,1\}^d$ we have $N(\mathbf{x}) = T(\mathbf{x})$. Moreover, for every input $\mathbf{x} \in \mathbb{R}^d$ the outputs of $N$ are in $[0,1]$.
\end{lemma}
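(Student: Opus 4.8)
The plan is to replace each hard threshold (step) gate of $T$ by a small fixed gadget built from two $\mathrm{ReLU}$ units, and then argue by induction over the circuit layers that the resulting network reproduces $T$ exactly on Boolean inputs while remaining bounded on arbitrary real inputs. The central object is the \emph{clipped ramp} $\phi(z) = \mathrm{ReLU}(z) - \mathrm{ReLU}(z-1)$, which equals $0$ for $z \le 0$, rises linearly from $0$ to $1$ on $[0,1]$, and equals $1$ for $z \ge 1$; equivalently $\phi(z)$ is $z$ clamped to $[0,1]$. Two facts about $\phi$ drive the whole argument: (i) $\phi(z) \in [0,1]$ for every real $z$, and (ii) $\phi$ agrees with a step function at every integer, since the open ramp region $(0,1)$ contains no integer.

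Concretely, I would first normalize $T$ so that every threshold gate has integer weights and an integer threshold, which is classical since any threshold function on $\{0,1\}^d$ is realized by an integer-weight gate with unit margin. A gate computing $\mathbf{1}[\langle w, y\rangle \ge t]$ is then simulated by $\phi(\langle w, y\rangle - t + 1)$: the unit shift is the delicate point, chosen so that the ``accept'' preactivations land in $\{1,2,\dots\}$ and the ``reject'' ones in $\{0,-1,\dots\}$, i.e. strictly outside the ramp. I would then prove by induction over the $m$ layers that, on any Boolean input and with integer weights, all preactivations feeding every gadget are integers, so by (ii) each $\phi$ outputs an exact Boolean value equal to the corresponding gate of $T$; hence the top gadgets reproduce $T(\mathbf{x})$.

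For the resource counts, each of the $s$ gates contributes exactly the two $\mathrm{ReLU}$ units of its gadget, giving $2s$ hidden neurons. The subtraction $\mathrm{ReLU}(z) - \mathrm{ReLU}(z-1)$ defining an \emph{internal} gate is linear, so it can be folded into the weights of the gates in the next layer and costs no extra neurons or depth; this keeps the hidden part at depth $m$. Only the $q$ output gates have no successor to absorb their subtraction, so I add one final linear layer of $q$ neurons to form their differences, yielding depth $m+1$ and total size $2s + q$. Finally, because each output neuron computes exactly $\phi$ of a (possibly non-integer) real preactivation, fact (i) gives that every output lies in $[0,1]$ for every $\mathbf{x} \in \mathbb{R}^d$, not merely for Boolean inputs.

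The main obstacle I anticipate is purely the boundary bookkeeping in the shift and the inductive invariant: one must verify that the chosen offset makes $\phi$ coincide with the intended (closed versus open) inequality of each threshold gate, and that the integer-preactivation invariant is preserved from layer to layer, including at the output layer, so that the exactness claim is not broken by any gate whose weighted sum happens to sit near its threshold. Everything else, namely the neuron/depth accounting and the $[0,1]$ bound, then follows immediately from the two properties of $\phi$.
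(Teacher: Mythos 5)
Your proposal is correct, and it reconstructs essentially the standard argument behind this result: the paper itself gives no proof, stating the lemma as an import of Lemma 22 of \cite{vardi2021size}, whose proof uses exactly your clipped-ramp gadget $\mathrm{ReLU}(z)-\mathrm{ReLU}(z-1)$ applied to integer-weight threshold gates, with internal subtractions folded into the next layer's weights and a final layer of $q$ neurons for the outputs. Your bookkeeping checks out: integer weights and thresholds give unit margin so the shifted ramp is exact on Boolean inputs by induction over the $m$ layers, the counts $2s$ hidden units plus $q$ output units and depth $m+1$ match the statement, and the $[0,1]$ output bound for arbitrary real inputs follows since each output is a clipped ramp of a real preactivation.
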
     
\subsection{Proof for Associative Recall (Section~\ref{sec:associative-recall})}
\label{app:associative-recall}
\begin{lemma}
    Given an input consisting of $N$ key-value pairs $(\xvec_i, \yvec_i)$, and a queried key $\xvec_\text{query}$, there exists a two layer Transformer with $\mathcal{O}(\log N)$ width which returns the associated value.
    \label{lemma:recall_agent}
\end{lemma}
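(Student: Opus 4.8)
The plan is to give an explicit two-layer UHAT construction implementing an \emph{induction head}. I would write the input as the token sequence $\xvec_1,\yvec_1,\xvec_2,\yvec_2,\dots,\xvec_N,\yvec_N,\xvec_{\text{query}}$, so that keys sit at odd positions and their values at the immediately following even positions. Since each symbol lives in $\Sigma_N$ with $|\Sigma_N| = \mathcal{O}(\mathrm{poly}(N))$, I encode every symbol as a $\pm 1$ vector $c(\sigma) \in \{-1,+1\}^b$ with $b = \lceil \log_2|\Sigma_N|\rceil = \mathcal{O}(\log N)$ (its binary code in $\pm1$ form). The crucial property is that for two codes $u,v$ the inner product $u^\top v = b - 2\,\mathrm{Ham}(u,v)$ is maximized, and equal to $b$, exactly when $u=v$; this lets a single hard-attention head perform \emph{exact} key matching while using only $\mathcal{O}(\log N)$ width. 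I reserve a few extra coordinates for a type flag (key / value / query) and for the positional embeddings $p_i$.

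\textbf{First layer (binding).} Each value position $2i$ attends to its immediate predecessor $2i-1$ and copies the key code $c(\xvec_i)$ into a dedicated block of its residual stream. Predecessor attention is realized by choosing the positional features so that the bilinear score $p_j^\top \mathbf{W}_Q\mathbf{W}_K^\top p_i$ is maximized at $j=i-1$ (e.g. encode $p_i$ so that the score is an affine/quadratic function of $j$ peaking at $i-1$), while the type flag restricts this head to fire only at value positions. After layer~1 every value position carries the pair $\big(c(\xvec_i),c(\yvec_i)\big)$.

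\textbf{Second layer (retrieval).} The query position attends over all positions with a head whose query vector is $c(\xvec_{\text{query}})$ and whose key vector is the stored key code sitting at the value positions (the type flag suppresses all non-value positions by subtracting a large constant from their scores). By the code property, the attention score equals $b$ on the unique position whose stored key matches $\xvec_{\text{query}}$ and is at most $b-2$ elsewhere, so UHAT selects exactly that position; the value head then reads off $c(\yvec_i)$, and a final MLP $\psi$ decodes the code back to the symbol $\yvec_i$ (implementable exactly via Lemma~\ref{lemma:circuit_vardi}). Uniqueness of the matching pair, assumed in the statement, guarantees there is no tie among genuine matches, so the rightmost-tiebreak rule of UHAT is irrelevant here.

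The width is dominated by the symbol codes and positional features, all of size $\mathcal{O}(\log N)$, giving the claimed bound, with a constant number of layers and heads. The step I expect to be the main obstacle is making the two bilinear attention patterns coexist cleanly: the layer-1 predecessor head (positional selectivity) and the layer-2 content-matching head (code selectivity) must each fire on the intended positions without interference. This is why I separate the positional contribution and the code contribution into orthogonal coordinate blocks and use the type flag; the remaining verifications—that the $\pm1$-code inner product yields a strict maximum at equality, and that $\psi$ decodes exactly—are then routine.
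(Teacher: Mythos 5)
Your construction is correct and is essentially the paper's proof: a two-layer induction head that binds each value to its key via positional attention in layer~1 and performs content-based retrieval at the query position in layer~2, with $\mathcal{O}(\log N)$-dimensional near-orthogonal symbol codes making unique hard attention select the matching pair, and the final prediction read off by dotting the retrieved code against an output matrix whose rows are the symbol codes. The only differences are cosmetic --- you use exact $\pm 1$ binary codes (deterministic, with a clean $b$ versus $b-2$ score gap) where the paper uses Johnson--Lindenstrauss vectors with a $3/4$ versus $1/4$ gap, you bind the key code into the value position rather than the reverse (which is in fact the more clearly causal direction), and you omit the ``key not found'' branch, which the paper includes because the downstream multi-agent protocol needs it but which the lemma as stated does not require.
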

\begin{proof}
We consider that the Transformer agent receives the following input
\begin{align}
    \Xmat 
    &= 
    \begin{bmatrix}
        \xvec_1 & \yvec_1 & \hdots & \xvec_n & \yvec_n & \xvec_\text{query}
    \end{bmatrix}^\top
    \in \R^{2n+1 \times d},
\end{align}
where $\xvec_i$ is a vector representing the key and $\yvec_i$ is a vector representing the value in a sequence of key-value pairs.
We assume that every key value $\xvec_i$ has a unique associated value $\yvec_i$.
We use the following embeddings for tokens:
\begin{align}
    \xvec_i &=
    \begin{bmatrix}
        \mathcal{T}(i) & \mathcal{T}(\yvec_\emptyset) & P_1(t) & P_2(t) 
    \end{bmatrix}\\
    \yvec_i &=
    \begin{bmatrix}
        \mathbf{0} & \mathcal{T}(\yvec_i) & P_1(t) & P_2(t) 
    \end{bmatrix},
\end{align}
where both vectors are of size $\log n + \log (n+1) + 2$, . The first $\log n$ dimensions are Johnson-Lindenstrauss (J-L) vectors~\citep{johnson1984extensions} with the property that $\langle \mathcal{T}(i), \mathcal{T}(j) \rangle \leq 1/4$ for $i\neq j$ and $\langle \mathcal{T}(i), \mathcal{T}(j) \rangle \geq 3/4$ for $i = j$. These are used to embed the position $i$ and perform retrieval.
The set of $\log (n+1)$ dimensions are used to embed the semantic information of each value vector. These also satisfy the same property as above. Each unique value is embedded in a different vector. Moreover we also define $\mathcal{T}(\yvec_\emptyset)$ to be the embedding vector corresponding to "no value found"
Finally the positional encodings are defined as $P_1(t) = \cos \frac{\pi t}{N}$ and $P_2(t) = \sin \frac{\pi t}{N}$. Here, $t$ corresponds to the index of each token s.t. $t \in [2n +1]$. Not that this should not be confused with $i$, which indexes key-value \textit{pairs}.
The proof uses a 2-layer transformer model. The first layer copies the $\mathcal{T}(\yvec_i)$ vector from $\yvec_i$ over to the corresponding $\xvec_i$. This is done with two heads which we index (L) and (R) (for left and right respectively. We set the following:
\begin{align}
    \mathbf{W}_Q^{(L)} &= \mathbf{W}_Q^{(R)} = \mathbf{W}_K^{(R)} = 
    \begin{bmatrix}
        \mat{0} & \mat{I}_2
    \end{bmatrix}^\top\\
    \mathbf{W}_K^{(L)} &=
    \begin{bmatrix}
        \mat{0} & \boldsymbol{\rho}_\theta
    \end{bmatrix}^\top
    \intertext{where $\boldsymbol{\rho}_\theta$ is the rotation matrix given by}
    \boldsymbol{\rho}_\theta &= \begin{bmatrix}
        \cos \theta & \sin \theta \\
        -\sin \theta & \cos \theta
    \end{bmatrix},
    \intertext{and $\theta$ is defined as}
    \theta &= -\frac{\pi}{N},
\end{align}
where $N$ is the length of the full sequence.
The three first matrices directly select the positional embeddings from the input, and the last matrix selects the positional embeddings and shifts them by 1. 
Finally, we set
\begin{align}
    \mat{W}_V^{(L)} &=
    \begin{bmatrix}
        \mat{I}_{n} & \mat{0} & \mat{0}\\
        \mat{0} & \mat{0} & \mat{0}\\
        \mat{0} & \mat{0} & \mat{0}\\
    \end{bmatrix}\\
    \mat{W}_V^{(R)} &=
    \begin{bmatrix}
        \mat{0} & \mat{0} & \mat{0}\\
        \mat{0} & \mat{I}_{|\mathcal{D}|}& \mat{0}\\
        \mat{0} & \mat{0} & \mat{0}\\
    \end{bmatrix}
\end{align}
The MLP for the first layer trivially computes an identity map. For the second layer, we use a construction similar to that of the retrieval heads used in the proof for  Prop.~\ref{prop:optimality-prefix-sum}. In essence we select the first $\log n$ dimensions of the vectors with both key and query matrices. This gives us an attention matrix which selects key-value vector which is equivalent to the query and puts it in the last component. Then, the MLP for the second layer extracts the $\mathcal{T}(\yvec_i)$ vector. The output matrix mapping to tokens is defined s.t. each row is a J-L vector $\mathcal{T}(\yvec_i), \forall i \in [n + 1]$. Thus when computing the dot product with the extracted J-L vector, the majority of the probability mass is put on the correct output token. If the key vector $\xvec_i$ was not in the agent's chunk, the last token should still encode $\mathcal{T}(\yvec_\emptyset)$, which corresponds in the output matrix to a special "query not found" token.
Using arg max decoding, this thus provides the correct behavior.

\paragraph{Note on MLPs}
Both computing the identity map and selecting a constant number of values in a vector are tasks that are trivially computable by a TC0 circuit. By appealing to Lemma~\ref{lemma:circuit_vardi}, we can thus obtain RELU-FFNs that can also perform such operations.
\end{proof}
\begin{lemma}
    Given a sequence of tokens $\{x_1, \hdots, x_n\}$, where all but one $x_i=x_\emptyset$, there exists a one-layer Transformer which can retrieve $x_i\neq x_\emptyset$
    \label{lemma:recall_manager}
\end{lemma}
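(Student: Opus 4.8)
The plan is to build a one-layer unique-hard-attention head that isolates the single non-empty report and then decode it with the MLP, reusing the embedding machinery of Lemma~\ref{lemma:recall_agent}. First I would fix embeddings in the same spirit: encode each input token $x_i$ by a content block $\mathcal{T}(x_i)$, a near-orthogonal J--L vector (so that $\dotprod{\mathcal{T}(i)}{\mathcal{T}(j)} \leq 1/4$ for $i\neq j$ and $\geq 3/4$ for $i=j$), together with one extra coordinate $b_i \in \{0,1\}$ acting as a \emph{non-empty flag}, set to $b_i = 0$ precisely when $x_i = x_\emptyset$ and $b_i = 1$ otherwise. By hypothesis exactly one $b_i$ equals $1$.

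Next comes the attention block. I would have the final (read-out) position carry a query $\mathbf{W}_Q$ that reads only the flag coordinate, and set $\mathbf{W}_K$ to also extract that coordinate, so that the attention score of the query against position $i$ equals $b_i$ (up to scaling). Since there is a unique index with $b_i = 1$ while all others score $0$, under causal masking the final position attends back over all reports and the UHAT rule selects exactly that position; there is no tie, so the rightmost-tie-breaking convention is irrelevant. I then set $\mathbf{W}_V$ to copy the content block $\mathcal{T}(x_i)$ of the selected position into the output at the read-out position.

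Finally I would decode. The MLP realizes the identity map on this copied content vector (obtainable from a $\mathrm{TC}^0$ computation via Lemma~\ref{lemma:circuit_vardi}), and the output/unembedding matrix---whose rows are the J--L vectors $\mathcal{T}(x)$ ranging over all candidate tokens---turns the copied vector into logits. By near-orthogonality, the largest logit corresponds to the retrieved token, so $\arg\max$ decoding emits exactly $x_i \neq x_\emptyset$.

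The only point requiring care, and the main obstacle to uniformity, is ensuring that the single non-empty position strictly dominates the attention scores \emph{regardless of $n$}: because the scoring is flag-based, the winning score is $1$ and every other score is $0$, so dominance holds by a constant margin independent of the number of agents. This is precisely what keeps the head expressible in UHAT uniformly across input lengths; the copying and J--L decoding steps are then routine and mirror Lemma~\ref{lemma:recall_agent}.
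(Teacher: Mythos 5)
Your proposal is correct and takes essentially the same route as the paper: a single attention head at the read-out position isolates the unique non-$x_\emptyset$ report and the J--L unembedding decodes it by $\arg\max$; the paper effects the selection through the content embeddings (reusing the retrieval head of Lemma~\ref{lemma:recall_agent}) rather than an explicit flag bit, which is only a cosmetic difference. The one nit is that your query must be a \emph{constant} vector (e.g., read off an always-one coordinate or a dedicated read-out token) rather than the read-out position's own flag coordinate, since that flag would be zero whenever the last token is $x_\emptyset$ and would collapse all scores into a tie.
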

\begin{proof}
Consider the Transformer agent receives the following input
\begin{align}
    \Xmat 
    &= 
    \begin{bmatrix}
        \xvec_1 & \hdots & \xvec_n
    \end{bmatrix}^\top
    \in \R^{n \times d},
\end{align}
where for some $i^* \in [n]$, $\xvec_{i^*}$ is a one-hot encoding of an entity in $\mathcal{D}$ and $\xvec_{i} = \xvec_\emptyset$ otherwise.
 We use the following embeddings for tokens:
\begin{align}
    \xvec_i &=
    \begin{bmatrix}
        \mathcal{T}(i) & \mathcal{T}(\xvec_i) 
    \end{bmatrix},
\end{align}
where both vectors are of size $\log n + \log (n + 1)$. The first $\log n$ dimensions are J-L vectors similar to those used in the proof for Proposition~\ref{prop:optimality-prefix-sum}, with the property that $\langle \mathcal{T}(i), \mathcal{T}(j) \rangle \leq 1/4$ for $i\neq j$ and $\langle \mathcal{T}(i), \mathcal{T}(j) \rangle \geq 3/4$ for $i = j$. 
The second $\log (n+1)$ dimensions are J-L vectors corresponding an encoding of each possible entity plus an encoding for the "not found" token $x_\emptyset$. The construction is essentially the same as the second layer of Lemma~\ref{lemma:recall_agent}. The key difference being that there are just "value" vectors and no keys, thus the first layer which copies the one-hot vector for the value vectors to the key vectors is not necessary. 
\end{proof}

\begin{proposition}
    \label{prop:recall}
    Given an input consisting of $N$ pairs $(x_i, y_i)$, and a query $x$, consider the task of retrieving the (unique) $y$ such that $(x,y)$ appears in the input.
    Assume that the input is partitioned disjointly into parts provided to $k$ agents, which also have access to the query. Then they can solve the task with depth $\mathcal{O}(1)$ and communication $\mathcal{O}(1)$.
\end{proposition}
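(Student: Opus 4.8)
The plan is to have every worker agent run the lookup on its own chunk in parallel, and to have only the unique agent that holds the queried pair report to the manager, so that both depth and communication stay constant. First I would partition the $N$ pairs into $k$ disjoint contiguous chunks of size $\mathcal{O}(N/k)$ and hand chunk $i$, together with the common query $x$, to agent $i$. Each agent then runs the two-layer construction of Lemma~\ref{lemma:recall_agent} on the string consisting of its chunk followed by the query, which returns the associated value $y$ when the pair $(x,y)$ lies in the chunk and the designated ``not found'' symbol $y_\emptyset$ otherwise. By the uniqueness hypothesis exactly one agent holds $(x,y)$, so exactly one agent obtains a candidate $y \neq y_\emptyset$ while all others obtain $y_\emptyset$.

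Next I would turn this candidate into the branching behaviour required by Definition~\ref{def:agent-computation}: the matching agent emits a single outgoing communication edge carrying $y$ to the manager $M$ (the token $\SEND\_y\_\ID_M$), whereas every other agent terminates with no outgoing edge (predicting EOS). The test $y \neq y_\emptyset$ is a linear threshold on the retrieved value embedding, hence a trivial $\mathrm{TC}^0$ predicate; by Lemma~\ref{lemma:circuit_vardi} it is realized by a constant-depth ReLU MLP appended to the agent, so the whole agent computation still uses a bounded number of layers and heads and width $\mathcal{O}(\log N)$ (admissible since the framework does not require bounded $d$).

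Finally, the manager receives exactly one incoming communication token --- guaranteed by uniqueness, hence consistent with the single-incoming-edge constraint of Definition~\ref{def:mas} --- and outputs it as its last CoT edge, so the protocol computes the retrieval function. (If one instead prefers the simpler variant in which all $k$ agents always report, the manager can recover the unique non-empty value via Lemma~\ref{lemma:recall_manager}; that version costs $\mathcal{O}(k)$ communication, which is precisely why selective reporting is used here.) Counting resources, each worker performs $\mathcal{O}(1)$ reasoning steps and the manager $\mathcal{O}(1)$ steps, giving depth $\mathcal{O}(1)$, while only the single matching agent emits a communication edge, giving communication budget $\mathcal{O}(1)$; both bounds hold uniformly in $N$ with a fixed number of layers and heads.

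I expect the main obstacle to be making the selective-reporting step fully rigorous inside UHAT: one must argue that an agent can, in constant depth, decide between emitting a $\SEND$ token and emitting EOS as its next autoregressive prediction, purely as a function of whether Lemma~\ref{lemma:recall_agent} produced $y_\emptyset$. Everything else reduces to bookkeeping around the two cited lemmas and verifying the graph satisfies Definition~\ref{def:mas}.
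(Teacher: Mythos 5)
Your proposal matches the paper's proof essentially step for step: partition into disjoint chunks, run the two-layer retrieval construction of Lemma~\ref{lemma:recall_agent} on each agent's chunk plus the query, and have the unique matching agent forward its value to a manager that outputs it. If anything you are more careful than the paper on the reporting step: the paper's write-up also invokes Lemma~\ref{lemma:recall_manager} to let the manager sift through \emph{all} workers' outputs, which under the paper's own definition of communication budget would cost $\mathcal{O}(k)$ rather than $\mathcal{O}(1)$, whereas your selective-send variant (only the agent holding $(x,y)$ emits a communication edge, all others emit EOS) is the one actually consistent with the claimed bounds and with the protocol depicted in Figure~\ref{fig:multi_agent_protocol}.
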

\begin{proof}
    The proof follows from Lemmas~\ref{lemma:recall_agent} and~\ref{lemma:recall_manager}. The protocol is as follows:
    
    Consider a setup with $w$ agents, and an input consisting of $N$ key-value pairs. Thus each agent receives a contiguous non-overlapping chunk of size $N/w$ as well as the same queried $x$. Each agent searches their subset of key-value pairs. The agent that finds the correct key communicates the associated value to a manager agent. All other agents return a special "null" token indicating they did not find the key. The manager then searches through the worker outputs and returns the value.
    
    The worker agents can be simulated using Lemma~\ref{lemma:recall_agent}, and the manager agent extracting the correct answer from all returned agent tokens can be simulated using Lemma~\ref{lemma:recall_manager}.
\end{proof}

\subsection{Proofs for State Tracking results (Section~\ref{subsec:state_tracking})}\label{app:state-tracking}
\begin{proposition}[(Repeated from Proposition~\ref{prop:parity_size}]
        Any multi-agent system
    computing \textsc{Parity}  requires size $\Omega(N)$.
\end{proposition}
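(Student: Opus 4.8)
The plan is to reduce the desired multi-agent lower bound to the single-agent chain-of-thought lower bound of \cite{amiri2025lower} by routing everything through the conservation-of-size result (Proposition~\ref{prop:conservation}). First I would take an arbitrary multi-agent system $\mathcal{A}$ computing \textsc{Parity}, and write $Size(N)$ for its size. Invoking Proposition~\ref{prop:conservation}, I obtain an equivalent single-agent protocol that still computes \textsc{Parity} on every input and whose size is $\mathcal{O}(Size(N))$.

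Next I would observe that for a single-agent protocol the size and the computation depth coincide up to a constant factor. Indeed, with only one agent present there can be no communication edges, since by Definition~\ref{def:mas} these require two distinct agents $i \neq j$; hence every node $T_1^{(t)}$ is connected to its successor solely by CoT edges. The node set therefore forms a single directed chain whose length (its depth) equals its number of nodes (its size). Consequently the single-agent protocol is exactly a chain-of-thought of length $\Theta(Size(N))$ deciding \textsc{Parity}.

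At this point the lower bound of \cite{amiri2025lower} applies directly: any UHAT Transformer deciding \textsc{Parity} with a single chain-of-thought—under the uniform bound on layers and heads guaranteed by the expressibility condition of Definition~\ref{def:agent-computation} and preserved by the construction underlying Proposition~\ref{prop:conservation}—must emit $\Omega(N)$ CoT tokens. Combining this with the previous step gives $\mathcal{O}(Size(N)) \geq \Omega(N)$, so $Size(N) = \Omega(N)$, as claimed.

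The step I expect to require the most care is confirming that the single-agent object produced by Proposition~\ref{prop:conservation} genuinely lies in the hypothesis class covered by the \cite{amiri2025lower} lower bound: that it is a UHAT-expressible single-agent CoT with bounded layers and heads, whose relevant ``length'' is precisely its size, so that no hidden parallelism or communication folded into the simulating transformer lets the system slip under the $\Omega(N)$ barrier. Once this identification is secured, the two ingredients—conservation of size and the single-agent CoT lower bound—fit together immediately to yield the claim. $\square$
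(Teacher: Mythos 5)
Your proposal is correct and follows essentially the same route as the paper's own proof: both invoke Proposition~\ref{prop:conservation} to convert the multi-agent system into an equivalent single-agent CoT, and then apply the $\Omega(N)$ UHAT CoT lower bound for \textsc{Parity} from \cite{amiri2025lower}. Your additional care in checking that the single-agent object is a genuine chain whose size equals its CoT length, and that it stays within the UHAT hypothesis class, only makes explicit what the paper leaves implicit.
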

\begin{proof}
    By proposition \ref{prop:conservation}, we know that any protocol computing \textsc{Parity} can be converted into an equivalent single agent protocol with some CoT length $L$.
    By Theorem 4.2 of~\cite{amiri2025lower}, we have that any UHAT CoT for PARITY has length $\Omega(N)$.

    Thus we must have $L \in \Omega(N)$.
\end{proof}

\begin{proposition}[Repeated from Prop~\ref{prop:state-tracking-protocol}]
Given a finite monoid $M$ and any number of agents ($w : \mathbb{N} \rightarrow \mathbb{N}$ with $w(N) \in [N]$), there exists a $\mathcal{O}(\log w(N) + \frac{N}{w(N)})$ depth and $\mathcal{O}(N)$ size multi-agent system computing state tracking on $M$ with communication budget $w(N)$.
\end{proposition}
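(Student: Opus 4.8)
The plan is to implement a parallel scan (prefix-product) over the monoid in two phases: a \emph{local reduction} phase in which each agent collapses its own chunk into a single monoid element, followed by a \emph{tree combination} phase in which these $w(N)$ partial products are folded into the global product along a balanced binary tree. I would partition the input of $N$ monoid elements into $w(N)$ contiguous chunks of length $\lceil N/w(N)\rceil$ and hand chunk $i$ to agent $i$. Because $M$ is associative, the product of a contiguous block equals the product of the products of its sub-blocks, so the tree evaluation returns $m_0\cdot m_1\cdots m_k$; crucially, I will combine operands in left-to-right order at every tree node, which is exactly what keeps the construction valid for a non-commutative $M$.

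For Phase 1, each agent runs a CoT of length $\lceil N/w(N)\rceil$ maintaining a running product $s_t = s_{t-1}\cdot m_t$, emitting one monoid element per step. A single step only needs to (i) attend to the previous emitted state $s_{t-1}$ (the rightmost CoT token) and to the next unread input element $m_t$, located via the positional embeddings and a step counter, and (ii) apply the finite Cayley multiplication $(s_{t-1}, m_t)\mapsto s_{t-1}\cdot m_t$. Since $M$ is finite this last map is a bounded Boolean function, realizable by a constant-size threshold circuit and hence, via Lemma~\ref{lemma:circuit_vardi}, by a constant-size ReLU MLP; the whole step is thus expressible in UHAT with a number of heads and layers independent of $N$. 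All $w(N)$ agents run concurrently, so this phase costs depth $\mathcal{O}(N/w(N))$ and contributes $w(N)\cdot \lceil N/w(N)\rceil = \mathcal{O}(N)$ nodes.

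For Phase 2, I place the $w(N)$ partial products at the leaves of a balanced binary tree. At each of the $\lceil\log_2 w(N)\rceil$ levels, the right partner of each surviving pair sends its current element to the left partner over one communication edge and then terminates, and the left partner appends it and emits the product in one further CoT step (the pattern in Figure~\ref{fig:multi_agent_protocol_prefix_sum}). Each level costs $\mathcal{O}(1)$ depth, giving $\mathcal{O}(\log w(N))$ in total, and because every surviving agent acts at consecutive timesteps, no agent idles. Reducing $w(N)$ elements to one takes exactly $w(N)-1$ binary combinations, each carrying precisely one outgoing communication edge, so the communication budget is $w(N)-1 \le w(N)$, while the number of combination nodes is $\sum_{\ell\ge 1} w(N)/2^{\ell} = \mathcal{O}(w(N)) \le \mathcal{O}(N)$. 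Each combination is again a single finite multiplication, UHAT-expressible as above, and I designate the root combiner as the manager so that its final CoT token is the answer. Summing the two phases gives depth $\mathcal{O}(\log w(N) + N/w(N))$, size $\mathcal{O}(N)$, and communication budget $w(N)$, with Proposition~\ref{prop:prefix-sum} recovered at $w(N)=N$.

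The main obstacle I expect is the UHAT-expressibility of the Phase 1 local scan rather than the resource accounting: a unique-hard-attention layer must reliably \emph{address} the next unread position $m_t$ at CoT step $t$ and read back $s_{t-1}$, which requires engineering the positional encodings and counter so that the intended $\argmax$ is selected at every step (this is the same addressing difficulty handled in Lemmas~\ref{lemma:recall_agent} and~\ref{lemma:recall_manager}). A secondary point of care is the DAG bookkeeping of Definition~\ref{def:mas}: since an agent's nodes must fill a contiguous interval $[D_i]$, I must ensure agents terminate immediately after forwarding their value in Phase 2 and that all combiners at a given level finish synchronously so the tree advances in lock-step—both follow from the balanced, synchronized schedule, and they are precisely what prevents an extra $\log w(N)$ factor from appearing in the size.
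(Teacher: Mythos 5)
Your proposal is correct and follows essentially the same route as the paper's proof: a local sequential reduction of each agent's chunk into a single monoid element, followed by a binary-tree (prefix-sum style) combination of the $w(N)$ partial products, with the finite Cayley multiplication realized by a constant-size threshold circuit converted to an MLP via Lemma~\ref{lemma:circuit_vardi} and addressing handled through engineered (Johnson--Lindenstrauss) positional encodings. The only difference is level of detail in the UHAT weight construction, where the paper spells out an explicit single-layer, three-head implementation; your resource accounting for depth, size, and communication matches the paper's.
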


\begin{proof}
    Let an input $x$ of length $N$ be given, where each symbol is an element of $M$.
    We assume for simplicity (otherwise padding) that $N$ is a multiple of the number $w$ of agents.
    We build a DAG as follows.

The context given to agent $j$ is $x_{1,j}, \dots, x_{N/w, j},\mathsf{EOS}$ where $\mathsf{EOS}$ is the end of sequence (EOS) token.
The context length of the sequence given to each agent is thus $N/w + 1$ (chunk size plus EOS token).

For each agent $j$, we create nodes $n_{1,j}, n_{2,j}, \dots, n_{N/w,j}$, with CoT edges $n_{i,j} \rightarrow n_{i+1,j}$ with $\{t, x_{1,j} \dots x_{i+1,j}\}$.

An agent can use a call $\SEND\sigma$, where $\SEND$ is a special token to transmit information to other agents. Without loss of generality, we assume that this command transmits the symbol $\sigma$ to the next agent with ID $j+1$. The final agent, which we call the receiver, only receives information and does not transmit. The protocol computes a prefix sum algorithm with branching factor 2: at the beginning of runtime, all agents compute the composition of their $N/w$ elements. Then the agents with odd indices $j$ send their result to those with even indices, who compute the composition of their result with that of their odd index neighbor and so forth in a prefix sum fashion.

We show this is implementable in UHAT with 3 heads and a single layer, with width $\mathcal{O}(\log N)$.
Essentially we use 2 heads to extract the value of the monoid elements and then store them in the $\mathsf{EOS}$ token and use the MLP to perform the rest of the processing.

\paragraph{Embeddings} 
We will use quasi-orthogonal vectors to keep track of the positions of different elements in the sequence.
Formally, let $\mathcal{T}(1), \ldots, \mathcal{T}(2N/w+1)$ be $2N/w+1$ vectors of dimension $k = \mathcal{O}(\log N)$ such that $\langle\mathcal{T}(i), \mathcal{T}(j)\rangle \le 1/4$ for $i \ne j$ and $\langle\mathcal{T}(i), \mathcal{T}(j)\rangle \ge 3/4$ for $i = j$. Such vectors can be obtained using the Johnson-Lindenstrauss Lemma.
We define $E(\sigma)$ to be the embedding vector of some symbol $\sigma \in \Xi$. Embeddings have the following structure
\begin{align}
    E(\sigma) =
    \begin{bmatrix}
        \text{ohe}(\sigma) & \text{ohe}(\sigma) & \mathcal{T}(i) & \mathbf{0} & \mathbf{0} &  \SEND
    \end{bmatrix},
\end{align}
where $\text{ohe}(\sigma) \in \{0,1\}^{|\Xi|}$ is the one hot encoding (OHE) of $\sigma\in\Xi$, $\mathcal{T}(i)$s are quasi orthogonal vectors, the two last dimensions are also of dimension $k$ and where $\mathsf{[send]} \in \{0,1\}$ are flags which are set to 0 by default. Equally, we define the embedding of the end of sequence token as
\begin{align}
    E(\mathsf{EOS}) = 
    \begin{bmatrix}
        \mathbf{0} & \mathbf{0} & \mathbf{0} & \mathcal{T}(1)  & \mathcal{T}(2) & \SEND
    \end{bmatrix}
\end{align}
\paragraph{Construction for composition of monoid elements}
The construction for composition requires one layer and three heads.
The key idea of the construction is to use two heads to extract the two elements to be composed at a given timestep, then concatenate them in the embedding of the \$ token. The MLP can then perform the composition, which it returns in the embedding of the last token. The third head is only there to copy back the remaining embedding values. For the first head, we would have the following key, query and value matrices:
\begin{align}
    \Wmat_Q &=
    \begin{bmatrix}
        \mathbf{0}\\
        \mathbf{0}\\
        \mathbf{0}\\
        \mathbf{I}\\
        \mathbf{0}\\
    \end{bmatrix}
    &
    \Wmat_K &=
    \begin{bmatrix}
        \mathbf{0}\\
        \mathbf{0}\\
        \mathbf{I}\\
        \mathbf{0}\\
        \mathbf{0}\\
    \end{bmatrix}
    &
    \Wmat_V &=
    \begin{bmatrix}
        \mathbf{I}&\mathbf{0} &\mathbf{0}&\mathbf{0}&\mathbf{0}\\
        \mathbf{0}&\mathbf{0} &\mathbf{0}&\mathbf{0}&\mathbf{0}\\
        \mathbf{0}&\mathbf{0} &\mathbf{0}&\mathbf{I}&\mathbf{0}\\
        \mathbf{0}&\mathbf{0} &\mathbf{0}&\mathbf{0}&\mathbf{0}\\
        \mathbf{0}&\mathbf{0} &\mathbf{0}&\mathbf{0}&\mathbf{0}\\
    \end{bmatrix}
\end{align}
The output of the attention layer is thus all zeros except for the embedding at the $\mathsf{EOS}$ symbol which would be
\begin{align}
    E(\mathsf{EOS}) = 
    \begin{bmatrix}
        \text{ohe}(\sigma) & \mathbf{0} & \mathbf{0} & \mathcal{T}(i) & \mathbf{0} &  \SEND
    \end{bmatrix},
\end{align}
The construction for the second head is very similar, with the main differences being the query matrix has the all 0s and identity at the \textit{last} block and the value matrix is like that of the previous head with the two last columns swapped. This would give us a similar sequence of all 0 vectors, except  for the embedding at the $\mathsf{EOS}$ symbol which would be
\begin{align}
    E(\mathsf{EOS}) = 
    \begin{bmatrix}
         \mathbf{0} & \text{ohe}(\sigma) & \mathbf{0}  & \mathbf{0} & \mathcal{T}(i) &  \SEND    \end{bmatrix},
\end{align}
The third head trivially computes the identity matrix (but with 0s at the $\mathsf{EOS}$ position) by using both key and query matrices to extract the J-L vectors found at the "third" embedding block. We then use the $\Wmat_O$ matrix to select the relevant parts of out of each head. Once this is done, we use the MLP to compute composition. 

\paragraph{MLP} The MLP computes a map as defined below. If $\SEND = 0$:
\begin{align*}
    &\begin{bmatrix}
         \text{ohe}(\sigma_1) & \text{ohe}(\sigma_2) & \mathbf{0}  & \mathcal{T}(i_1) & \mathcal{T}(i_2) &  \SEND
    \end{bmatrix}
    \mapsto\\
    &\begin{bmatrix}
         \text{ohe}(\sigma_1)\circ \text{ohe}(\sigma_2)& \text{ohe}(\sigma_1)\circ \text{ohe}(\sigma_2) & \mathbf{0}  & \mathcal{T}(i_1 + c) & \mathcal{T}(i_2 + c) &  \SEND
    \end{bmatrix},
\end{align*}
where $c$ is the token count between the first token and the $\mathsf{EOS}$ token, with $\text{ohe}(\sigma)\circ \text{ohe}(\sigma) \mapsto\text{ohe}(\sigma) $ and $\mathbf{0}\mapsto \mathbf{0}$ in the last two J-L positions. 

If $\mathcal{T}(i_2 +c) = \mathcal{T}(2N/w)$, the model computes this slightly different map:
\begin{align*}
    &\begin{bmatrix}
         \text{ohe}(\sigma_{2N/w-1}) & \text{ohe}(\sigma_{2N/w}) & \mathbf{0}  & \mathcal{T}(2N/w-1) & \mathcal{T}(2N/w) &  \SEND
    \end{bmatrix}
    \mapsto\\
    &\begin{bmatrix}
         \text{ohe}(\SEND)&\text{ohe}(\sigma_{2N/w-1})\circ \text{ohe}(\sigma_{2N/w}) & \mathcal{T}(2N + 1)  & \mathcal{T}(2N/w + 1) & \mathbf{0} &  \SEND
    \end{bmatrix},
\end{align*}
Thus at the next step of decoding the final vector would stay the same.

If the $\SEND$ flag is equal to 1, the MLP simply swaps the values in the first $|\Xi|$ dimensions with those in the second $|\Xi|$ dimensions. Thus, once it is time to communicate the model outputs $\SEND \sigma$.

Such a map can be defined by a threshold circuit.
Composition of elements from a finite monoid can trivially be evaluated constant time by framing the problem as constant lookup. The shifting of indices is also in TC0 as this is reducible to counting/addition which is known to be in TC0~\cite{nguyen2006theories}. Finally, TC0 is closed under boolean combination. Thus performing the conditional routing of which circuit to used based on the flag value is also in TC0.
Finally, by Lemma~\ref{lemma:circuit_vardi}, we know that if such a circuit exists, it can be converted into an MLP which is linear in the circuit parameters.

\paragraph{Output matrix} Every row of the output matrix is a OHE of one of the symbols in $\Xi$. The output matrix is a combined transformation which first selects the top $|\Xi|$ dimensions and uses the OHE vector found there to put a 1 at the underlying position in the output vocabulary vector. Only the last token is used for prediction.
\paragraph{Receiving and sending communication} We assume all agents decode synchronously. When an agent receives a symbol, the protocol takes the agent's last symbol, and appends the received symbol as well as a $\mathsf{EOS}$ token. 
For simplicity, we assume that each agent is given a fresh new context at the beginning of a new round of communication. The construction could easily be extended by adding a layer which zeros out the embedding values of vectors from the previous query. 
To make sure all the agents only send symbols at the appropriate time, one can easily change the number of J-L vectors which the agent receives as these decide at what point the agent sends information.
\end{proof}
\begin{proposition}
Let $L$ be a regular language over $\Sigma$.
For each $w : \mathbb{N} \rightarrow \mathbb{N}$ ($w(N) \in [N]$), there is a multi-agent system with $w(N)$ agents that computes membership in $L$.
\end{proposition}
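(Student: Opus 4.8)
The plan is to reduce membership in a regular language to a state-tracking problem over the \emph{transition monoid} of a recognizing automaton, and then invoke the protocol of Proposition~\ref{prop:state-tracking-protocol} essentially verbatim. Since $L$ is regular, it is recognized by a deterministic finite automaton with state set $Q$, initial state $q_0$, accepting set $F \subseteq Q$, and transition function $\delta$. For each symbol $\sigma \in \Sigma$ I define the map $t_\sigma : Q \to Q$ by $t_\sigma(q) = \delta(q,\sigma)$, and let $M$ be the submonoid of $(Q \to Q, \circ)$ generated by $\{t_\sigma\}_{\sigma \in \Sigma}$ together with the identity map. This $M$ is finite, with $|M| \le |Q|^{|Q|}$, and if I equip it with the product $m \cdot m' \eqdef m' \circ m$, then for any input $x = \sigma_1 \cdots \sigma_N$ the monoid product $t_{\sigma_1} \cdot t_{\sigma_2} \cdots t_{\sigma_N}$ equals the composite transition induced by $x$. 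Consequently $x \in L$ iff $\bigl(t_{\sigma_1} \cdots t_{\sigma_N}\bigr)(q_0) \in F$, so membership is determined by (i) the monoid product and (ii) a single constant-size post-check.

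First I would relabel each agent's input chunk by the local map $\sigma \mapsto t_\sigma$. Because $\Sigma$ and $M$ are both of constant size, this is a fixed tokenwise lookup table, which the embedding layer (or a constant-size MLP, via Lemma~\ref{lemma:circuit_vardi}) realizes without increasing the number of heads or layers. After relabeling, each agent sees a string over $M$, so I can apply Proposition~\ref{prop:state-tracking-protocol} to obtain a multi-agent system with $w(N)$ agents that computes the product $P \eqdef t_{\sigma_1} \cdots t_{\sigma_N} \in M$ at the manager, with the same depth, size, and communication guarantees as in that proposition.

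Second, I would have the manager perform one additional CoT step computing the boolean $[\,P(q_0) \in F\,]$. Since $P$ is one of the finitely many elements of $M$, evaluating $P(q_0)$ and testing membership in $F$ is again a constant-size lookup, expressible in $\mathrm{TC}^0$ and hence by a constant-size MLP through Lemma~\ref{lemma:circuit_vardi}; this adds only $\mathcal{O}(1)$ to depth and size and no communication. The manager's final CoT token is then the membership decision, so the resulting system computes membership in $L$ as required.

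The main obstacle I anticipate is bookkeeping rather than any genuine difficulty: I must verify that the tokenwise relabeling and the final acceptance check both fit inside the UHAT expressibility requirement of uniformly bounded heads and layers. This follows because each is a finite-alphabet lookup independent of $N$, so folding them into the embedding and the manager's MLP leaves the head and layer counts of the state-tracking construction unchanged. The one convention to fix carefully is the order of composition, so that the left-to-right monoid product matches the automaton's left-to-right reading of $x$; this is exactly what the definition $m \cdot m' = m' \circ m$ above ensures.
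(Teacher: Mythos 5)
Your proposal is correct and follows essentially the same route as the paper, which simply observes that the statement is an immediate consequence of Proposition~\ref{prop:state-tracking-protocol}; you have just spelled out the standard reduction (DFA $\to$ transition monoid $\to$ state tracking, plus a constant-size acceptance check) that the paper leaves implicit. The extra care you take with the composition order and with folding the relabeling and final lookup into the embedding/MLP without increasing head or layer counts is sound and consistent with the paper's construction.
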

\begin{proof}
    This statement follows immediately as a consequence of Proposition~\ref{prop:state-tracking-protocol}.
\end{proof}

\begin{proposition}[Optimality (Repeated from Prop.~\ref{prop:optimality-prefix-sum})]\label{prop:optimality-prefix-sum_appendix}
    Assume the finite monoid $M$ is a nontrivial group, and $\mathcal{A}$ a multi-agent system computing state tracking over $M$.
Then $\mathcal{O}(w(N))$ communication budget,
and computation depth $\Omega(\frac{N}{w(N)})$ are each optimal.
\end{proposition}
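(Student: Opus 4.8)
The plan is to prove the two lower bounds separately, since they rely on different mechanisms. For the \emph{depth} bound, I would leverage the size lower bound together with the structural inequality~(\ref{eq:depth-size-ineq}). Proposition~\ref{prop:parity_size} already gives $Size(N) = \Omega(N)$ for \textsc{Parity}; the first step is to upgrade this to an arbitrary nontrivial group $M$. Since every nontrivial finite group contains a cyclic subgroup $\mathbb{Z}/p\mathbb{Z}$ of prime order $p$, a single-agent UHAT solving state tracking on $M$ also solves the word problem of this subgroup, which—being non-aperiodic, just like \textsc{Parity}—inherits the $\Omega(N)$ CoT length bound of \cite{amiri2025lower}; combined with Proposition~\ref{prop:conservation} (conservation of size) this yields $Size(N)=\Omega(N)$ for $M$. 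Plugging into $Size(N)/w(N) \le Depth(N)$ immediately gives $Depth(N) = \Omega(N/w(N))$.

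For the \emph{communication} bound, I would argue via an input-dependency (information-flow) argument. For a fixed input, define for each node $T_j^{(t)}$ its dependency set $D(j,t) \subseteq [w(N)]$, the set of input chunks whose symbols can affect the content of that node: initially $D(j,1) = \{j\}$, CoT edges preserve the set, and a communication edge from $T_i^{(t')}$ to $T_j^{(t+1)}$ replaces $D(j,t+1)$ by $D(j,t)\cup D(i,t')$. The key correctness observation, using that $M$ is a \emph{group}, is that the manager's output must depend on every chunk: fixing all chunks but the $i$-th and replacing the $i$-th chunk's partial product $p_i$ by $p_i g$ for any $g \ne e$ changes the global product (by cancellation), so the manager must distinguish these inputs, forcing $D(m,\mathrm{final}) = [w(N)]$.

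The counting step then goes as follows. For each non-manager agent $i$, its own index $i$ enters the manager's dependency set only by propagating along communication edges, and before any communication only agent $i$ itself has $i$ in its dependency set; hence the propagation must begin with an \emph{outgoing communication node of agent $i$}. This produces, for each of the $w(N)-1$ non-manager agents, at least one communication node whose \emph{source} is that agent. Since nodes with distinct source agents are distinct, the communication budget is at least $w(N)-1 = \Omega(w(N))$.

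The main obstacle I anticipate is making the communication argument robust to \emph{broadcasts}: a single broadcast node creates outgoing edges to all other agents, so a naive edge-counting or undirected-connectivity argument would collapse to a constant. The resolution is to exploit the asymmetry between broadcasting and aggregation—a broadcast from agent $i$ lets $i$ influence everyone, but the manager must instead be influenced \emph{by} everyone, and each source agent's influence must originate at one of \emph{its own} send nodes. Counting nodes by their source agent (rather than counting edges) therefore sidesteps the broadcast savings and yields the tight $\Omega(w(N))$ bound matching the protocol of Proposition~\ref{prop:state-tracking-protocol}. A secondary technical point is verifying the generalization of the $\Omega(N)$ length bound from \textsc{Parity} to all prime cyclic groups, which I expect to follow from the same non-aperiodicity obstruction used in \cite{amiri2025lower}.
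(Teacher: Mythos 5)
Your proposal is correct and follows essentially the same route as the paper's proof: the depth bound is obtained from the $\Omega(N)$ size lower bound combined with the inequality~(\ref{eq:depth-size-ineq}), and the communication bound from the observation that, since $M$ is a group, every agent's chunk affects the final product, so each non-manager agent must originate at least one outgoing communication node. Your dependency-set formalization (which correctly handles broadcasts by counting source nodes rather than edges) and your reduction to a prime-order cyclic subgroup simply make rigorous two steps the paper's proof states in a single line each.
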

\begin{proof}
Optimality of the communication budget holds because each agent's portion matters for the result: each agent must send at least one message, hence the communication budget scales linearly with the number of agents.
The computation depth lower bound follows from the Proposition~\ref{prop:conservation} on size conservation:
\begin{equation}
    N = Size \leq \text{Computation-Depth} \cdot \text{Agents}
\end{equation}
hence
\begin{equation}
    \frac{N}{w(N)} \leq \text{Computation-Depth}.
\end{equation}
From which the result follows.
\end{proof}

\subsection{Proofs for $k$-hop Reasoning (Section~\ref{subsec:khop})}
\label{app:subsec:khop}
\begin{proposition}
Let the number of agents be $w : \mathbb{N} \rightarrow \mathbb{N}$ ($w(N) \in [N]$).
    The $k$-hop composition task with $N$ facts can be solved with computation depth $\mathcal{O}(k)$, communication budget $\mathcal{O}(k)$, and size $\mathcal{O}(w(N) \cdot k)$. 
    The communication budget is optimal.
    The computation depth $\mathcal{O}(k)$ is optimal at least up to a $\log(N+k)$ factor.
\end{proposition}

\begin{proof}
We start by exposing the communication protocol which we prove optimal. Then, we give the constructions of the worker and manager agents which implement this protocol. The protocol is as follows:

Let $N$ be the number of total key-value pairs in the context and let $w$ be the number of worker agents. Each worker agent receives a chunk $N/w$ as well as first query $f_1(\xvec_\text{query})$.  In the first round, the each agent searches for $f_1(\xvec_\text{query})$ in its chunk. The agent that finds the queried key communicates its associated value to a manager agent. The manager agent then updates the query s.t. $f_2(f_1(\xvec_\text{query})) = f_1(\xvec'_\text{query})$ and broadcasts this new queried key to all worker agents. This process repeats $k$ times in total until the entire $k$-hop query is resolved.

\paragraph{Worker agent construction}
The worker agent essentially performs recall on a series of key-value vectors, letting
\begin{align}
    \Xmat 
    &= 
    \begin{bmatrix}
        f(\xvec_1) & \yvec_1 & \hdots & f(\xvec_n) & \yvec_n & f(\xvec_\text{query})
    \end{bmatrix}^\top
    \in \R^{2n+1 \times d},
\end{align}
we can straightforwardly apply Lemma~\ref{lemma:recall_agent}. We note that the same agent can be used across hops; if we simply append the new queried key to the end current sequence, the worker construction will return the value for the rightmost queried key. This is true because the construction uses rightmost tie-breaking in attention.

\paragraph{Manager construction} The manager agent receives a sequence of functions
\begin{align}
    \begin{bmatrix}
        \fvec_1 & \hdots & \fvec_n & \yvec_\text{query} & E(\#)
    \end{bmatrix},
\end{align}
where $\fvec_i$ is an embedding of the function that sends an entity to another through their relationship, $\yvec_\text{query}$ is the current known entity and $E(\#)$ is the embedding of the EOS token. The key idea of the manager construction is to compose together the last function with the query entity in order to return a new entity value. To do so, we define a transition map as
\begin{align}
        \qvec_i &= \sum_{x \in \mathcal{D}} f(x) \evec_{i_x} \in \R^{|\mathcal{D}|},
        \intertext{with the full embedding vector being}
        \fvec_i &=
        \begin{bmatrix}
            \qvec_i & \mathbf{0} & \mathcal{T}(i) & \mathbf{0}
        \end{bmatrix}
        \in \R^{2|\mathcal{D}| + 2\log N}
\end{align}
where $\evec_{i_x}$ is the canonical basis vector corresponding to the entity $x$ for some indexing $\mathcal{I}$. Consequently, we have that $\yvec_\text{query} = \begin{bmatrix}
    \mathbf{0} & \evec_{i_y} & \mathcal{T}(n+1) & \mathbf{0}
\end{bmatrix}\in\R^{2|\mathcal{D}| + 2\log N}$ i.e. a canonical basis vector with a one in the position of the corresponding entity in the second half of the vector. Similarly to the worker agent construction, we define the embedding of EOS token as
\begin{align}
    E(\#) = \begin{bmatrix}
        \mathbf{0} & \mathbf{0} & \mathcal{T}(n) & \mathcal{T}(n+1)
    \end{bmatrix}
\end{align}
where $\mathcal{T}(1), \hdots, \mathcal{T}(n+1)$ are J-L vectors s.t. $\langle \mathcal{T}(i), \mathcal{T}(j) \rangle \leq 1/4$ for $i\neq j$ and $\langle \mathcal{T}(i), \mathcal{T}(j) \rangle \geq 3/4$ for $i = j$. The attention layer is defined in a similar manner to that of the one in the proof for Proposition~\ref{prop:optimality-prefix-sum} and uses two heads to retrieve both the $n$th and $n+1$th elements in the sequence using the positional encoding given through J-L vectors.
\\
\\
The MLP then computes the composition of $\qvec_n$ with $\evec_{i_y}$ and outputs the OHE vector of the resulting token. This can easily be done leveraging Lemma 5 of~\cite{liu2022transformers}.

\paragraph{Optimality}
The protocol described above has size $\Theta(wk)$, communication budget $\Theta(k)$, and computation depth $\Theta(k)$.

Optimality of the communication budget follows because composition of $k$ permutations over $\{1, \dots, 5\}$ has communication complexity $\Omega(k)$ in the model where one agent has the even positions and the other the odd positions \citep{tesson2002diamonds}.
Thus, communication budget is $\Omega(k)$ even when $w(N) \equiv 2$.
Extension to larger $w(N)$ follows by considering the case where all relevant facts happen to be distributed between two agents.

To prove that the depth is worst-case optimal up to a logarithmic factor, we consider the case where all relevant facts happen to be distributed between two agents. 
Hence, these two agents must jointly emit $\Omega(k)$ communication bits. Because an agent emits only $\mathcal{O}(\log |\Xi_{N+k}|) = \mathcal{O}(\log (N+k))$ bits at a step of time, the number of communication steps between these two agents (and hence the computation depth) must be lower-bounded by $\Omega(\frac{k}{\log (N+k)})$.
\end{proof}

\section{Parity Error Analysis}

\textbf{Definitions.} Fix integers $k\ge 2$ and $c\ge 1$, and set $N:=k^c$. The Parity function $P:\{0,1\}^m\to\{0,1\}$ is defined as  
\[
P(x_1,\dots,x_m)\;:=\;x_1\oplus\cdots\oplus x_m,
\]
where $\oplus$ denotes XOR or addition in $\mathbb{F}_2$.
Let $f:\{0,1\}^k\to\{0,1\}$ be a randomized algorithm which takes input $x$ of length exactly $k$ and outputs $P(x)$ with probability $1 - \epsilon$, and outputs $1- P(x)$ with probability $\epsilon$. In other words, it computes the Parity function with error $\epsilon$. The function $f(x)$ can also be written as,
\[
f(x)\;=\;P(x)\oplus e,
\]
where the \emph{node error} $e\in\{0,1\}$ equals $1$ with probability $\varepsilon$ and $0$ with probability $1-\varepsilon$.
Assume different calls of $f$ are independent (hence their error bits are i.i.d.\ $\mathrm{Bernoulli}(\varepsilon)$).

Given an input $z\in\{0,1\}^N$, define the composite algorithm $F$ where we first divide the input $z$ in $N/k$ chunks of length $k$, and apply $f$ on each of them to get $N/k$ outputs. We do this recursively to obtain the final output. The computation can be seen as a $k$-ary tree of depth $c$ with $N$ leaves (the bits of $z$); at each internal node, apply $f$ to the $k$ outputs of its children. The output $F(z)$ is the bit at the root. The tree has $M$ nodes where
\[
M\;=\;\sum_{j=0}^{c-1}k^j\;=\;\frac{N-1}{k-1}.
\]
Each internal node is computed by an independent call to $f$.

\begin{lemma}[Error of a parity tree with i.i.d. flips]
For every fixed $z\in\{0,1\}^N$, let error of $f$ be $\epsilon$, the probability that the algorithm $F$ makes an error is given by,
\[
\Pr\big[F(z)\neq P(z)\big]\;=\;\frac{1-(1-2\varepsilon)^M}{2},
\qquad M=\frac{N-1}{k-1}.
\]
\end{lemma}
Thus the probability of making a mistake using the Prefix sum protocol on a string of length $N$, assuming a branching factor of $k=2$ would be 
\[
\Pr\big[F(z)\neq P(z)\big]\;=\;\frac{1-(1-2\varepsilon)^{N-1}}{2}
\]
Note that as $N \to \infty$, we asymptotically reach a probability of 1/2, even for very small $\varepsilon$. However, for small $k$ values, the number of tokens processed by each each is also very small and thus the performance only degrades for very large $N$. This is the regime in which we operate in practice, thus corroborating this theoretical model to our experimental results.

\begin{proof}

We first show that the algorithm $F$ makes an error if and only if there are odd number of errors in the nodes of the computation tree.

For each node $u$, let $s(u)$ denote the true parity of the leaves in the subtree of $u$, and let $\widehat{s}(u)$ be the value computed by $F$ at $u$. Define the node error indicator $\delta(u)\;:=\;\widehat{s}(u)\oplus s(u)\in\{0,1\}$. For an internal node $v$ with children $u_1,\dots,u_k$, write $e_v$ for the (independent) error bit of the call to $f$ at $v$, so that
\[
\widehat{s}(v)\;=\;f\big(\widehat{s}(u_1),\dots,\widehat{s}(u_k)\big)
\;=\;P\big(\widehat{s}(u_1),\dots,\widehat{s}(u_k)\big)\oplus e_v.
\]

Using linearity of $\oplus$,
\[
\delta(v)
=\widehat{s}(v)\oplus s(v)
=\Big(\bigoplus_{i=1}^k \widehat{s}(u_i)\oplus e_v\Big)\oplus \Big(\bigoplus_{i=1}^k s(u_i)\Big)
= \Big(\bigoplus_{i=1}^k \delta(u_i)\Big)\oplus e_v.
\]
Since $\delta(\text{leaf})=0$, induction up the tree yields  $\delta(\mathrm{root})\;=\;\bigoplus_{v} e_v$.
Thus, the algorithm $F$ errs if and only if an odd number of node calls make an error.

 Let $S = \delta(\mathrm{root})= \bigoplus_{v} e_v\in\{0,1\}$ be the random variable which takes the value $1$ when $F$ makes an error and is $0$ otherwise. We introduce another random variable $Y_v = (-1)^{e_v} \in \{-1, +1\}$. And let $Y:=(-1)^S=\prod_{v} Y_v$ by rewriting Parity as a product, which is quite standard to make calculations convenint. 

Since the calls are independent, we have
\[
\mathbb{E}[Y]
=\prod_{v}\mathbb{E}[Y_v]
=\bigl((1-\varepsilon)\cdot 1+\varepsilon\cdot(-1)\bigr)^{M}
=(1-2\varepsilon)^{M}.
\]

 Since $Y=+1$ iff $S=0$ and $Y=-1$ iff $S=1$,
\[
\mathbb{E}[Y]=\Pr[S=0]-\Pr[S=1],\qquad \Pr[S=0]+\Pr[S=1]=1.
\]
Solving the two equations above gives
\[
\Pr[S=1]=\Pr[F(z)\neq P(z)]
=\frac{1-(1-2\varepsilon)^{M}}{2}.
\]
\end{proof}

\color{black}

\section{Experimental Details}
\label{app:experimental-details}

General details:
\begin{itemize}
    \item All experiments were run on TogetherAI API
    \item Models used: Llama-3.1-8B-Instruct-Turbo, Llama-3.3-70B-Instruct-Turbo, EXAONE-3.5-32B-Instruct
    \item All experiments are run 100 times with a seed set to 42 for consistency.
    \item Three multi-agent architectures tested: Majority Voting, Chain-of-Agents, and Prefix Sum
    \item For all experiments, examples are generated on the fly given the length and difficulty parameters specified in the script.
\end{itemize}
Throughout, we use 8 agents in the majority voting setup. We ablated over [2, 4, 8, 16] agents across all considered tasks and found that past 8 agents, there was no significant improvement in any of the tasks. The choices for task-specific hyperparameters are discussed in their respective subsections.

\subsection{Associative Recall Task}
Key-value strings are generated at random. A recall query is sampled uniformly from the keys. Models are prompted their roles (manager or worker) explaining what they must do and the communication is handled deterministically.

In experiments, we test sequence lengths (i.e., the number of key-value pairs) as powers of two ranging from $2^4$ to $2^{11}$.
We use 8 agents for Majority Voting. For Chain-of-Agents, we select the optimal chunk size—chosen from powers of two between 8 and 64—for each sequence length.
\begin{tcolorbox}[colback=orange!5!white,colframe=orange!75!black,title=Associative Recall Majority Vote Agent Prompt]
\small
You are a reasoning agent responsible for analyzing a portion of a document. Your task is to detect a specific value in a sequence of key-value pairs, given a corresponding key. Follow these steps:
\begin{enumerate}
    \item Identify if the key is present in the sequence of key-value pairs.
    \item If the key is present, return the value corresponding to the key.
    \item If the key is not present, return ``NOT\_FOUND".
    \item Present the final answer in the format "The answer is: [your answer]"
\end{enumerate}
    
You MUST use the following template. ONLY OUTPUT THE ANSWER. Here is an example for ``23 42 12 34 56 78 90 12 $\vert$ Query: 56'':
\begin{verbatim}
The answer is: 78
\end{verbatim}
\end{tcolorbox}

\begin{tcolorbox}[colback=orange!5!white,colframe=orange!75!black,title=Associative Recall Chain-of-Agents Worker Prompt]
\small
You are a reasoning agent responsible for analyzing a portion of a document. Your task is to detect a specific value in a sequence of key-value pairs, given a corresponding key. Follow these steps:
\begin{enumerate}
\item Identify if the key is present in the sequence of key-value pairs.
\item If the key is present, return the value corresponding to the key.
\item If the key is not present, return ``NOT\_FOUND".
\item Present the final answer in the format ``The answer is: [your answer]"
\end{enumerate}

You MUST use the following template. ONLY OUTPUT THE ANSWER. Here is an example for ``23 42 12 34 56 78 90 12 $\vert$ Query: 56'':
\begin{verbatim}
The answer is: 78
\end{verbatim}
\end{tcolorbox}

\begin{tcolorbox}[colback=orange!5!white,colframe=orange!75!black,title=Associative Recall Chain-of-Agents Manager Prompt]
\small
You are a manager agent responsible for synthesizing information from multiple workers. Your task is to combine their provided values and determine the value corresponding to the query. To compute the final value, follow these steps:
\begin{enumerate}
\item Collect the value results from all worker agents.
\item Each worker will return either a value or ``NOT\_FOUND''.
\item Exactly one worker will return the value corresponding to the query, the rest will return ``NOT\_FOUND''.
\item Report the value corresponding to the query as your output.
\item Present the final answer in the format ``The answer is: [your answer]"
\end{enumerate}

You MUST use the following template. ONLY OUTPUT THE ANSWER. Here is an example for ``NOT\_FOUND NOT\_FOUND 78 NOT\_FOUND":
\begin{verbatim}
The answer is: 78
\end{verbatim}
\end{tcolorbox}

\subsection{Parity Calculation Task}
String of bits of fixed length are sampled uniformly at random. Ground truth is computed with a function which evaluates parity. Models are prompted their roles (manager or worker) explaining what they must do and the communication is handled deterministically.

The experiments test sequence lengths as powers of 2, ranging from $2^4$ to $2^9$ (i.e., length 8) with support for index hints to aid model reasoning.
We use Majority Voting over a total of 8 agents.
For Chain-of-Agents, binary strings are split into chunks of size 8, while Prefix Sum uses a branching factor of 4 for hierarchical processing.
For Chain-of-Agents, the optimal chunk size was determined by ablating over the range [2, 4, 8, 16].
For Prefix Sum, the optimal branching factor was determined by ablating over the range [2, 4, 8].
The task evaluates models' ability to accurately count 1-bits and determine even/odd parity across different architectural approaches.

The data for the Pareto frontier plots (computation depth vs. total communication) was generated by ablating over the branching factor for the prefix sum protocol. The total communication (number of edges) is can be computed straightforwardly as the sum of all edges of the log-depth tree with the corresponding branching factor. Sequence lengths are taken to be powers of two. For each sequence length $2^n$, we ablate over powers of two from 2 to $2^{n-1}$.

\begin{tcolorbox}[colback=red!5!white,colframe=red!75!black,title=Parity Majority Vote Agent Prompt]
\small
You are a reasoning agent responsible for analyzing a portion of a document.
Your task is to provide an analysis of the binary string provided in your chunk and determine if it is even or odd parity. To compute the parity, follow these steps:
\begin{enumerate}
\item Count the number of 1's in the binary string.
\item If the count is even, return 0.
\item If the count is odd, return 1.
\item Present the final answer in the format "The answer is: [your answer]"
\end{enumerate}

You MUST use the following template. Here is an example for "1011":
\begin{verbatim}
1: 1 (count: 1)
2: 0 (count: 1)
3: 1 (count: 2)
4: 1 (count: 3)
Final count: 3
The answer is: 1
\end{verbatim}
\end{tcolorbox}

\begin{tcolorbox}[colback=red!5!white,colframe=red!75!black,title=Parity Prefix Sum Prompt]
\small
You are a manager agent responsible for synthesizing the results of previous workers.
Your task is to return the parity of the binary string provided by the worker agents. You may think step by step, but your final answer should be concise and clear.
To compute the parity, follow these steps:
\begin{enumerate}
\item Collect the results from the worker agents. This should be a list of binary digits (0 or 1).
\item If the parity of the list is even, return 0.
\item If the parity of the list is odd, return 1.
\item Present the final answer on a new line in the format "The answer is: [your answer]"
\end{enumerate}

IMPORTANT: Show your work step by step to demonstrate thorough analysis:
\begin{enumerate}
\item Go through each bit position and note its value
\item Keep a running count of 1s encountered
\item State the final count
\item Determine if the count is even or odd
\end{enumerate}

You MUST use the following template. Here is an example for "1011":
\begin{verbatim}
1: 1
0: 1
1: 2
1: 3
Final count: 3
The answer is: 1
\end{verbatim}
\end{tcolorbox}

\begin{tcolorbox}[colback=red!5!white,colframe=red!75!black,title=Parity Chain-of-Agents Worker Prompt]
\small
You are a worker agent responsible for analyzing a portion of a document.
Your task is to provide an analysis of the binary string provided in your chunk and determine if it is even or odd parity. To compute the parity, follow these steps:
\begin{enumerate}
\item Count the number of 1's in the binary string.
\item If the count is even, return 0.
\item If the count is odd, return 1.
\item Provide your result in a clear and concise manner.
\item Present the final answer in the format "The answer is: [your answer]"
\end{enumerate}

You MUST use the following template. Here is an example for "1011":
\begin{verbatim}
1: 1
0: 1
1: 2
1: 3
Final count: 3
The answer is: 1
\end{verbatim}
\end{tcolorbox}

\begin{tcolorbox}[colback=red!5!white,colframe=red!75!black,title=Parity Chain-of-Agents Manager Prompt]
\small
You are a manager agent responsible for synthesizing information from multiple workers.
Your task is to combine their provided parities and determine the overall parity of the binary string. To compute the aggregate parity, follow these steps:
\begin{enumerate}
\item Collect the parity results from all worker agents.
\item Each worker will return either 0 or 1.
\item Count the number of 1 responses.
\item If the count of 1 responses is even, the overall parity is 0.
\item If the count of 1 responses is odd, the overall parity is 1.
\item Present the final answer in the format "The answer is: [your answer]"
\end{enumerate}

You MUST use the following template. Here is an example for "1011":
\begin{verbatim}
1: 1
0: 1
1: 2
1: 3
Final count: 3
The answer is: 1
\end{verbatim}
\end{tcolorbox}

\subsection{$S_5$ Permutation Tracking Task}
We frame the $S_5$ permutations task as a word problem where each agent is given a prompt explaining there are 5 balls in 5 distinct bins and a sequence of swap commands such as ``swap ball 1 and 3, swap ball 2 and 4''. In this task the agents must return the correct value of the ball in each bin. The bin numbers are only given at the beginning of the task making this a \textit{hard} state tracking problem~\citep{merrill2024illusion}.

The experiments test permutations with varying numbers of swaps ranging from 4 to 12 swaps with a step size of 2. By default, the task is constrained to $A_5$ (even permutations only) by forcing an even number of swaps. For Chain-of-Agents, swap sequences are processed in chunks of 2 swaps per worker, while Prefix Sum uses a branching factor of 2 with each worker handling exactly one swap operation. 
For both Chain-of-Agents and Prefix Sum, we tuned on the range [2,4] for the chunk size/branching factor respectively. Given the small sequence lengths used in this task, larger chunk size/branching factor options were not feasible.
The task evaluates models' ability to maintain accurate state tracking through sequential ball position updates.

\begin{tcolorbox}[colback=green!5!white,colframe=green!75!black,title=Permutation Majority Vote Agent Prompt]
\small
You are a reasoning agent responsible for tracking ball positions through a sequence of swaps.

Your task is to determine the final position of each ball after performing all the given swap operations.

Initial state: Each ball starts in its corresponding bin (ball 1 in bin 1, ball 2 in bin 2, etc.).

To solve this problem:
\begin{enumerate}
\item First, identify which balls are mentioned in the swap operations - ONLY track these balls
\item Start with balls in their initial positions (e.g., if balls 1, 2, 3 are mentioned: \{1:1, 2:2, 3:3\})
\item For each swap operation "Swap ball X and ball Y":
   \begin{itemize}
   \item Find the current bins of ball X and ball Y
   \item Exchange their positions
   \end{itemize}
\item Continue until all swaps are processed
\item Present your final answer as a dictionary mapping ONLY the balls mentioned in swaps to their final positions
\end{enumerate}

IMPORTANT: Only include balls that appear in the swap operations. Do not add extra balls.

Present the final answer in the format "The answer is: \{ball1:bin1, ball2:bin2, ...\}" for only the balls involved in swaps.
\end{tcolorbox}

\begin{tcolorbox}[colback=green!5!white,colframe=green!75!black,title=Permutation Chain-of-Agents Worker Prompt]
\small
You are a worker agent responsible for processing a portion of swap operations in a larger sequence.

Your task is to carefully track ball positions through your assigned swap operations and report the precise current state.

You will receive:
\begin{itemize}
\item Current ball positions as a dictionary (e.g., \{1:2, 2:1, 3:3\})
\item A sequence of swap operations to process
\end{itemize}

Instructions:
\begin{enumerate}
\item Start with the EXACT positions given to you - this is the state after previous swaps
\item Process each swap operation "Swap ball X and ball Y" in order:
   \begin{itemize}
   \item Find the current bins of ball X and ball Y
   \item Exchange ONLY their positions
   \item Keep all other balls in their current positions
   \end{itemize}
\item Track each swap carefully - one mistake will affect the final result
\item Report the state after processing ALL your assigned swaps
\end{enumerate}

CRITICAL: Only include the balls that are present in the input positions. The exact same balls, no more, no less.

Present the final answer in the format "The answer is: \{ball1:bin1, ball2:bin2, ...\}" with the exact same ball numbers as your input.
\end{tcolorbox}

\begin{tcolorbox}[colback=green!5!white,colframe=green!75!black,title=Permutation Chain-of-Agents Manager Prompt]
\small
You are a manager agent responsible for determining the final ball positions from worker results.

Your task is to identify the final state of all balls after all swap operations have been processed by your workers.

You will receive position dictionaries from multiple workers who processed different portions of the swap sequence in order. The workers processed swaps sequentially, so:

Instructions:
\begin{enumerate}
\item The workers processed swaps in chronological order (worker 1 → worker 2 → worker 3, etc.)
\item Each worker started with the positions left by the previous worker
\item The LAST worker's result contains the final positions after all swaps
\item Simply report the last worker's position dictionary as the final answer
\end{enumerate}

CRITICAL: Take the position dictionary from the last (final) worker only. This represents the complete final state.

Present the final answer in the format "The answer is: \{ball1:bin1, ball2:bin2, ...\}" exactly as reported by the final worker.
\end{tcolorbox}

\begin{tcolorbox}[colback=green!5!white,colframe=green!75!black,title=Permutation Prefix Sum Worker Prompt]
\small
You are a worker agent in a hierarchical system processing ONE swap operation.

IMPORTANT: Initially, balls start in their corresponding bins (ball 1 in bin 1, ball 2 in bin 2, etc.). Through swaps, balls can move to different bins.

Your task is to apply exactly one swap operation and report the resulting ball positions with perfect accuracy.

You will receive:
\begin{itemize}
\item Current ball positions as a dictionary (e.g., \{1:3, 2:1, 3:2, 4:5, 5:4\})
\item ONE swap operation: "Swap ball X and ball Y"
\end{itemize}

Process the swap step-by-step using this exact reasoning template:
\begin{enumerate}
\item Current state: [copy the input dictionary]
\item Operation: [copy the swap operation]
\item Ball X is currently in bin: [identify bin number]
\item Ball Y is currently in bin: [identify bin number]
\item After swap: Ball X moves to bin [Y's old bin], Ball Y moves to bin [X's old bin]
\item Verification: Check that only these two balls changed positions, all others remain the same
\item Final state: [complete updated dictionary]
\end{enumerate}

CRITICAL CONCEPT: You are tracking which BALL is in which BIN.
\begin{itemize}
\item BALLS are the moving objects (numbered 1, 2, 3, 4, 5)
\item BINS are the fixed locations (numbered 1, 2, 3, 4, 5)
\item When you swap "ball X and ball Y", you move those balls to different bins
\item The bins stay in place - only the balls move between them
\end{itemize}

CRITICAL: Include ALL balls from input with exact same ball numbers. One swap affects exactly two positions.

Present the final answer in the format "The answer is: \{ball1:bin1, ball2:bin2, ...\}" with all balls from your input.
\end{tcolorbox}

\begin{tcolorbox}[colback=green!5!white,colframe=green!75!black,title=Permutation Prefix Sum Manager Prompt]
\small
You are a manager agent in a hierarchical ball-tracking system combining results from workers.

IMPORTANT: Initially, balls start in their corresponding bins (ball 1 in bin 1, ball 2 in bin 2, etc.). Through swaps, balls move to different bins.

Your task is to determine the final ball positions after your workers processed their assigned swaps in sequence.

You will receive position dictionaries from workers who processed swaps in chronological order. Each worker:
\begin{itemize}
\item Started with the ball positions left by the previous worker
\item Applied exactly one swap operation
\item Reported the updated positions
\end{itemize}

Your job requires careful validation and explicit reasoning:
\begin{enumerate}
\item Validate that each worker's result is a logical continuation of the previous worker's output
\item Show the complete sequence of states from initial to final
\item Verify that each step represents exactly one swap operation
\item Report the last worker's result as your final output
\end{enumerate}

Use this reasoning template:
\begin{enumerate}
\item Initial state: [first worker's input state]
\item After worker 1: [worker 1's result] - validate this is one swap from initial
\item After worker 2: [worker 2's result] - validate this is one swap from worker 1's result
\item Final result: [last worker's result]
\end{enumerate}

CRITICAL: Output exactly the position dictionary from the final (last) worker. This contains the cumulative effect of all swaps.

Present the final answer in the format "The answer is: \{ball1:bin1, ball2:bin2, ...\}" exactly as reported by the last worker.
\end{tcolorbox}

\subsection{$k$-Hop Reasoning Task}

We create a list of entities and a list of relations. We create the fact base by sampling at random two entities and a relation and then deterministically generating a string. This procedure is done as many times as the number of facts needed. Then relationships are sampled in order to form a valid query that has its answer in the fact base.

The experiments test the models' ability to follow multi-step reasoning chains with varying numbers of hops (relationships to traverse). We use 50 balanced single-token entity names (25 male, 25 female) and 20 diverse single-token relations (boss, instructor, teacher, etc.). For this task, we generate problems with $k$ hops where $k$ ranges from 4 to 20 hops with a step size of 2. The experiment was repeated with 100, 200 and 500 total fact count. For the IterativeQueryAgents approach, facts are divided into chunks of 20 facts per worker. Across all three regimes, we ablated over the range [10, 20, 50] and found 20 to yield the best accuracy.

\begin{tcolorbox}[colback=blue!5!white,colframe=blue!75!black,title=K-hop Majority Vote Agent Prompt]
\small
You are an expert at logical reasoning and following relationship chains.

Your task is to answer questions about relationships between people by following chains of connections through the given facts.

You will be given:
\begin{enumerate}
\item A set of facts describing relationships between people (e.g., "Alice's boss is Bob")
\item A query asking about a multi-step relationship chain
\end{enumerate}

Instructions:
\begin{itemize}
\item Read all the facts carefully
\item Follow the relationship chain step by step
\item Track each connection to find the final answer
\item Output your answer in the exact format: Answer: [PersonName]
\end{itemize}

Example:
Facts: "John's boss is Mary. Mary's supervisor is Tom."
Query: "Who is the supervisor of the boss of John?"
Reasoning: John's boss is Mary → Mary's supervisor is Tom
Answer: Tom

Be systematic and double-check your reasoning chain.
\end{tcolorbox}

\begin{tcolorbox}[colback=blue!5!white,colframe=blue!75!black,title=K-hop IterativeQuery Worker Agent Prompt]
\small
You are a helpful assistant that answers questions based ONLY on the given facts.

IMPORTANT: You have been given only a small subset of all available facts. It is very likely that the fact needed to answer the query is NOT in your subset.

You will be given:
\begin{enumerate}
\item A limited set of facts about relationships between people
\item A specific query about one relationship
\end{enumerate}

Instructions:
\begin{itemize}
\item ONLY look through the facts provided to you
\item If you find the EXACT fact needed to answer the query, extract the answer
\item If the exact fact is NOT in your subset (which is very common), respond with "Not Found"
\item DO NOT guess or infer answers from similar facts
\item DO NOT make assumptions about relationships not explicitly stated
\item DOUBLE-CHECK: Before giving your final answer, carefully re-read the facts to ensure you have the correct match
\item Always format your response as: Answer: [YourAnswer]
\end{itemize}

Example (found):
Facts: "John's boss is Mary. Alice's teacher is Bob."
Query: "Who is John's boss?"
Response: Answer: Mary

Example (not found - very common):
Facts: "John's boss is Mary. Alice's teacher is Bob."
Query: "Who is Sarah's mentor?"
Response: Answer: Not Found

Example (not found - don't guess):
Facts: "John's boss is Mary. Alice's teacher is Bob."
Query: "Who is Mary's supervisor?"  
Response: Answer: Not Found

CRITICAL: Before responding, double-check your work:
\begin{enumerate}
\item Re-read the query to understand exactly what is being asked
\item Scan through ALL the facts again to verify your answer or confirm it's not found
\item Make sure the relationship type matches exactly (e.g., "boss" vs "supervisor")
\item Only provide an answer if you are completely certain it appears in the facts
\end{enumerate}

Remember: Most queries will not have their answer in your subset of facts. Only answer if the exact fact is present and you have double-checked it.
\end{tcolorbox}

\begin{tcolorbox}[colback=blue!5!white,colframe=blue!75!black,title=K-hop IterativeQuery Manager Agent Prompt]
\small
You are a manager agent that coordinates multi-hop reasoning queries.

Your task is to take an answer from a previous query and generate the next query in the reasoning chain.

You will be given:
\begin{enumerate}
\item The original multi-hop question
\item The current intermediate answer
\item The current step number
\end{enumerate}

Instructions:
\begin{itemize}
\item Use the intermediate answer to construct the next query
\item Format your response as: Next Query: [YourQuery]
\end{itemize}

Example:
Original question: "Who is the supervisor of the boss of John?"
Current answer: "Mary" (John's boss)
Response: Next Query: Who is Mary's supervisor?
\end{tcolorbox}

\subsection{Needle-in-a-haystack}
For the needle-in-a-haystack test, we follow the implementation given by~\cite{LLMTest_NeedleInAHaystack}. We use as query: "What is the best thing to do in San Francisco?", with associated answer "The best thing to do in San Francisco is eat a sandwich and sit in Dolores Park on a sunny day."). The text in which the needle is embedded is the corpus of Paul Graham essays (~14k words). For context lengths longer than 14k words, we simply concatenate the corpus multiple times until we achieve the desired length. The original implementation does the same. We repeat each experiment 10 times and report average accuracy across 10 runs. For Majority Voting, we use 8 agents, for CoA, we use a chunk size of 2000. Agent count is chosen from the range [3,5,8] and chunk size was chosen between [200, 1000, 2000]. Given the prohibitive number of datapoints for the full heatmap, we tuned hyperparameters on a subset containing only the 4 last context lengths and depths from 0\% to 50\% as those seemed, in the literature, to be the most problematic datapoints. We use Meta-Llama-3.1-8B-Instruct-Turbo for all considered experiments.
\section{Additional Experiments}
\label{app:additional_exp}
In this section, we provide experiments from tasks and models that are not featured in the main paper.

\subsection{Recall}
Figure~\ref{fig:recall} shows accuracy for different sequence lengths. For shorter sequences (64–512), performance is similar, with Majority Voting sometimes outperforming CoA, but the multi-agent approach gains an edge as length increases. This trend is consistent with theoretical understanding. Recall is a task easily solved by Transformers, even with limited CoT~\citep{arora2023zoology, bhattamishra2024separations}. Thus, at shorter sequence lengths, the communication overhead may be detrimental by e.g., leading to hallucinations in models that do not have the key-value pair in their context.

\begin{figure}[H]
    \centering
    \includegraphics[width=0.4\textwidth]{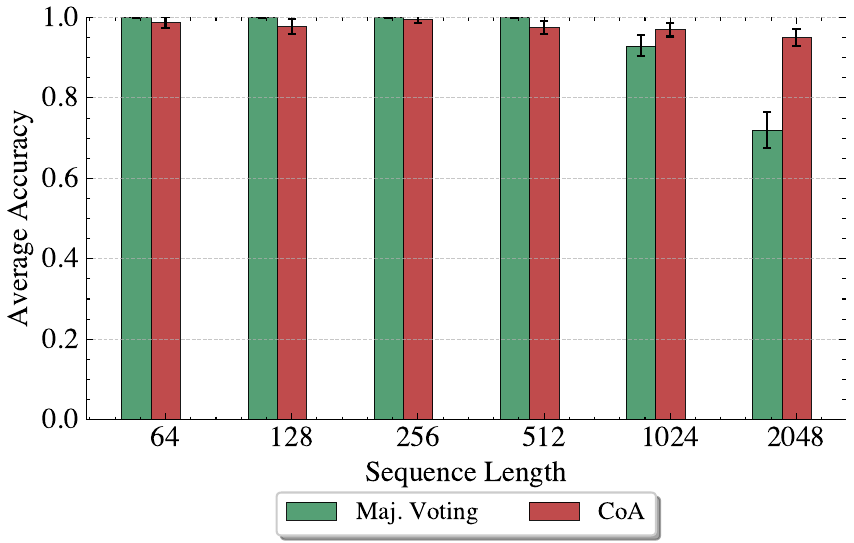}
    \caption{Llama-70B accuracy on \textsc{Recall} across sequence lengths. CoA is the theoretically optimal protocol.}
    \label{fig:recall}
\end{figure}

\subsection{Parity}
In this section, we provide \textsc{Parity} results similar to those in the main text, but with llama-8B as the base model for agents. The plots here show trends similar to those in the main text. We note that llama-8B has poorer accuracy on the task; this is due to the deviating from the prompted guidelines and making mistakes (i.e. flipping bits) more frequently 
\begin{figure}[H]
    \centering    \includegraphics[width=0.5\linewidth]{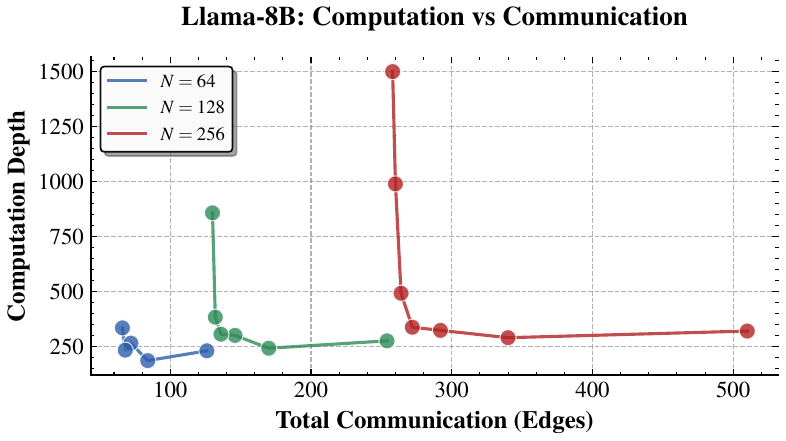}
    \caption{Communication vs Computation tradeoff for Llama-8B showing the relationship between communication budget and computation depth across different multi-agent protocols for the parity task.}
    \label{fig:comm-vs-comp-llama8b}
\end{figure}
\begin{figure}[H]
    \centering
    \includegraphics[width=0.5\linewidth]{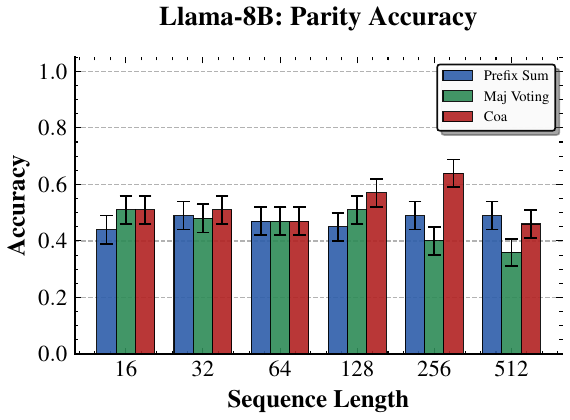}
    \caption{Parity calculation accuracy for Llama-8B across different sequence lengths, comparing single-agent vs multi-agent performance.}
    \label{fig:parity-accuracy-llama8b}
\end{figure}

\subsection{$S_5$ Permutations}
Figures~\ref{fig:permutations-8b-comparison} and~\ref{fig:permutations-70b-comparison} provide detailed comparisons between Llama-8B and Llama-70B models across all three multi-agent approaches.

\begin{figure}[h]
\centering
\subfigure[Exact match accuracy for Llama-8B on the $S_5$ permutations task. Prefix Sum represents the theoretically optimal communication protocol, Majority Voting is self-consistency with majority voting decision, and CoA is Chain-of-agents protocol. Performance degrades as the number of swaps increases, with Prefix Sum maintaining superior performance.]{
  \label{fig:permutations-8b-em}
  \includegraphics[width=0.40\linewidth]{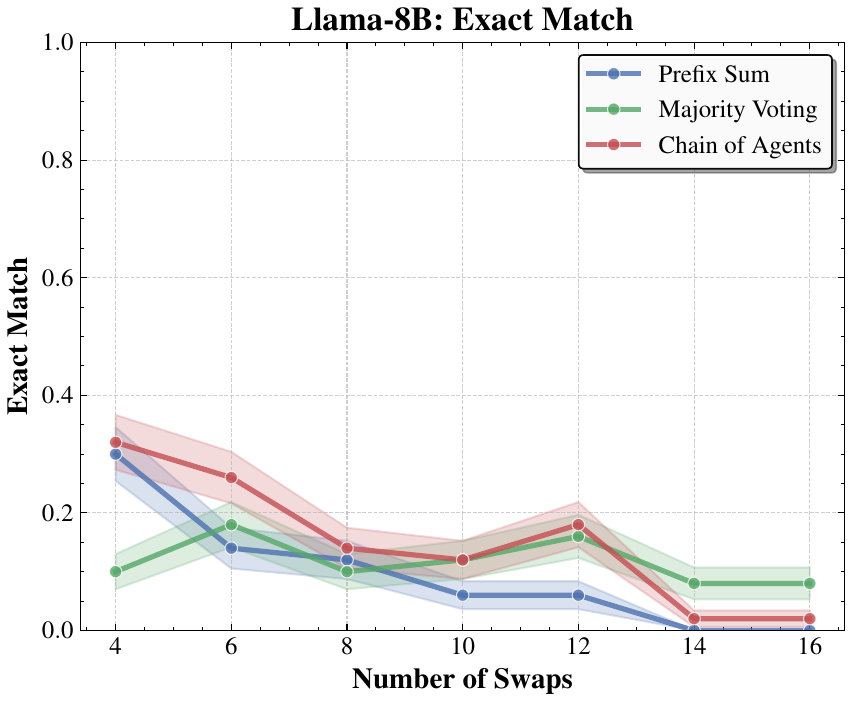}
}
\hspace{5mm}
\subfigure[Per-element accuracy for Llama-8B on the $S_5$ permutations task. Element accuracy measures the fraction of correctly placed elements in the permutation. Shaded regions represent standard error bounds across multiple runs.]{
  \label{fig:permutations-8b-acc}
  \includegraphics[width=0.40\linewidth]{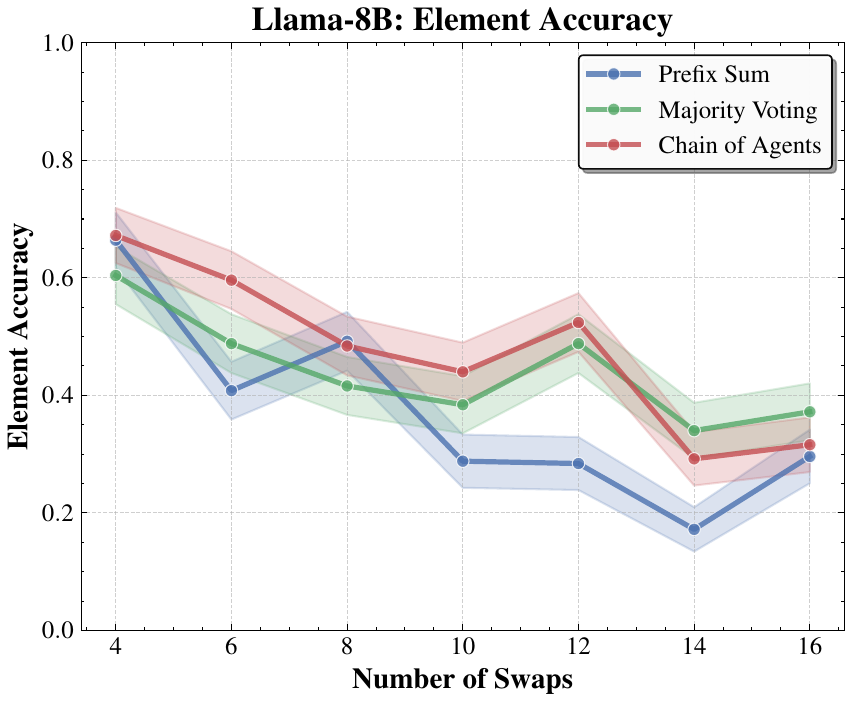}
}
\caption{Performance comparison of multi-agent approaches on the $S_5$ permutations task using Llama-8B.}
\label{fig:permutations-8b-comparison}
\end{figure}

\begin{figure}[h]
\centering
\subfigure[Exact match accuracy for Llama-70B on the $S_5$ permutations task. The larger model demonstrates consistently improved performance across all agent types compared to Llama-8B, with the performance gap being most pronounced for the Chain of Agents approach.]{
  \label{fig:permutations-70b-em}
  \includegraphics[width=0.40\linewidth]{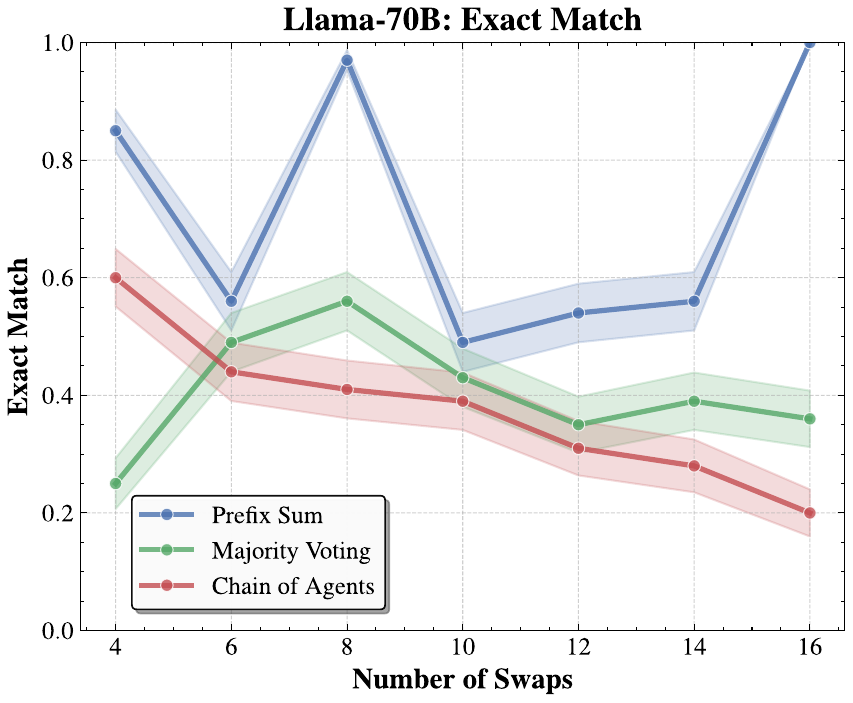}
}
\hspace{5mm}
\subfigure[Per-element accuracy for Llama-70B on the $S_5$ permutations task. The improved reasoning capabilities of the larger model help mitigate the composition complexity challenges inherent to multi-agent coordination.]{
  \label{fig:permutations-70b-acc}
  \includegraphics[width=0.40\linewidth]{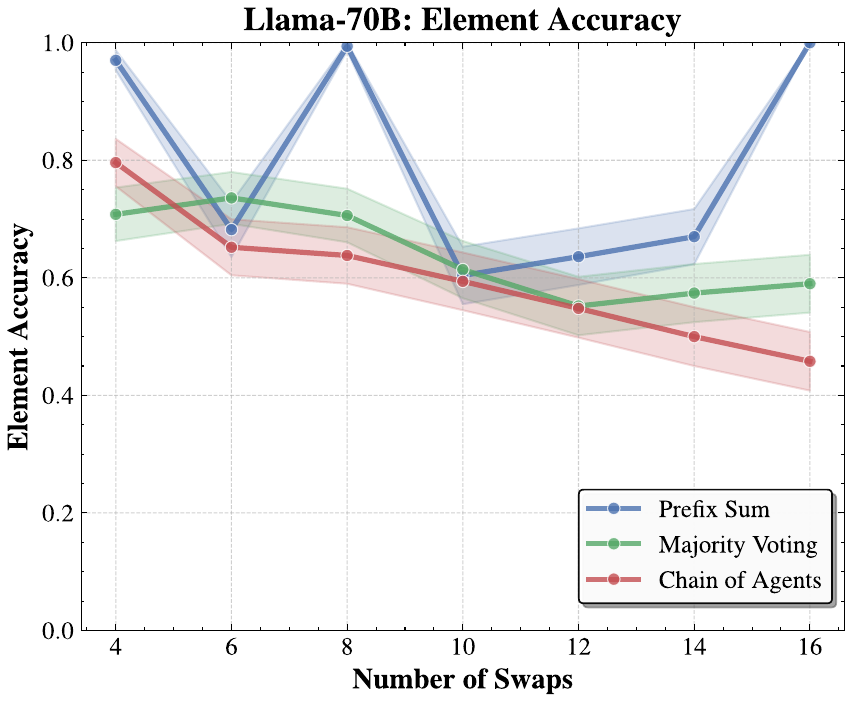}
}
\caption{Performance comparison of multi-agent approaches on the $S_5$ permutations task using Llama-70B.}
\label{fig:permutations-70b-comparison}
\end{figure}

\FloatBarrier
\subsection{$k$-Hop Reasoning}

\begin{figure}[h]
    \centering
    \includegraphics[width=0.5\linewidth]{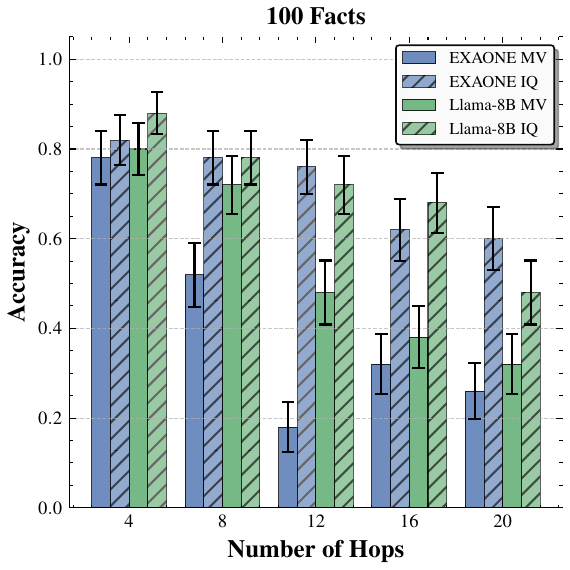}
    \caption{Comparison of multi-agent approaches for $k$-hop reasoning with 100 facts. Shows performance across different hop lengths for Llama-8B and EXAONE models.}
    \label{fig:khop-100facts}
\end{figure}

\begin{figure}[h]
    \centering
    \includegraphics[width=0.5\linewidth]{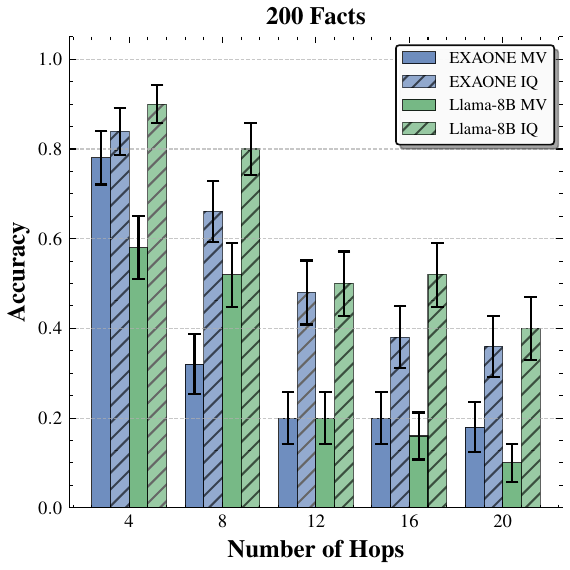}
    \caption{Comparison of multi-agent approaches for $k$-hop reasoning with 200 facts. Shows performance across different hop lengths for Llama-8B and EXAONE models.}
    \label{fig:khop-200facts}
\end{figure}

\begin{figure}[h]
    \centering
    \includegraphics[width=0.5\linewidth]{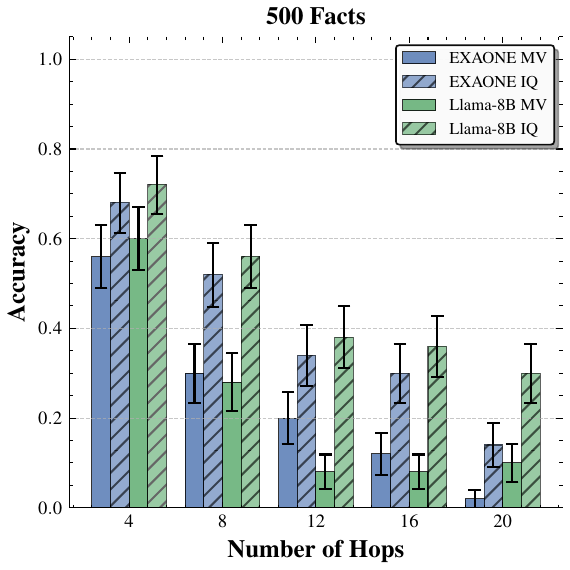}
    \caption{Comparison of multi-agent approaches for $k$-hop reasoning with 500 facts. Shows performance across different hop lengths for Llama-8B and EXAONE models.}
    \label{fig:khop-500facts}
\end{figure}

\begin{figure}[h]
    \centering
    \includegraphics[width=0.5\linewidth]{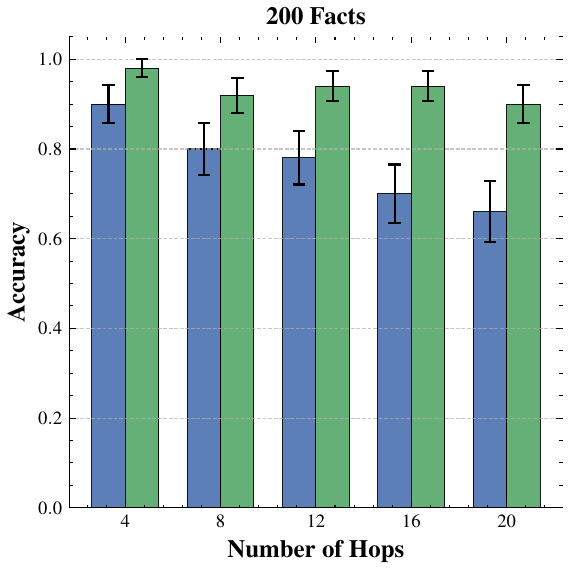}
    \caption{Llama-70B $k$-hop reasoning accuracy with 200 facts.}
    \label{fig:khop-200facts-llama70B}
\end{figure}

\begin{figure}
    \centering
    \includegraphics[width=0.5\linewidth]{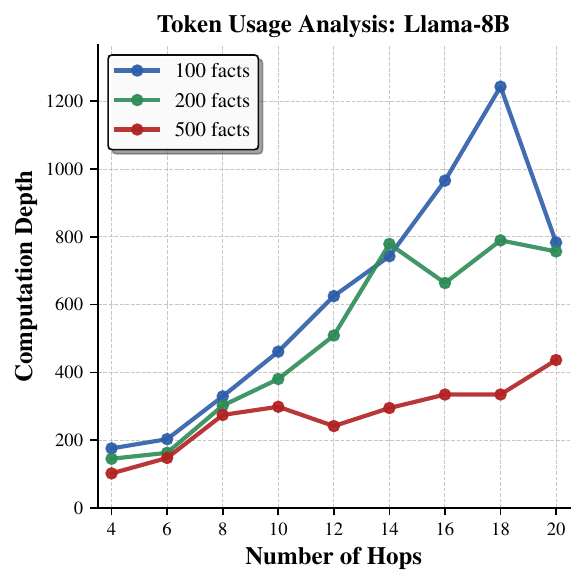}
    \caption{Computation depth vs. number of hops in the query for Lama-8B.}
    \label{fig:placeholder}
\end{figure}

\section{Beyond Single-Token Communication}\label{app:beyond-single-tokens}

Definition~\ref{def:mas} assumes that agents can only send and receive a single token at each time step.
This assumption is for mathematical convenience, and not fundamental to our results.
Indeed, agents in real-world multi-agent systems may exchange messages spanning multiple tokens.
Here, we explain how our analysis carries over to the setup where agents communicate multiple tokens in a message.

\paragraph{Formalization in the case of multi-token messages}
Assume an agent $T_i$ aims to send a multi-token message $\sigma_1 \dots \sigma_k$ (each token in $\Xi_{|x|}$) to agent $T_j$
The simplest way of formalizing multi-token messages is to simply view these as multiple single-token messages: $T_i$ iteratively communicates $\sigma_1, \dots, \sigma_k$ over a total of $k$ time steps. We believe this to be natural and sensible approach, as transformer-based agents can naturally produce and consume one token per time step.

If a clear delineation of which sequences of single message tokens count as a multi-token message is desired, then a natural  formalization is for the sending agent to output  ``START\_MESSAGE'' and ``END\_MESSAGE'' tokens before and after the message tokens.

\paragraph{Implications for Results}
Here, we discuss the implications for our theoretical results when messages span multiple tokens.
By the above discussion, we can equivalently translate a protocol using multi-token messages into a protocol with single-token messages, without change to the size, width, computation depth, or number of communication tokens.
Thus, bounds on size, width, and computation depth remain entirely unaffected by generalization to multi-token messages.
The \emph{communication budget}, i.e., the \emph{number of message tokens}, also remains unaffected. As a consequence, all of our theoretical results remain unaffected even if the messages span multiple tokens.

\section{Comparison to self-consistency and voting approaches}\label{sec:comparison-to-voting}

As stated in the Introduction, our results concern the expressivity of multi-agent systems where each agent has access to a disjoint part of the input.
Thus, the popular strategy of giving the same input to multiple agents and performing majority voting on their answers \citep[e.g.][]{wang2022self} is not in scope.
Here, we show that multi-agent systems in our framework can achieve substantially better tradeoffs than majority voting-based approaches.
Our argument is conceptually related to that in \cite{mirtaheri2025let}, who however only compared voting (without any CoT) to long CoTs (in a single agent, without any communication), and relied on unproven conjectures about $\operatorname{TC}^0$.  We instead compare majority voting and a more sophisticated multi-agent system (PrefixSum, Section~\ref{subsec:state_tracking}), when both are allowed the same computation depth.

We consider state tracking in the simple case of \textsc{Parity}.
We contrast PrefixSum to majority voting between agents, and in both cases allow a depth (number of sequentially produced tokens for each agent) of $\mathcal{O}(\log N)$.
We first give a general definition of majority voting in our framework.
To make our result as strong as possible, and cover even voting among inhomogenous agents, we do not assume that the agents share parameters, but even allow voting among agents given by different models.
\begin{definition}[Majority Voting]

Consider $w(N)$ distinct agents, each given by UHAT transformers $T_1, \dots, T_{w(N)}$.
For full generality, we allow each agent to have their own parameters.
We only assume that the number of layers and heads is uniformly bounded across agents.
We run each agent on the full input, producing a CoT of length (computation depth) $depth(N)$, ending in a single-token response.
We then output the most frequent response (with an arbitrary choice in the case of ties).

\end{definition}

Despite the generality of this formalization, we now show a \emph{super-polynomial} separation between majority voting and PrefixSum, by giving a lower bound on the number of agents needed to compute \textsc{Parity} in a majority voting scheme. We consider the setup where the multi-agent system aims to compute \textsc{Parity} state tracking exactly. This is consistent with our overall framework, where we expect that multi-agent systems perform tasks perfectly. Here, we show:

\begin{proposition}\label{prop:majority-voting-parity}
    Consider a majority voting scheme among $w(N)$ agents computing \textsc{Parity} over length-$N$ inputs, with computation depth $depth(N) = \mathcal{O}(\log N)$.

    This scheme must have $w(N) = 2^{\Omega(N^{c})}$ agents, for some $c>0$ depending on the number of heads and layers in the transformers representing the agents.
\end{proposition}
The key takeaway here is the near-exponential lower bound on the number of agents,  $w(N) = 2^{\Omega(N^{c})}$.
In contrast, PrefixSum attains perfect accuracy with computation depth $\mathcal{O}(\log N)$ and $w(N) = N$, as we showed in Prop.~\ref{prop:prefix-sum}. This thus leads to a super-polynomial separation, between a linear number of agents in PrefixSum, and a near-exponential number of agents in majority voting.

We note that the above result concerns majority voting schemes that  exactly compute \textsc{Parity} with zero errors, based on results on voting polynomials by \cite{aspnes1991expressive}. We conjecture that these techniques of can also be generalized to provide lower bounds on the number of agents needed for any voting scheme computing \textsc{Parity} with good accuracy, though such a generalization is beyond our scope here. Importantly, given that PrefixSum solves \textsc{Parity} deterministically, we believe that a bound on error-free computation is most relevant here.

\begin{proof}
Formally, we have $w(N)$ UHAT agents who all produce a CoT of length (communication depth) $depth(N) = \mathcal{O}(\log N)$.
For a given string over the CoT alphabet, of length $depth(N)$, we can check in $AC^0$ that it is produced by a UHAT transformer. The number of layers of this circuit is bounded in terms of number of layers and heads of the transformer; the size is polynomial in $N$.

The number of such strings is $\mathcal{O}(poly(N))^{\mathcal{O}(\log N)} = \mathcal{O}(N^{C \log N})$.
We now replicate this circuit once for every string, for each string checking if it is produced by the transformer. Note that the transformer produces exactly one of these. The replicate that detects that the string matches the CoT produced by the transformer signals this and outputs the result (0 or 1); all other replicates output 0.
A single unbounded-fanin OR gate then extracts the answer from this one one replicate.
We thus have obtained an AND-OR circuit computing the output of a single agent, with size $\mathcal{O}(N^{C \log N})$ for some $C \geq 1$, and with  $d$ layers, for $d$ bounded across $N$.

In order to perform majority voting, we now put together all agents with a majority gate at the root.
This is a circuit of size $s(N) = \mathcal{O}(w(N) \cdot N^{C \log N})$, with AND-OR gates and a single majority gate at the root providing the final output.
Assume that this circuit correctly computes \textsc{Parity}.
By Lemma 5.4 in \cite{aspnes1991expressive}, we have
\begin{equation}
    s(N) = 2^{\Omega(N^{1/(4d)})}
\end{equation}
We thus have
\begin{equation}
    \log w(N) + \log N^{C \log N} = \Omega(N^{1/(4d)})
\end{equation}
or
\begin{equation}
    \log w(N) + C \cdot (\log N)^2 = \Omega(N^{1/(4d)})
\end{equation}
which means
\begin{equation}
    \log w(N)  = \Omega(N^{1/(4d)} - (\log N)^2)
\end{equation}
and hence
\begin{equation}
    w(N) = 2^{\Omega(N^{1/(4d)})}
\end{equation}
This proves the result.

\end{proof}

\section{Implications for Fixed Precision Transformers}\label{app:fixed-precision}

Our theoretical results are shown for UHAT, a popular theoretical model using hard attention.
A reviewer asks about the implications of our results to soft attention transformers with finite precision and fixed width.
The expressive power of causal transformers with soft attention, fixed precision, and fixed width was studied by \citet{li2025characterizing}, who showed that their expressiveness corresponds to a subclass of the two-variable fragment of first-order logic over words, $\operatorname{PFO}_2[<]$. This in turn can be simulated in UHAT, as shown in \citet{jerad2025unique}. While they study the case of transformers without positional encoding (NoPE),  expanding their results to absolute position encodings (APE) is straightforward:
\begin{proposition}
    Let $T$ be a soft-attention transformer with finite precision and finite width, and absolute positional encodings $p_1, p_2, \dots \in \mathbb{R}^d$ with bounded norm, $\sup_i \|p_i\|_2 < \infty$.
    Then $T$ can be simulated by a UHAT transformer.
\end{proposition}
\begin{proof}
    The assumptions imply that the vectors $p_1, p_2, \dots$
 traverse a finite set $\mathcal{A}$, which we can bijectively map to the set $\{1, \dots, |\mathcal{A}|\}$. Now to simulate the operation of $T$ on a string, we can view $T$ is a NoPE transformer operating over a string where each position $i$ is annotated with the index corresponding to $p_i \in \mathcal{A}$. As this NoPE transformer can be translated to UHAT, we can then augment the resulting UHAT transformer with positional encodings to simulate the operations of the original APE transformer $T$.
\end{proof}
As a consequence, all lower bounds on UHAT-based multi-agent systems in this paper transfer to such finite-precision softmax transformers. On the other hand, our experimental results in Section~\ref{sec:experiments} show that protocols attaining asymptotically optimal tradeoffs can indeed be implemented using real transformers.

\end{document}